\documentclass[a4paper,11pt]{report}
\usepackage{amsthm,amssymb,amsmath,mathrsfs}
\usepackage[top=35mm, bottom=25mm, left=35mm, right=25mm]{geometry}
\setcounter{secnumdepth}{3}
\usepackage{graphicx}
\usepackage[all]{xy}
\usepackage{fancybox}
\usepackage{fancyhdr}
\usepackage{tocbibind}
\usepackage[onehalfspacing]{setspace}
\usepackage{amssymb}
\usepackage{supertabular}
\usepackage{dsfont}
\usepackage{tikz}
\usepackage{amsthm}                

\usepackage{indentfirst}

\usepackage{makeidx}
\makeindex
\usepackage{color}
\usepackage{booktabs}
\usepackage{bm}
\usepackage{float,lscape}
\usepackage{etoolbox}

\theoremstyle{plain}
\newtheorem{thm}{Theorem}[section]
\newtheorem{cor}[thm]{Corollary}
\newtheorem{lem}[thm]{Lemma}
\newtheorem{prop}[thm]{Proposition}
\newtheorem{example}[thm]{Example}
\theoremstyle{definition}
\newtheorem{ch}{Church-Turing thesis}[section]
\newtheorem{defn}{Definition}[section]
\theoremstyle{remark}
\newtheorem{rem}{Remark}[section]
\numberwithin{equation}{section}



\newcommand{\z}{\mathbb{Z}}
\newcommand{\com}{\mathbb{C}}
\newcommand{\q}{\mathbb{Q}}
\newcommand{\rr}{\mathbb{R}}
\newcommand{\n}{\mathbb{N}}
\newcommand{\h}{\mathbb{H}}

\newcommand{\bi}{\mathbf{i}}
\newcommand{\0}{\mathbf{0}}
\newcommand{\bj}{\mathbf{j}}
\newcommand{\bo}{\mathbf{\Omega}}

\newcommand{\ion}{\textbf{i}^{(n)}\in\Omega_2^{(n)}}
\newcommand{\p}{\mathcal{P}_{\ii}}
\newcommand{\ii}{\textbf{i}^{(n)}}

\newcommand{\jj}{\textbf{j}^{(n)}}
\newcommand{\cc}{\mathds{C}}
\newcommand{\zz}{\mathds{Z}}

\newcommand{\A}{\mathcal{A}}

\newcommand{\one}{\mathds{1}}
\newcommand{\ket}[1]{\vert#1\rangle}

\newcommand{\oh}{\overline{H}}
\newcommand{\lh}{\underline{H}}

    \setcounter{topnumber}{2}
    \setcounter{bottomnumber}{2}
    \setcounter{totalnumber}{4}     
    \setcounter{dbltopnumber}{2}    

\cfoot{\thepage}

\markboth{\chaptername}{\thesection}

\lhead{\leftmark} \rhead{}

\begin{document}
\pagenumbering{gobble}
\thispagestyle{plain}
\begin{center}
\vspace*{-2.5cm}\includegraphics{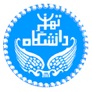} \\
\Large{University of Tehran} \\[0.03cm]
\Large{School of Mathematics, Statistics and Computer Science}
\vskip 2.5cm
{\LARGE\textbf {  Gacs Algorithmic Complexity on Hilbert Spaces \\ and\\ Some of its Applications }}
\vskip 1cm
{\Large By}\\
{\Large \textbf{Samad Khabbazi Oskouei }}\\
\vskip 0.5cm
{\Large Under}\\
\vskip 0.5cm
{\Large Supervisions of}\\
{\Large\textbf{Ahmad Shafiei Deh Abad}}\\
\vskip 0.5cm
and\\
{\Large\textbf{ Fabio Benatti}}\\
\vskip 2cm
{\Large A thesis submitted to the \\School of Mathematics, Statistics and Computer Science\\
In partial fulfillment of the requirements for\\
The degree of Doctor of Philosophy in}\\
\vskip 0.02cm
{\Large \textbf{Applied Mathematics}}\\
\vskip 1.2cm
{\Large \textbf{April 2015}}\\
\end{center}

\begin{center}
\includegraphics[width=16 cm]{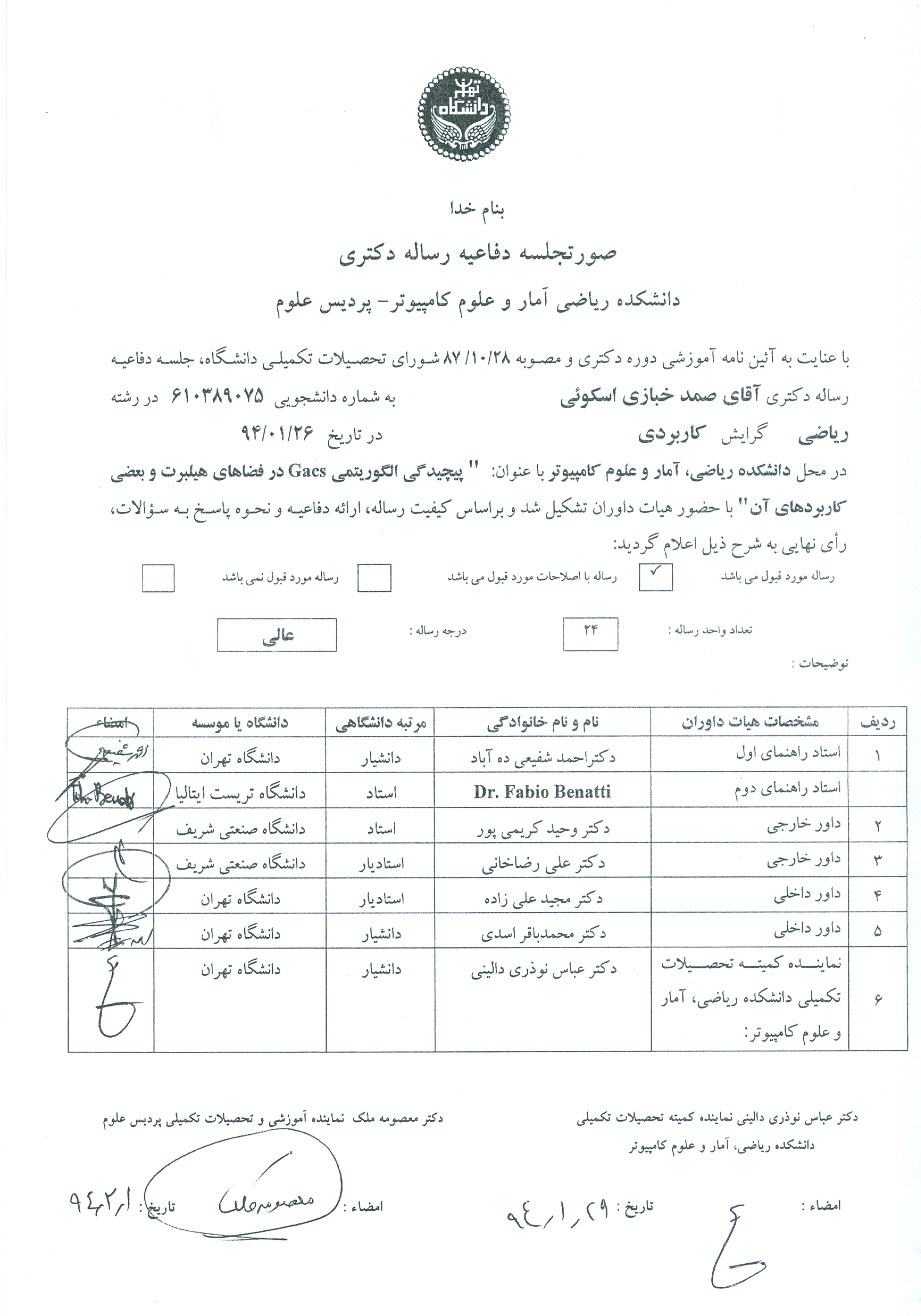} 
\end{center}
\newpage
\setcounter{tocdepth}{3}
\baselineskip=1 cm
\textbf{\large{Abstract}}
\smallskip

We extend the notion of Gacs quantum algorithmic
entropy, originally formulated for finitely many qubits,  to infinite dimensional quantum spin chains and investigate the
relation of this extension with two quantum dynamical entropies
that have been proposed in recent years. Further, we  prove an extension of Brudno's theorem in quantum spin chains with shift dynamics.

\smallskip


\clearpage
\baselineskip=1cm
\textbf{\large{Acknowledgements}}

\smallskip
I would like to express my special appreciation and thanks to my advisors Professor Ahmad Shafiei Deh Abad and Professor Fabio Benatti, you have been a tremendous mentor for me. I would like to thank you for encouraging my research and for allowing me to grow as a research scientist. Your advice on both research as well as on my career have been priceless. I would also like to thank my committee members, professor Vahid Karimipour, Dr. Ali Rezakhani, Dr. Majid Alizade and Dr. Mohammad Bager Asadi for serving as my committee members even at hardship. I also want to thank you for letting my defense be an enjoyable moment, and for your brilliant comments and suggestions, thanks to you.

A special thanks to my family. Words cannot express how grateful I am to  my mother for all of the sacrifices that you’ve made on my behalf. Your prayer for me was what sustained me thus far.  I would like also express appreciation to my beloved wife Elham who spent sleepless nights with and was always my support in the moments when there was no one to answer my queries.

At the end, I would like the support  this thesis by the STEP programme of the Abdus Salam ICTP of Trieste. 
\smallskip
\pagenumbering{roman}
\tableofcontents
\cleardoublepage
\doublespacing
\cleardoublepage
\addcontentsline{toc}{section}{Introduction} 

\baselineskip=1cm\textbf{\large{Introduction}}
\goodbreak
\paragraph{Computability theory:}
Algorithms and computational techniques started to be studied at
least since  the  Babylonians and later by Euclid (c. 330 B.C).
But, only around the 20ies, mathematicians could
successfully formalize these concepts which then applied to
modern computers. The subject ripened with Godel's  Incompleteness Theorem 1931 which solved with NO the first
Hilbert  problem:
\begin{center}
  \emph{Can a formal system prove its own consistency?}
\end{center}
 In addition,  the result opened new ways to solve the second
Hilbert problem, namely the decision problem. Indeed, Godel  invented the theory of primitive recursive functions and later  extended it to general recursive functions in 1934.
On the other hand, formal definitions of computability  were given in the
mid-1930's by Kleene and  Turing. However, a major breakthrough in computability theory occurred in 1936 with Turing's work on recursion theory \cite{turing}. Since then, computability had  not only a fundamental role in computer
science but  also  many applications to logic, algebra,
analysis, and   , nowadays, it plays a role in as different fields as   physics and economy.

\paragraph{Kolmogorov complexity theorey:}
it was born as an attempt to answering questions of the following kind
\begin{center}
 \emph{ When  can mathematical objects
such as finite or infinite binary strings be termed random? }
\end{center}
and
\begin{center}
\emph{Given two binary
strings, how can one decide which one of them is more random than another?}
\end{center}
  Clearly, the issue at stake here is how  to measure the randomness of  strings? The measures which are used in computability
theory and algorithmic information theory should  explore the
relationships among three fundamental aspects: 1) relative
computability, as measured by notions such as Turing reducibility;
2) information content, as measured by notions such as Kolmogorov
complexity; 3) randomness of individual objects, as first
successfully defined by Martin-L\"{o}f \cite{martinlof} (but
prefigured by others, dating back at least to the work of von Mises
\cite{vonmises}). In this thesis, we will focus on the second aspect: informational string content and descriptional complexity..
Let us consider two sequences such as
$$101010101010101010101010101010101010 \ldots $$
$$101101011101010111100001010100010111 \ldots $$
 Most people  may  agree that, intuitively, the second binary string is more random than the first one.
But, from a mathematical point of view, how can we give solid ground to such an intuition?
Or why should some sequences count count as �random� and others as
�regular�, and how can we translate our intuitions about these
concepts into meaningful mathematical notions?
Algorithmic complexity, or descriptive complexity, or, in the
following, Kolmogorov complexity, developed by Solomonoff
\cite{Solomonoff}, Kolmogorov \cite{Kolmogorov} and Chaitin
\cite{Chaitin}, has been an important tool in many different fields
\cite{Downey,Nei,Vitany,Calude} where it shed light on subtle
concepts such as information content, randomness, inductive
inference and also had applications in thermodynamics. In a
nutshell, the complexity of a target object is measured by the
difficulty to describe it; in the case of targets describable by
binary strings, they are algorithmically complex when their shortest
binary descriptions are essentially of the same length in terms of
necessary bits, the descriptions being binary programs such that any
universal Turing machine that runs them outputs the target string.

\paragraph{Ergodic theory:}
Ergodic theory  goes back to  Boltzmann and Gibbs. It provides a successful mathematical framework for the description of dynamical
systems. It gives a probabilistic approach to dynamics  that is
useful to investigate statical properties of the  evolution of a
mechanical system over long  time scales. Further, ergodic theory  \cite{khinchin} explains why in
thermodynamical systems mean values of observables  coincide   with
time averages  and why trajectories in ergodic
systems  fill the phase-space  densely.
The KS-entropy, introduced by Kolmogorov and developed by
 Sinai \cite{billingsley}, defined on classical dynamical systems has  provided a link among
different fields of mathematics and physics. In fact, in the light
of the first theorem of Shannon \cite{cover}, the KS entropy gives the maximal
compression rate of the information emitted by ergodic information
sources. A theorem of Pesin \cite{mane} relates it to the positive Lyapounov
exponents and thus to the exponential amplification of initial small
errors, in other words, to classical chaos. Finally, a theorem of Brudno \cite{Brudno}
links the KS entropy to the compressibility of classical
trajectories by means of computer programs, namely to their
Kolmogorov complexity. In fact,  Brudno's theorem establishes relations among all the above mentioned issues.

\paragraph{}
Since Quantum Mechanics teaches us that the basic structure of the world is non-commutative and because of the fast development of quantum computation and information theory, it has become important to extend such classical notions to quantum dynamical systems.

\paragraph{Quantum mechanics:}
Quantum mechanics developed by Heisenberg, Schrodinger,
Dirac and others in the 20ies  describes the behavior of elementary, particles, atoms and molecules.  In quantum mechanics, dynamical systems are
described by  a  Hilbert space, whose vectors provide their physical states, and Hamiltonian self-adjoint operators that generate their dynamics.
By  the Stone$-$von Neumann uniqueness theorem \cite{beratteli}, this
Hilbert space description is suitable for the systems with finite
number of freedoms. Since with infinitely many degrees  of freedom, for a example one dimensional lattice  $\zz$ the sites carrying each $d$-level spins,  we do not have  this properties then considering the
$C^*$-algebra of observables  is more convenient.

\paragraph{Quantum dynamical entropies:}
Since there
 are different approaches to the information
content in quantum systems ,there are as well several extensions of the
KS-entropy to quantum dynamical systems. One  aim of these
extensions is to classify quantum dynamical systems as  done in classical dynamical systems by the KS-entropy. Recently,
they have been used in quantum information theory in relations to the capacity of quantum channels and quantum algorithmic complexity. Two quantum dynamical
entropies, one proposed by Connes, Narnhofer and Thirring  (CNT entropy) \cite{CNT}
and the other one introduced by Alicki and Fannes  (AF entropy) \cite{Alicki},   have been more used than the others. The
two entropies are defined differently from each other, and they may
exhibit a different behavior on a same quantum dynamical systems.

\paragraph{Quantum Turing machines}
A fundamental goal in computer technology is to construct computer devices with
high speed and  low prices, what has implied a steady increase in miniaturization. In view of this fact, information processing under quantum mechanical rules is becoming a concrete and substantial issue \cite{braunstein}. The first suggestion of quantum computers
was given by  Feynman who predicted that quantum computers
 might provide   more efficient computing devices  than classical (probabilistic) computers. Once these advantages have been demonstrated by the first quantum algorithms, quantum computation and quantum information theory started blossoming \cite{ gruska, kaye_Laflamme_Mosca, ohya}.

In view of the importance of classical Turing machines for the development of classical computability theory, it soon became important to extend these notions to the quantum realm:  quantum Turing machines (QTM) and universal quantum
Turing machines (UQTM) were thus introduced in \cite{e_bernstein_vazirani}.

The subsequent question was how to reformulate  the notion of algorithmic complexity in a way that it could be used for quantum systems, too. Several proposals have been put forward that reflect different points of view. However, all of them have the same basic
intuitive idea that complexity should characterize properties of
systems that are difficult to describe. They can roughly be
summarized as follows:
\begin{enumerate}
    \item Qbit quantum complexity: one may decide to describe quantum states by means of other quantum states that are processed by UQTMs \cite{Berthiaume}: the corresponding complexity will be denoted by  $\text{QC}_\text{q}$.
    \item  Bit quantum complexity: one may decide to define
    the complexity of quantum states using classical
         \cite{Vitanyi} programs run by UQTMs which is denoted by
         $QC_c$.
    \item  Quantum circuit complexity: one may choose to relate the complexity of a quantum state to the
         complexity of the (classical) description of the quantum circuits
         that operatively construct the state \cite{Mora, Mora2}.
          The
         corresponding complexity will be denoted by $\text{QC}_\text{net}$.
    \item Gacs complexity: one may  extend the notion of universal probability and define
     a quantum universal semi-density matrix \cite{Gacsl, FSA} and then, mimicking the classical approach, introduce a quantum complexity as minus its logarithm.
This thesis is based on exactly this latter train of ideas
        \end{enumerate}

\paragraph{Thesis subject:}
The recent developments in quantum mechanics that, together with the
birth of the so-called quantum computation theory, have also led to
the development of a broad quantum information theory, have spurred
the attempt to extend the concept of algorithmic complexity to the
quantum realm.
As we have seen, there exist different proposals of quantum
algorithmic complexity that, while agreeing on quantum states as
description targets, differ on how their description should be
achieved.

In all cases a useful guide to sort out the various quantum
extensions of algorithmic complexity is provided by the relations
between the classical algorithmic complexity and the Shannon
entropy. Even when not pretending to exactly reproducing them in a
non-commutative context, it is nevertheless important to clarify the
connections, if any, between the quantum algorithmic complexities
and the von Neumann entropy or related concepts. In particular, in
the classical setting a theorem of Brudno \cite{Brudno} states that
almost every trajectory of an ergodic classical system has an
algorithmic complexity rate which equals the Shannon entropy rate, the
latter being also known as Kolmogorov-Sinai, or dynamical entropy.
Inequivalent quantum extensions of the Kolmogorov-Sinai entropy have
also been proposed \cite{CNT, Alicki, Sl, Do}.

In \cite{Benatti}, a relation was established between the quantum
algorithmic complexity  as formulated in \cite{Gacs}, that we shall
refer to as Gacs complexity (entropy) in the following, and the quantum
dynamical entropies of the shift automorphism on quantum spin chains
as formulated by Connes, Narnhofer and Thirring (CNT-entropy)
\cite{CNT} and by Alicki and Fannes (AF-entropy) \cite{Alicki}. A
quantum spin chain is a one-dimensional lattice with $d$-level
quantum systems at each site and the lattice translations or
shift-automorphisms are the simplest possible dynamics. For ergodic translation invariant states $\omega$ on quantum
spin chains, the CNT-entropy equals the von Neumann entropy density
$s(\omega)$, while the AF-entropy equals $s(\omega)+\log d$.

 In \cite{Benatti}, the extra term $\log d$ is given an
informational interpretation in terms of the Gacs complexity per
spin in the Alicki-Fannes construction. There, the limit rate is
obtained starting from increasingly large, but finite-dimensional
sub-chains and using the formulation in \cite{Gacs} that concerns
arbitrary, but finite number of spins. As a consequence of the
construction of the complexity rate from below, that is from finite
dimensional sub-algebras to the infinite dimensional spin-chain, a
constraint had to be imposed in \cite{Benatti} on the growth of the
classical complexity of finite-size density matrices; namely, that
it be slower than the size of the sub-chain. Instead, in this thesis,
we construct a Gacs complexity quantity starting directly from the
infinite dimensional quantum spin chain. The resulting complexity is equivalent to the finite
dimensional one when restricted to finite portions of the chain, but
allows us to remove the unnecessary limitation mentioned above. As a
result, we report an instance of quantum spin-chain with finite
Alicki-Fannes entropy equalling the Gacs complexity rate, while
finite-size density matrices have Kolmogorov complexities diverging
faster than $n$.

Further, an extension of Brudno's theorem using the Gacs complexities are mentioned in this thesis. One way to extend it is to reformulate the lower Gacs complexity in classical dynamical systems and then  reformulate Brudno's theorem using it in quantum spin chains. Another way that we extend it is by the help of a generalization of the classical  Shannon-MacMillan theorem, or qauntum Shannon-MacMillan theorem, \cite{Bjelakovictheshannon-mcmillan} in ergodic quantum spin chains with shift dynamics. The two proposals  are mentioned in this thesis, the first one is just formulated in the classical case, where it reduces to a short proof of the Brudno's theorem.     While  the proof of Brudno's theorem in quantum spin chain results from the second one. Namely, the rate of lower Gacs complexity of minimal projections which are dominated by a sequence of projections  with high probability is equal to the von-Neumann entropy rate of the state.

\paragraph{}
The organization of the thesis is as follows:

\textbf{Chapter 1:}
We shortly introduce  computability theory based on a specific programming language $\cite{Da}$. We also briefly describe   Turing machines   and  define the Kolmogorov complexity  with an attached thermodynamical interpretation.

\textbf{Chapter 2:}
We explain how classical and quantum dynamical systems can be given a unifying algebraic description as commutative and non-commutative $C^*$-algebras, respectively. Then, we introduce the two quantum dynamical entropies, CNT and AF, which are extensions of the classical KS-entropy, their   relations and properties with applications to quantum spin chains. Most of the material in this chapter is taken from \cite{Benatti}.

\textbf{Chapter 3:}
We first extend the concepts of semi-computable semi-density matrices, Gacs entropy to  infinite dimensional separable Hilbert space, and  the apply them to quantum spin chains.

\textbf{Chapter 4:}
We introduce  classical version of Gacs complexity; then, using the semi-computability concept, we will give a short proof a restricted version of Brudno's theorem.

\textbf{Chapter 5:}
This final chapter is entirely devoted to the extension of Brudno's theorem to the case of the shift dynamics on quantum spin chains.


\pagenumbering{arabic}

\chapter{Programs and Computable Functions }
\goodbreak
 In this chapter, we first introduce computability theory which plays an important role in   computer science: it will be done  by means of a specific programming language. So, we introduce  necessary concepts and tools
such as computable functions and partial computable. Further, we review the notion of  Turing machine which is the simplest mathematical model of computing device.

 Complexity in  computer science is usually either
computational or descriptional, where the first one refers to  the
number of needed computational steps in a given program and
the second one measures the amount of information  in a program.
In the following, we shall concentrate on the latter case. Finally, we will consider a thermodynamical application of Kolmogorov complexity to an oversimplified model of computing device which shows the relations between data compression and energy cost.
\goodbreak

\section{A Programming Language}
We are going to   introduce  computability theory  based on a specific
programming language $P$.

 This consists of the letters:
$$X_1, X_2, \ldots X_n,$$
which will be called input variables with values in $\n\cup\{0\}$.
The output variable will be denoted by  $Y$. In most
programs, we also need  local variables  $Z_1, Z_2, \ldots, Z_k$. Moreover, $P$
contains the following instructions.

\begin{enumerate}
    \item $V\rightarrow V+1$: Increase by 1 the value of the variable $V$.
    \item $ V\rightarrow V-1$: If the value of $V$ is zero leave it unchanged; otherwise decrease it by 1.
    \item IF $V\neq 0$ GOTO L: If the value of V is nonzero, perform the instruction with label L;  otherwise proceed to the next instruction in the list.
\end{enumerate}
The  labels
\begin{equation}\label{label}
A_1, B_1, C_1, E_1, A_2, B_2, C_2, E_2, A_3, \ldots,
\end{equation}
are used to indicate a specific instruction of a program,  a program $P$ being a finite list of above instructions.

A
program  can halt in ways: in the first one, there are no more instructions after the last one in the list which constitutes the program. In the
second case,  an  instruction labeled $L$ is to be executed, while,
there is no instruction with that label $L$ in the program; we
usually denote the label $L$ with the letter $E$.

\begin{example}
\label{pro} The following program computes the function $f( x, y ) =
x + y$.
\begin{align*}
&  & & Y\rightarrow X_1& \\
& &  &Z_1\rightarrow X_2&\\
&\text{[ B ]}&&\text{If}\,\, Z_1 \neq 0\,\,\text{GOTO}\,\,
A&\\
&&&\text{GOTO E}&\\
&\text{[ A ]}&& Z_1 \rightarrow Z_1 - 1&\\
  &  &   &    Y\rightarrow Y + 1&\\
&&&\text{GOTO B}&
\end{align*}
where GOTO  $E$  is an abbreviation for
\begin{align*}
        &Z_2 \rightarrow Z_2+1&\\
        &\text{IF}\,\, Z_2 \neq 0 \,\,\text{GOTO E}.
\end{align*}
Moreover, since there is no label $E$, the command $GOTO$ $E$ forces
the program to halt. Of course, the symbols $X_1, X_2$ denote input
variables, $Z_1$ a local variable, $Y$ the output variable, while $A,
B, E, L$ are labels.
\end{example}

We will show that programs can be assigned natural numbers in a specific way called Godel numbering. Namely, we will show that there exists a one-to-one correspondence  between
$\n$ and the set of all programs in a programming language $P$. The corresponding number of each  program $p$ is denoted by
$\#(p)$. In such a way, the program can be retrieved from its
number:

Let  the variables be listed as follows:
$$Y, X_1, Z_1, X_2, Z_2, X_3, Z_3, \ldots \,,   $$
and the labels be listed as in Example \ref{label}.
The number assigned to  a given variable is  its position
number in the above list.
 For example:
$\#(Z_1)=3$. The number assigned to a given label  is also  its position number.
Now, let $I$ be an instruction of the  program
$p$. Let's define for any $x, y\in\z$, $<x,y>=2^x(2y+1)-1$. Then,
the number assigned to $I$ is defined by
$$\#(I)=<a, <b, c>>,$$
where
\begin{itemize}
    \item if $I$ is labeled $L$ then $a=\#(L)$; otherwise $0$.
    \item if the variable $V$ is used in $I$, then $c=\#(V)-1$.
    \item if $I$ contains one of the following  statements
    $$v\rightarrow v,\quad v\rightarrow v+1,\quad v\rightarrow v-1,$$
    then $b=0$, $1$ or $2$, respectively.
    \item if the statement   $$\text{If}\,\, V \neq
    0\,\,\text{GOTO}\,\, L,$$ is used in $I$
    then $b=\#(L)+2$.
\end{itemize}
The number of the program $p$ consisting of the instructions $I_1, I_2,
\ldots I_k$ is defined by
$$\#(p)=[\#(I_1), \#(I_2), \ldots , \#(I_k)]:=2^{\#(I_1)}\cdot3^{\#(I_2)}\cdot \ldots\cdot p_k^{\#(I_k)}-1,$$
where $p_k$ is the $k$-th prime number. A program with the number $n$
will be denoted by $p_n$.
\begin{example}
\begin{eqnarray*}
[A] &&  X\rightarrow X+1\\
 &&\text{IF}\, X\neq 0\,\, \text{GOTO}\,\,A
\end{eqnarray*}
 The program contains two instructions, which will be called  $I_1$ and $I_2$,
respectively. Instruction $I_1$ is labeled by $A$ thus $a=\#(A)=1$,
$b=1$ and $c=\#(X)-1=1$; therefore, $\#(I_1)=21$. Since $I_2$ is
unlabeled,
$$\#(I_2)=<0,<3,1>>=46.$$
Finally, $\#(p)=[21, 46]=2^{21}\cdot3^{46}-1$.
\end{example}
\begin{defn}
A function $f:\n^n\to\n$ is called partially computable if there
exists a program $p$, which for each  $(x_1, x_2, \ldots,
x_n)\in\n^n$, halts on input $(X_1=x_1, X_2=x_2, \ldots, X_n=x_n)$
if and only if $f(x_1, x_2, \ldots, x_n)$ is defined and  its output
$Y$ is equal to $f(x_1, x_2, \ldots, x_n)$.
\end{defn}
 By the Godel numbering, the set of all
programs of a programming language is enumerated and hence the set of all
partially computable functions is also enumerated.

The function
$f:\n^n\to\n$ computed by a program $p_k$ is denoted by $\phi_k$.
Since one program may halt or not  on an input value,
 partially computable functions may be not defined  on
certain values.
\begin{defn}
 A program $p$ is
called computable if it halts on each input value.
\end{defn}
 It is important
that there is no enumeration for computable functions. Indeed, the
set $\{n\,:\,\, \phi_n \,\,\text{is a computable function} \}$ is not computable.

There are different mathematical models for computability theory. One of them is the so-called  Turing machine ~\cite{turing}. These models
 are all equivalent. Indeed, each program in any programming
language can be simulated in other programming languages. More
precisely,

\begin{ch}
A function $f:\n\to\n$ is
effectively computable if and only if  $f$ is partial recursive if and
only if it is Turing computable,
where effectively computable means that for a given function there exists a brief way or an algorithm  to compute it for input numbers.
\end{ch}

 Different models  reinforce our intuition regarding what is computable. Alan Turing's in 1936   introduced
a mathematical model of a computing device  that mechanically works on a tape
which is specially used to  operate as a CPU
inside a computer. More precisely,

 A Turing
machine $T$ consists of a  two-sided infinite tape, subdivided into square cells, and a reading/writing head. To describe a Turing program for  $T$, one needs programming symbols as follows:
\begin{enumerate}
    \item Only one of the tape symbols $S_0, S_1, \ldots, S_n$ can be written on each tape cell. We usually assume that $S_0=0$  called "blank", and $S_1=1$. The set of tape symbols $S_1, S_2, \ldots, S_n$ is called an alphabet set, where only a finite number of them is allowed to be written on the tape while the remaining cells are "blank". In the following the alphabet consists of  "$1$".
    \item  $T$  consists of a finite list of internal states $q_0, q_1, \ldots, q_s
    $.  These states specify the state of the reading head before any given program step.
    \item  The action symbols are  $L$ ( move left one cell ), $R$ (move  right one
    cell), $1$ (print $1$) and $0$ (erase the current cell) which are used by a program to tell the reading head what to do in relation to its current cell.
\end{enumerate}
A program for $T$ is a finite list of instructions, called
quadruples, $q_i S A q_j$. The meaning of this symbol is as follows: "if $T$ is in  state
$q_i$ reading tape symbol $S$, then perform action $A$ and pass into
new internal state $q_j$". To input $n\in\n$ on the tape, we
write $n+1$  $1$'s on the tape and set the reading head  in
starting state $q_0$ reading the leftmost $1$. If there is no
applicable quadruple in $T$ then $T$ halts and the output of the
program is the remaining number of $1$'s on the tape.

A function is
called Turing computable if there exists a Turing program that computes it. Now using Godel numbering, let $\phi_0, \phi_1, \phi_2,
\ldots$ be the enumeration for partially computable functions from $\n$
into $\n$.
\begin{defn}
The partially computable function $f(x,y)=\phi_x(y)$ from
$\n\times\n$ into $\n$  is called universal: it  which simulates any partially computable function $\phi_n$ from $\n$
to $\n$ for a given number $n\in\n$.
\end{defn}

In the real world,  a standard (classical) computer or quantum computer  cannot execute  a
program for infinite long time. For this reason, we introduce  the following useful notion of  decidable
predicate:
\begin{defn}
The predicate $STP^{(n)}(x_1, x_2, \ldots, x_n, y, t)$ is defined as follows:
\begin{eqnarray*}
STP^{(n)}(x_1, x_2, \ldots, x_n, y, t)
 &\Leftrightarrow&
\text{Program number}\,\, y\,\, \text{halts after}\,\, t \,\,\
\text{or fewer steps }
\\&&
\text{on inputs}\,\, x_1, x_2, \ldots, x_n
\\&\Leftrightarrow&
\text{There is a computation of program number}\,\, y\,\, \text{of length}
\\&&
 \leq t+1, \text{beginning with inputs}\,\, x_1, x_2, \ldots, x_n
\end{eqnarray*}
where   $x_1, x_2, \ldots,
x_n$ are input variables of the program.
\end{defn}

 The set $\bigcup_{n\geq 0}\{0,1\}^n$ of all binary strings of finite
length will be denoted by $\{0,1\}^*$ or $\Omega_2$. The map
$str= \{0, 1  \}^*\to \n$ where
$str(a_0 a_1 \ldots a_n)=2^{n+1}-1+ \sum_{k=0}^n a_k 2^{k} $, $a_i=0, 1,$ for each $0\leq i\leq n$,  defines a
one-to-one correspondence between $\{0,1\}^*$ and $\n$.
\begin{defn}
A function
$f:\{0,1\}^* \to \{0,1\}^*$ is called (partially computable)
computable if the function $\textbf{x}\circ f \circ
\textbf{x}^{-1}:\n\to\n$ is (partially computable) computable.
\end{defn}

Let $x$ and $y$ be two elements of $\{0,1\}^*$. we say that $x$ is a
prefix of $y$ if there exists an element $z\in\{0,1\}^*$  such that
$xz=y$ where  $xz$ means a concatenation of $x$ and $z$.
\begin{defn}
A
subset $S\subseteq\{0,1\}^*$ is called prefix-free if no element of
$S$ is a prefix of another elements.
\end{defn}
\begin{defn}
 A partially computable function
is called prefix-free if its domain is a prefix-free subset of
$\{0,1\}^*$.
\end{defn}
 It has been shown \cite{Downey} that there exist
prefix-free universal functions capable  to simulate  all other
prefix-free functions.
\begin{example}
 $$A^*= \{ \underbrace{11\cdots 1}_{n\,\, \text{times}} 0 i_1 i_2 \cdots i_n| \,\, i_1 i_2 \cdots i_n\in\Omega_2 \},$$  is clearly a prefix-free set.  The function $f:\Omega_2 \to \Omega_2$, $f(x)=x$, if $x\in\A^*$, otherwise undefined,  is a prefix-free function with domain $A^*$.
\end{example}
\section{Kolmogorov Complexity and Semi-Computable Functions}
Algorithmic complexity theory was developed by Kolmogorov, Solomonoff
and Chaitin in order to measure  the information content of a binary string. It is based on the fact that regular strings,
such as  a piece of text, have short descriptions. Consider for
example the two strings
$$s:=1111111111111111,$$
$$t:=1001101111000010.$$
One way to  describe the string $s$ which is a repetition of the bit
$1$, is \textbf{print 1 n times}. But, there is no pattern  underlying the  string $t$. Therefore,  the length of a program that describes it is longer  than the number of its bits, and the  length of the
description of the string $s$ is clearly shorter than the length of the description of the string $t$.

We are going to define  the Kolmogorov complexity.

One attributes to
a binary string $\bi^{(n)}=i_1i_2\cdots i_n\in\{0,1\}^n$ of length
$n$ a complexity $C(\bi^{(n)})$ measured by the length of any
shortest program $p^*$ (another binary string of length $\ell(p^*)$)
in the domain of a binary universal partially computable function or
equivalently, the shortest program for a  Universal Turing Machine (UTM)
$\mathcal{U}$, with output $\bi^{(n)}$,
$\mathcal{U}[p^*]=\bi^{(n)}$,
\begin{equation}
\label{algcomp} C(\bi^{(n)})=\min\Big\{\ell(p)\ :\
\mathcal{U}[p]=\bi^{(n)}\Big\}\ .
\end{equation}

The Kolmogorov complexity can be defined based on the prefix-free universal Turing machines. Let a $\mathcal{U}$ be prefix-free universal Turing
Machine (UTM) $\mathcal{U}$, then
\begin{equation}
\label{algcomp} K(\bi^{(n)})=\min\Big\{\ell(p)\ :\
\mathcal{U}[p]=\bi^{(n)}\Big\}\ .
\end{equation}
 The prefix property means that if $\mathcal{U}$ halts on a program
$p$ it does not continue to read on if another program $q$ is
appended to $p$; in other words, no halting program can be used as
prefix to a halting program.
\begin{rem}
The main properties of a prefix-free set is the Kraft inequality, the important inequality in coding theory, with many relevant consequences.
Furthermore, by relations (\ref{kolpre}) and (\ref{re:kol}), the rate of the prefix-free Kolmogorov complexity is equal to the rate
of the Kolmogorov complexity. Therefore, we will consider prefix-free Kolmogorov complexity in this thesis.
\end{rem}
\smallbreak
\textbf{Properties of the classical Kolmogorov complexity}
\begin{enumerate}
      \item   If $\mathcal{U}$ is a universal computer(prefix-free computer) then for every computer(prefix-free computer $Q$) $P$ there exists  constants $c_p>0$ ($k_q>0$) such that for all  string $\bi\in \Omega_2$,
          \begin{equation}
          C_\mathcal{U}(\bi)\leq C_p(\bi)+c_p,
          \end{equation}
          and
          \begin{equation}
          K_\mathcal{U}(\bi)\leq K_q(\bi)+k_q,
          \end{equation}
          where the constants $c_p$ and $k_q$ do not depend on $\bi$.
      \item   The number of all strings $\bi$ with complexity $C(\bi)<c$
         satisfies the following inequality
        \begin{equation}\label{re:compres}
        \#\{\bi| \bi\in\Omega: C_\mathcal{U}(\bi)<c\}<2^c.
        \end{equation}
         Thus,  there are no more than $2^c$ string $\bi$ with complexity $C(\bi)<c$.
      \item   Universal probability of a binary string $\bi$  is
         defined by
          \begin{equation}
          P_\mathcal{U}(\bi)=\sum_{(p:\mathcal{U}(p)=\bi)} 2^{-l(p)},
           \end{equation}
          where $\mathcal{U}$ is a universal prefix-free Turing machine. It is shown that for every
          program $P$ there exists a constant numbers $c_P>0$  such that for all string $\bi\in\Omega$
          \begin{equation}
           P_U(\bi)\leq c_p \cdot P_p (\bi),
          \end{equation}
          where the constant $c_p$ does not depend on the string $\bi$. In
          addition, it is proved that for a constant
           $c>0$,
          \begin{equation}
           K(\bi)\overset{+}{=}- \log{P_U (\bi)},
          \end{equation}
          where $c$ dose not depend on $\bi$. The symbol $ \overset{+}{=}$ means that there exist constants $c_1>0$ and $c_2>0 $ such that
          $$ K(\bi)\leq - \log{P_U (\bi)} +c_1,\quad  - \log{P_U (\bi)} \leq K(\bi)+c_2,$$
          for each any binary string $\bi$.
         \item
           For any string $\ii$,
         \begin{equation} \label{kolpre}
         C(\ii)\leq K(\ii),
         \end{equation}
         and if $p$ is a program such that $C(\ii)=\ell(p)$, then it
         follows that \cite{Vitany}:
         \begin{equation}\label{re:kol}
       K(\ii)\leq C(\ii)+2 \log{\ell(p)} + c_p\leq C(\ii)+2 \log{n} +
       c_p.
       \end{equation}
       \item   Unfortunately, algorithmic complexity or Kolmogorov complexity is not computable; therefore, there is no effective way to compute
        $P_U$. But, it  can be approximated within arbitrary precision.
\end{enumerate}
Let $h:\n\times \n\to \rr$ be a function. Then, for each $n$,
$h_n:\n\to\rr$ is defined as follows $h_n(x)=h(n,x)$.
\begin{defn}
 A function $g:\n\to\rr$ is called lower semi-computable if there exists a
computable function $f:\n\times\n\to\q$ such that the sequence $f_n$
is an increasing sequence and $\lim_{n\rightarrow\infty}f_n=g$.
\end{defn}
\begin{defn}
A function $\mu:\n\to\rr$ is called a (semi-computable) semi-measure
if it is a positive semi-computable function such that
$\Sigma_x\mu(x)\leq 1$.
\end{defn}
\begin{defn}
A function $h:\n\to\rr$ is called upper semi-computable if $-h$ is
lower semi-computable and it will be called computable if it is
lower and upper semi-computable.
\end{defn}
\begin{defn}
A semi-computable semi-measure $\mu$ is called universal if for any
semi-computable semi-measure $\nu$ there exists a constant $c_\nu>0$
such that for each $x\in\n$, $c_\nu\nu(x)\leq\mu(x)$.
\end{defn}
The existence
of a universal semi-measure is proved by Levin
\cite{Zvonkin70thecomplexity}:
\begin{thm}\label{thm:levin}
There is a semi-computable semi-measure $\mu$ with the property that
for any other semi-computable semi-measure $\nu$ there is a constant
$c_\nu>0$ such that for all $x\in\n$ we have $c_\nu \nu(x)\leq
\mu(x)$.
\end{thm}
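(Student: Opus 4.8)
The plan is to carry out Levin's original construction: build an effective enumeration of \emph{all} lower semi-computable semi-measures, and then mix them with summable weights. First I would recall the equivalent description of lower semi-computability: $g:\n\to\rr$ is lower semi-computable iff the set $\{(x,q)\in\n\times\q:\ q<g(x)\}$ is recursively enumerable, iff $g=\lim_n f_n$ for a computable nondecreasing sequence $f:\n\times\n\to\q$ as in the definition above. Using the Godel numbering of Section~1.1, enumerate all partial computable $\phi_e:\n\times\n\to\q$. From each index $e$ I manufacture a lower semi-computable function $g_e$ as follows: at stage $s$, for each $x\le s$ let $m$ be the largest integer with $\phi_e(0,x),\ldots,\phi_e(m,x)$ all halting within $s$ steps and nondecreasing, and set $g_e^{(s)}(x)=\phi_e(m,x)$ (and $0$ if even $\phi_e(0,x)$ has not yet halted). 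Then $g_e^{(s)}(x)$ is nondecreasing in $s$, its limit $g_e$ is lower semi-computable uniformly in $e$, and the list $g_0,g_1,\ldots$ contains \emph{every} lower semi-computable function (if $g=\lim_n f_n$ with $f$ total computable, take $e$ with $\phi_e=f$).

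The key step is to trim each $g_e$ to a semi-measure without destroying lower semi-computability and without altering those $g_e$ that are already semi-measures. I would run a stagewise budgeted approximation: having the rational lower approximation $g_e^{(s)}$ (nondecreasing in $s$), define $\mu_e^{(s)}(x)$ for $x\le s$ by sweeping $x=0,1,2,\ldots$ in order and granting to $x$ as much of its current demand $g_e^{(s)}(x)$ as the leftover mass $1-\sum_{y<x}\mu_e^{(s)}(y)$ permits, while never letting $\mu_e^{(s)}(x)$ fall below its value at stage $s-1$. The limit $\mu_e=\lim_s\mu_e^{(s)}$ is then a lower semi-computable semi-measure, uniformly in $e$; and if $\sum_x g_e(x)\le 1$ the budget constraint is never active, so $\mu_e=g_e$. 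Thus $\mu_0,\mu_1,\ldots$ is an effective enumeration of lower semi-computable semi-measures that exhausts them all up to equality.

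Finally, set $\mu(x)=\sum_{e\ge 1}2^{-e}\,\mu_e(x)$. As an effective infinite sum of lower semi-computable functions with summable computable weights, $\mu$ is lower semi-computable, and $\sum_x\mu(x)=\sum_{e\ge 1}2^{-e}\sum_x\mu_e(x)\le\sum_{e\ge1}2^{-e}=1$, so $\mu$ is a semi-measure. Given any semi-computable semi-measure $\nu$, there is an $e$ with $\nu=\mu_e$, and then $\mu(x)\ge 2^{-e}\mu_e(x)=2^{-e}\nu(x)$ for all $x\in\n$, so $c_\nu=2^{-e}$ witnesses the required domination.

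The main obstacle is the middle step: making the ``trim to a semi-measure'' operation uniformly effective while simultaneously guaranteeing it is the identity on genuine semi-measures. The monotonicity-in-$s$ requirement on $\mu_e^{(s)}$ and the careful accounting of the remaining mass budget across stages need some attention, but this is bookkeeping rather than anything conceptually deep; the enumeration and mixing steps are routine once it is in place. (This is, of course, the construction of the Solomonoff--Levin universal enumerable semi-measure, following \cite{Zvonkin70thecomplexity}.)
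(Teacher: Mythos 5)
The paper never actually proves this theorem: it quotes it from Levin \cite{Zvonkin70thecomplexity}, and only later re-enacts the same scheme for the quantum analogue (Theorems \ref{thm:semicom} and \ref{thm:universal}, where the standard enumeration of partial computable functions is filtered into an enumeration of semi-computable semi-density matrices and then mixed with weights $2^{-k}$). Your plan is exactly that standard route in the classical setting: your stage-$s$ extraction of $g_e$ from $\phi_e$ is monotone in $s$ and exhausts all lower semi-computable functions, and the final mixing and domination argument are correct, provided the middle ``trimming'' step really delivers an effective list of semi-measures that contains every semi-measure.

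That middle step, as you wrote it, would fail, and the failure is not mere bookkeeping inside your scheme. You sweep $x=0,1,2,\ldots$, grant each $x$ its current demand up to the leftover mass $1-\sum_{y<x}\mu_e^{(s)}(y)$, and simultaneously require $\mu_e^{(s)}(x)\geq \mu_e^{(s-1)}(x)$. These two constraints can conflict: let $g^{(s-1)}_e(0)=g^{(s-1)}_e(1)=1/2$, so $\mu_e^{(s-1)}(0)=\mu_e^{(s-1)}(1)=1/2$, and let the demand at $0$ later rise to $9/10$; the sweep then grants $\mu_e^{(s)}(0)=9/10$ and leaves only $1/10$ for $x=1$, below its previous value $1/2$. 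If monotonicity wins, the total exceeds $1$ and the limit need not be a semi-measure; if the budget wins, the approximation is no longer nondecreasing in $s$, so lower semi-computability (in the paper's sense of an increasing computable approximation) is no longer witnessed. The repair is a different rule, the standard one: never reallocate mass between sites, but accept an increment of the approximation only if the resulting total mass still does not exceed $1$, and otherwise freeze all values permanently. Then each $\mu_e$ is a semi-measure with a monotone computable approximation, uniformly in $e$, and if $g_e$ is already a semi-measure no increment is ever refused, so $\mu_e=g_e$; with this replacement the rest of your argument goes through verbatim. This freezing mechanism is precisely what the paper's quantum proof of Theorem \ref{thm:semicom} does, by accepting an output of $\phi_n$ only when it is an elementary semi-density matrix quasi-greater than the previously accepted one and otherwise repeating the previous value. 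A cosmetic point: your mixture $\sum_{e\geq 1}2^{-e}\mu_e$ omits $\mu_0$; this is harmless only because every semi-measure has infinitely many indices, and it is cleaner to sum over all $e\geq 0$ with weights $2^{-e-1}$ (note that the paper's own choice $\sum_{k\geq 0}2^{-k}\hat{\mu}_k$ in Theorem \ref{thm:universal} actually has total weight $2$ and needs the same normalization).
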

Levin \cite{Zvonkin70thecomplexity} is also proved a relation
between prefix Kolmogorov complexity and universal semi-measure as
follows
\begin{thm}(Levin's Coding Theorem)\label{thm:levin2}
We have $K(x)\overset{+}{=} -\log \mu(x)$, for all $x\in\Omega_2$.
\end{thm}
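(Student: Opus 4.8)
The plan is to prove separately the two inequalities encoded in $\overset{+}{=}$: the "easy" bound $-\log\mu(x)\le K(x)+O(1)$ and the "hard" bound $K(x)\le -\log\mu(x)+O(1)$, where $\mu$ is the universal semi-computable semi-measure of Theorem \ref{thm:levin} (identifying $\Omega_2$ with $\n$ via $str$).

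For the easy direction, I would first observe that the universal probability $P_{\mathcal U}(x)=\sum_{p:\mathcal U(p)=x}2^{-\ell(p)}$ is itself a lower semi-computable semi-measure. It is lower semi-computable because, using the predicate $STP$, one enumerates all pairs $(p,t)$ such that $\mathcal U$ halts on $p$ within $t$ steps and adds $2^{-\ell(p)}$ to a running rational approximation of $P_{\mathcal U}(\mathcal U(p))$, producing an increasing computable approximation; and $\sum_x P_{\mathcal U}(x)=\sum_{p\in\mathrm{dom}\,\mathcal U}2^{-\ell(p)}\le 1$ by the Kraft inequality, since $\mathrm{dom}\,\mathcal U$ is prefix-free. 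Since the shortest program for $x$ already contributes $2^{-K(x)}$ to the sum, $2^{-K(x)}\le P_{\mathcal U}(x)$. Universality of $\mu$ then gives a constant $c>0$ with $2^{-K(x)}\le P_{\mathcal U}(x)\le c^{-1}\mu(x)$, i.e. $-\log\mu(x)\le K(x)+\log c$.

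For the hard direction, the idea is to build a prefix-free machine that realizes $\mu$, up to a constant, as a program distribution. Fix a computable nondecreasing rational approximation $\mu_t(x)\nearrow\mu(x)$. For each $x$, monitor the least integer $k$ with $2^{-k}\le\mu_t(x)$; whenever this integer is defined and becomes strictly smaller than every value previously recorded for $x$, issue a \emph{request} for a program of length $k+1$ outputting $x$. For a fixed $x$ the requested lengths form a strictly decreasing sequence of integers bounded below by $k_x+1$, where $k_x=\lceil-\log\mu(x)\rceil$, so the weights they contribute sum to at most $\sum_{m\ge k_x+1}2^{-m}=2^{-k_x}\le\mu(x)$; summing over $x$ gives total requested weight $\le\sum_x\mu(x)\le 1$. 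The Kraft–Chaitin theorem (the effective converse of the Kraft inequality) then yields a prefix-free machine $M$ and, for each $x$, a program of length $\le k_x+1\le-\log\mu(x)+2$ on which $M$ outputs $x$, so $K_M(x)\le-\log\mu(x)+2$. The invariance property of prefix-free complexity (Property 1 above) finally gives $K(x)=K_{\mathcal U}(x)\le K_M(x)+c_M\le-\log\mu(x)+O(1)$. Combining the two bounds yields $K(x)\overset{+}{=}-\log\mu(x)$.

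I expect the main obstacle to be the bookkeeping in the hard direction: one must ensure the request-issuing procedure is genuinely computable and, crucially, that the total requested weight stays $\le 1$ even though each $x$ spawns infinitely many progressively shorter requests as $\mu_t(x)$ grows. This is exactly why a request is issued only when the threshold index drops and why the granted length is $k+1$ rather than $k$. A secondary technical point is the Kraft–Chaitin construction itself — maintaining a reservoir of free prefix-free intervals and carving out one of the correct size for each incoming request — which I would either cite from \cite{Downey} or supply via the short interval-allocation argument.
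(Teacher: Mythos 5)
Your argument is correct, but there is nothing in the thesis to compare it against: the paper states Levin's Coding Theorem without proof, simply citing Levin (\cite{Zvonkin70thecomplexity}), so your write-up supplies exactly what the text delegates to the reference. What you give is the standard proof. The easy direction is fine: $P_{\mathcal{U}}$ is lower semi-computable via the $STP$ predicate, is a semi-measure by the Kraft inequality applied to the prefix-free domain of $\mathcal{U}$, and domination by the universal $\mu$ of Theorem \ref{thm:levin} gives $-\log\mu(x)\le K(x)+O(1)$ (note your constant should read $-\log c$ rather than $+\log c$, which is immaterial inside $\overset{+}{=}$). The hard direction via requests and the Kraft--Chaitin theorem is the key content, and your bookkeeping is sound: for fixed $x$ the requested lengths are strictly decreasing integers, all at least $k_x+1$ with $k_x=\lceil-\log\mu(x)\rceil$, so the weight charged to $x$ is at most $2^{-k_x}\le\mu(x)$ and the total is at most $1$; the only edge case worth a sentence is when $\mu(x)$ is a dyadic rational never attained by the approximation $\mu_t(x)$, in which case the shortest granted request may have length $k_x+2$, costing one more bit and again harmless up to an additive constant. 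Since the paper's other results (e.g.\ Theorem \ref{gk} and the classical Gacs complexity chapter) lean on this theorem only as a black box, your proof would be a self-contained addition rather than an alternative route to an existing argument; citing \cite{Downey} for the Kraft--Chaitin construction, as you propose, is consistent with how the thesis handles such machinery elsewhere.
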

\begin{thm}
Any semi-computable function $\psi:\n\to\q$ can be represented by a computable function from $\n$ into $\n$.
\end{thm}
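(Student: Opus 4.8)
The plan is to read ``represented by a computable function from $\n$ into $\n$'' in the only way compatible with the fact that a lower semi-computable function need not itself be computable: the monotone rational approximation that witnesses semi-computability is to be repackaged, via the standard codings of $\q$ and of $\n\times\n$ into $\n$, as a single total computable function $F:\n\to\n$ from which $\psi$ is recovered as a pointwise supremum. So first I would dispose of the direction of approximation. Taking ``semi-computable'' to mean lower semi-computable (as it is used for semi-measures above), the upper case reduces to it by negation, since $-\psi$ is then lower semi-computable and again $\q$-valued, so any representation of $-\psi$ gives one of $\psi$ after negating the decoded values; and if $\psi$ happens to be both, it is computable outright, which is stronger. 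By the definition of lower semi-computability there is then a \emph{total} computable $f:\n\times\n\to\q$ with $f_{t}(x)\le f_{t+1}(x)$ for all $t,x$ and $\lim_{t\to\infty}f_{t}(x)=\psi(x)$.

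Next I would fix, once and for all, a computable bijection $\langle\cdot,\cdot\rangle:\n\times\n\to\n$ with computable inverse (such as the pairing used above) and a computable bijection $r:\n\to\q$ with computable inverse $e:=r^{-1}:\q\to\n$, e.g. $r$ enumerating the fractions $a/b$, $b\neq 0$, in some fixed canonical order with each rational appearing exactly once. Recall the convention made earlier that a map into a set put in explicit bijection with $\n$ counts as computable precisely when its composite with that bijection is computable in the already-defined sense; thus ``$f$ computable'' literally says that $e\circ f:\n\times\n\to\n$ is computable. I would then define
\begin{equation}
F(n):=e\bigl(f(t,x)\bigr)\qquad\text{where }\langle x,t\rangle=n,
\end{equation}
that is, $F=(e\circ f)\circ\langle\cdot,\cdot\rangle^{-1}:\n\to\n$. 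Being a composition of total computable functions, $F$ is total computable, which is what the statement asks for.

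Finally I would check that $F$ does represent $\psi$. For each fixed $x$ the sequence $t\mapsto r\bigl(F(\langle x,t\rangle)\bigr)=f_{t}(x)$ is a non-decreasing sequence of rationals, whence
\begin{equation}
\psi(x)=\lim_{t\to\infty}r\bigl(F(\langle x,t\rangle)\bigr)=\sup_{t\in\n}r\bigl(F(\langle x,t\rangle)\bigr),
\end{equation}
and conversely $F$ is completely determined by $f$; so $F$ carries exactly the information of the witness $f$ but as one function $\n\to\n$. I do not expect a genuine obstacle here, only a point that must be handled cleanly: since $r$ is not order preserving, ``increasing sequence $f_{t}$'' refers to the order of $\q$, not to the order of the codes, so the representation must be stated through $r\circ F$ rather than $F$ alone; if a code-monotone $F$ is desired one first replaces $f_{t}(x)$ by a renormalised approximation (for instance pushing the values into $[0,1)$ and coding by truncated binary expansions) before applying $e$. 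It is worth underlining in the final write-up that ``represented'' cannot be upgraded to ``computed'': the indicator of a non-computable computably enumerable set is lower semi-computable and $\{0,1\}$-valued yet not computable, so the theorem is sharp.
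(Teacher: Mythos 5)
Your proof is correct and takes essentially the same approach as the paper: both repackage the monotone rational approximation $f:\n\times\n\to\q$ witnessing semi-computability into a single total computable function from $\n$ into $\n$ via standard codings, recovering $\psi(x)$ as the limit (supremum) of the decoded values. The only difference is in the bookkeeping: the paper codes each truncated row $\bigl(\phi(n,0),\dots,\phi(n,n)\bigr)$ into one integer by prime-power exponents, whereas you code single values $f(t,x)$ at paired indices $\langle x,t\rangle$.
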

\begin{proof}
Let $f:\n\times\n\rightarrow\q$ be a computable function and  increasing with respect to the first argument $n$, and such that
                         $$\psi(x) = \lim_{n\rightarrow\infty}f( n, x )\,\,\,\,\,\,\text{for all}\,\,\,\,\,\, x\in \n.$$
Let $\phi:\n\times\n\rightarrow\q$ be defined by $\phi(n,x)=f(n ,x)$
if $x\leq n$, otherwise $0$. Then, $\phi_n$ is an increasing
sequence of computable functions and $\lim_{n\to\infty}
\phi_n=\psi$. Let
$\Phi:\n\to\n$ be defined as follows: $\Phi(n)=2^{\alpha(\phi(n,0))}
\times 3^{\alpha(\phi(n, 1))} \times\ldots\times
p_n^{\alpha(\phi(n,n))},$ where $p_n$ is the $n$-th prime number and $\alpha:\q\rightarrow\n$ is an
injection~\footnote{ An instance of such an injection is the map
$\iota'\circ\iota$ defined in section \ref{sec:unversal semi-measure}.}. Then, $\Phi$ is a computable function.

 Now, $\psi$ can be defined by $\Phi$ as follows:
  $$\psi(n)=\alpha^{-1}(\Phi(n)_n),\,\,\, \text{where}\,\,\, \Phi(n)_n=\alpha(\phi(n, n)).
  $$
 \end{proof}

 In this way, we will represent all necessary
semi-computable  quantities that appear in the following like
semi-computable semi-measures, semi-computable Hilbert space vectors
and semi-density matrices by computable functions on $\n$.

\section{Relation between Algorithmic Complexity and Thermodynamics}
One nice application of algorithmic complexity concerns the relations
between computation and thermodynamics. Since computation is a
physical process, its  thermodynamical cost is certainly important.
 A usual question in computation  theory  is about which processes can be performed reversibly and which ones are necessarily irreversible.
Rolf Landauer and Charles Bennett ~\cite{land, bennett} have been shown that  any  thermodynamically irreversible computer operation should be logically irreversible.
For instance, data erasure is an example of irreversible process as it eliminates irretrievable information.

 Let  us consider a cubic box of volume $V$ containing a gas molecule with
a freely moving piston which can be used to locate the molecule on
the left side of the box; this molecule position can be identified as a bit 1.
\begin{center}
\begin{tikzpicture}\label{fig:box1}
     \draw (0,0) -- (0,2) -- (3,2)--(3,0)-- (0,0);
     \draw (1.5,0)--(1.5,2);
     \draw (1.5, 1)--(5, 1);
     \draw (0.40,1) node[rectangle,fill=white]{};
     \draw [fill=gray](.75,1)circle (1ex);
     \end{tikzpicture}
        \end{center}
The flip operation which transforms bit 1 (molecule confined to the left half of the box) into bit 0 (molecule confined in the right half of the box)   can be performed reversibly by slowly rotating the box around its vertical axis and thus exchanging the two halves of the box.
\begin{center}
\begin{tikzpicture}
     \draw (0,0) -- (0,2) -- (3,2)--(3,0)-- (0,0);
     \draw (-2, 1)--(1.5, 1);
     \draw (1.5,0)--(1.5,2);
 \draw [fill=gray](2.25,1)
circle (1ex);
       \end{tikzpicture}
    \end{center}
  The compression of the piston that confines the molecule to one half of the box, can be performed isothermically, without changing the temperature of the box. After that, one allows the piston slowly return to the initial state. Correspondingly, there occurs a loss of information due to the doubling of the volume,  that the molecule can occupy: this amounts to erasing one bit of information.

  Now, we can construct a computer working with only two bits per tape cell which can be operated in analogy with the thermodynamical box depicted above.
    Let us consider the
simple program to add the two bits and save its result in
the memory.
\begin{center}
\begin{tikzpicture}
         \draw (0,0) node[rectangle,draw]{0};
          \draw (0.5,0) node[rectangle,draw]{0};
           \draw (1.5,0) node[rectangle]{$\longrightarrow$};
           \draw (2.5,0) node[rectangle,draw]{0};
          \draw (3,0) node[rectangle,draw]{0};
 \draw (0,-1) node[rectangle,draw]{0};
          \draw (0.5,-1) node[rectangle,draw]{1};
           \draw (1.5,-1) node[rectangle]{$\longrightarrow$};
           \draw (2.5,-1) node[rectangle,draw]{0};
          \draw (3,-1) node[rectangle,draw]{1};
           \draw (0,-2) node[rectangle,draw]{1};
          \draw (0.5,-2) node[rectangle,draw]{0};
           \draw (1.5,-2) node[rectangle]{$\longrightarrow$};
           \draw (2.5,-2) node[rectangle,draw]{0};
          \draw (3,-2) node[rectangle,draw]{1};
           \draw (0,-3) node[rectangle,draw]{1};
          \draw (0.5,-3) node[rectangle,draw]{1};
           \draw (1.5,-3) node[rectangle]{$\longrightarrow$};
           \draw (2.5,-3) node[rectangle,draw]{1};
          \draw (3,-3) node[rectangle,draw]{0};
    \end{tikzpicture}
    \end{center}
The binary addition $\oplus$ in the above is defined as follows
$$0\oplus0=0,\quad 0\oplus1=1,\quad 1\oplus0=1,\quad 1\oplus1=0. $$
Since the results of two operations
\begin{tikzpicture}
         \draw (0,0) node[rectangle,draw]{0};
          \draw (0.5,0) node[rectangle,draw]{1};
\end{tikzpicture}
 and
 \begin{tikzpicture}
         \draw (0,0) node[rectangle,draw]{1};
          \draw (0.5,0) node[rectangle,draw]{0};
\end{tikzpicture}
 are the same
 \begin{tikzpicture}
         \draw (0,0) node[rectangle,draw]{0};
          \draw (0.5,0) node[rectangle,draw]{1};
\end{tikzpicture}
  the operation is not logically reversible and hence it is not thermodynamically
reversible. But, we can use more tape cells to solve this
difficulty, which writes sum of the two bits in  different part of
the memory, using the additional bits to save the inputs and the
outputs. Therefore, we can construct a logically reversible computer
operation. For instance in the sum of two bits, the first two memory
cell store the inputs and the second two memory cell which are
initially zeros save the outputs as  bits.
\begin{center}
\begin{tikzpicture}
         \draw (0,0) node[rectangle,draw]{0};
          \draw (0.5,0) node[rectangle,draw]{0};
          \draw (1,0) node[rectangle,draw]{0};
          \draw (1.5,0) node[rectangle,draw]{0};
           \draw (2.5,0) node[rectangle]{$\longrightarrow$};
           \draw (3.5,0) node[rectangle,draw]{0};
          \draw (4,0) node[rectangle,draw]{0};
          \draw (4.5,0) node[rectangle,draw]{0};
          \draw (5,0) node[rectangle,draw]{0};
 \draw (0,-1) node[rectangle,draw]{0};
          \draw (0.5,-1) node[rectangle,draw]{1};
          \draw (1,-1) node[rectangle,draw]{0};
          \draw (1.5,-1) node[rectangle,draw]{0};
           \draw (2.5,-1) node[rectangle]{$\longrightarrow$};
           \draw (3.5,-1) node[rectangle,draw]{0};
          \draw (4,-1) node[rectangle,draw]{1};
          \draw (4.5,-1) node[rectangle,draw]{0};
          \draw (5,-1) node[rectangle,draw]{1};
           \draw (0,-2) node[rectangle,draw]{1};
          \draw (0.5,-2) node[rectangle,draw]{0};
          \draw (1,-2) node[rectangle,draw]{0};
          \draw (1.5,-2) node[rectangle,draw]{0};
           \draw (2.5,-2) node[rectangle]{$\longrightarrow$};
           \draw (3.5,-2) node[rectangle,draw]{1};
          \draw (4,-2) node[rectangle,draw]{0};
          \draw (4.5,-2) node[rectangle,draw]{0};
          \draw (5,-2) node[rectangle,draw]{1};
            \draw (0,-3) node[rectangle,draw]{1};
          \draw (0.5,-3) node[rectangle,draw]{1};
          \draw (1,-3) node[rectangle,draw]{0};
          \draw (1.5,-3) node[rectangle,draw]{0};
           \draw (2.5,-3) node[rectangle]{$\longrightarrow$};
           \draw (3.5,-3) node[rectangle,draw]{1};
          \draw (4,-3) node[rectangle,draw]{1};
          \draw (4.5,-3) node[rectangle,draw]{1};
          \draw (5,-3) node[rectangle,draw]{0};
    \end{tikzpicture}
   \end{center}

  Now, we identify the free energy, namely energy that can be transformed into expendible work, and free memory.

 The problem is that, if we want to operate reversibly by storing extra information, the free memory will soon become saturated and demand data erasure. This process consumes free energy by generating heat: one would then try to compress as much as possible the garbage data before erasing them.

If $T$ is the temperature at which the computation is performed, the heat generated, equivalently the free energy consumed, by erasing one bit of information is given by
  $$\Delta S=\frac{\Delta Q}{T}  ,$$
where  $\Delta S=\kappa \log 2$,  $\kappa=1.38 \times 10^{-23} J/K$. Therefore, when $1$ or $0$ is written in a bit the amount of free energy.

 Suppose the garbage data occupying the free memory correspond to the string
$\ii$. The best way to compress it
 is to use  a  program $p^*$ with shortest length such
that $U(p^*)=\ii$ where $U$ is a universal Turing machine.

Now, the
minimal free energy consumption amounts to
$\Delta_{opt}F=-\kappa T C(\ii)$ which is a lower bound to $\Delta F= -n \kappa T \log 2 $, where $n$ is the length of $\ii$.



\chapter{Classical and Quantum Entropy }
\goodbreak
In this chapter, we    introduce  the Kolmogorov-Sinai
entropy for  classical dynamical systems with the aid of  symbolic
models. Symbolic models will then be associated to the algebraic description of classical spin chains. This will lead us to the introduction of quantum spin chains and of two quantum dynamical entropies, that of Connes, Narnhofer and Thirring (CNT) and that of Alicki and Fannes (AF).

\smallskip

\section{Classical dynamical systems}\label{classical dyanamical}
Classical dynamical systems can be defined as abstract mathematical objects in terms of triples
$(\chi, T, \nu)$ where
\begin{enumerate}
    \item $\chi$ is a phase-space; namely,  $\chi$ is a measure space  endowed with a $\sigma$-algebra
     $\Sigma$ of measurable sets.
    \item $T$ is a measurable map such that for any $A\in\Sigma\Rightarrow T^{-1}(A)\in
    \Sigma$.
    \item $\nu$ is a  $T$-Invariant  probability measure on $\chi$; namely,
    $\nu(\chi)=1$ and $\nu\circ T^{-1}=\nu$.
\end{enumerate}
 A reversible dynamical system is a dynamical system such that for  the discrete time evolution $T$, $T^{-1}$  is also measurable
such that, $\nu \circ T=\nu$ and if $A\in\Sigma$
then $T(A)\in \Sigma$.
\begin{defn}
 Let $(\chi, T, \nu)$ be a classical dynamical system. A finite measurable partition $\mathcal{P}$ of $\chi$ is a
finite set of  disjoint measurable subsets $P_1, P_1, \ldots,
P_n$ of $\chi$ such that $\chi=\cup_{i=1}^n P_i$. The elements $P_i$
of $\mathcal{P}$
 are usually called atoms.
 \end{defn}
  Composition
of two partitions $\mathcal{P}$ and $\mathcal{Q}$ are also a
partition $P\vee Q=\{P_k \cap Q_l| P_k\in \mathcal{P},
Q_l\in\mathcal{Q} \}$.

One way to study continuous phase-spaces with discrete time dynamics  is by discretizing the continuous  phase-space using a finite partition, a process called coarse-graining. Firstly, we introduce the meaning of trajectory in a dynamical system $(\chi, T, \nu)$ with discrete time evolution $T$.

In general, for a given element $x\in\chi$ , the trajectory of $x$ is defined as set $\{T^kx \}$ where $k\in\zz$. Indeed, it shows the position of an  element on phase-space after after $k$ time-steps. Then, one defines a coarse-grained trajectory issuing from $x$ by using finite partitions.
\begin{defn}
Let  $(\chi, T, \nu)$ be a dynamical system  with
the finite measurable partition $\mathcal{P}$ of $\chi$ with $p$ elements. The
coarse-grained trajectory through $x\in\chi$ dependent on partition $\mathcal{P}$ is defined by the string
$\Omega_p\ni\bi(x):=i_1 i_2 i_3 \ldots $ where $T^k(x)\in P_{i_k}$. By varying $x\in\chi$,
the set
of such strings will be denoted by $\widetilde{\Omega}_p^\z$ where
$\widetilde{\Omega}_p^\z\subseteq \Omega_p^\z$.
Therefore, for a phase point $x\in\chi$, the trajectory $\{T^k x
\}_{k\in\z}$ can be encoded by a string dependent on a specified finite measurable partition of phase-space.
\end{defn}

For  a given  dynamical system and a finite measurable partition
$\mathcal{P}$, the symbolic dynamical system $(\widetilde{\Omega}_p^\z,
T_\sigma, \nu_\mathcal{P})$, is defined as
follows
\begin{enumerate}
  \item The $\sigma$-algebra of measurable sets is generated by
  cylinders consisting a cylinder consists of all strings whose elements have fixed
  values in chosen intervals:
   $$C_{i_l}^{\{l \}}=\{\bi\in\Omega_p:\,\, \bi_l=i_l\},\quad C^{[j, k]}_{\underbrace{{i_j i_{j+1}\ldots i_k}}_{\bi^{k-j+1}}}=\{\bi\in \Omega_p: \bi_{j+l}=i_{j+l}, \, l=0,1, \ldots, k-j\}.$$
  \item $T_\sigma$ is a left shift dynamics along strings on $\Omega_p$. In other
  words, for a string $\bi\in\Omega_p$,
  $(T_\sigma(\bi))_j=\bi_{j+1}$.
  \item The probability measure $\nu_p$ is defined by $\nu_{\mathcal{P}}(\ii)=\nu(\p)$, which
\begin{equation}
  \p:=P_{i_0}\bigcap T^{-1}(P_{i_1})\bigcap \dots
  \bigcap T^{-n+1}(P_{i_{n-1}}).
  \end{equation}
\end{enumerate}
\begin{rem}
It is straightforward to see that, in the symbolic dynamical system
$(\widetilde{\Omega}_p, T_\sigma, \nu_{\mathcal{P}})$, the invariance condition  $\nu\circ T^{-1}=\nu$   is equivalent to
\begin{equation}\label{compatibility}
\sum_{i=1}^p\nu_\mathcal{P}(i i_2 \ldots i_n)=\nu_\mathcal{P}(i_2
\ldots i_n),
\end{equation}
Notice that the invariance condition  is different from
the compatibility condition
\begin{equation}
\sum_{i_n=1}^p\nu_\mathcal{P}(i_1 i_2 \ldots i_n)=\nu_\mathcal{P}(i_1
i_2\ldots i_{n-1}),
\end{equation}
which must hold for all probability measures $\nu$.
\end{rem}

The Kolmogorov-Sinai (KS) entropy of classical dynamical systems is, roughly speaking, the highest Shannon entropy rate for all its symbolic models.
  Indeed, let
$$\mathcal{P}^{(n)}:=\{\p| \bi^{(n)}=i_0 i_1 \dots i_n,\,\, i_j\in
I_p\}$$ be a refinement of the partition $\mathcal{P}$. The entropy of $\mathcal{P}^{(n)}$ is measured by the Shannon entropy of the probability
distribution $\{\nu_\mathcal{P}(\p)\}_{\Omega_p^{(n)}}$,
\begin{equation}
H_\nu(\mathcal{P}^{(n)}):=-\sum_{\Omega_p^{(n)}}\nu_{\mathcal{P}}(\ii)\log\nu_{\mathcal{P}}(\ii).
\end{equation}
Now,  KS entropy associated with $\nu, T, \mathcal{P}$ is defined as the
shannon entropy rate
\begin{equation}
h_\nu^{KS}(T, \mathcal{P}):=
\lim_{n\to\infty}\frac{1}{n}H_\nu(\mathcal{P}^{(n)})=\inf_n\frac{1}{n}H_\nu(\mathcal{P}^{(n)}).
\end{equation}
  Now, by taking $\sup$
over all partitions, one can get a definition independent of
partitions.

\begin{defn}
The KS entropy of the classical dynamical system $(\chi, T, \nu)$ is
defined by
\begin{equation}
h_\nu^{KS}(T):=\sup_\mathcal{P}h_\nu^{KS}(T, \mathcal{P}),
\end{equation}
where the sup is taken over all finite measurable partitions
$\mathcal{P}$.
\end{defn}
\begin{rem}
It is not easy to compute  $\sup$ in the KS entropy definition. But, by the
Kolmogorov-Sinai Theorem \cite{billingsley}, if there exists a
generating partition
$\mathcal{P}$, then $$h_\nu^{KS}(T_\sigma)=h_\nu^{KS}(T,\mathcal{P}),$$
where  a generating partition is a finite partition such that  the
set of refined partitions $\mathcal{P}^{(n)}$ for all $n\in\n$,
generates the $\sigma$-algebra $\Sigma$ of phase space $\chi$.
\end{rem}
The following simple example shows us the computation of the KS-entropy for the Bernoulli shift dynamics.
\begin{example}(\textbf{Bernoulli shifts})
Let us consider a shift dynamical system $(\Omega_2, T_\sigma, \nu)$
where the measure $\nu$  is locally defined as follows
$$\nu(C^{[j, k]}_{i_j
i_{j+1}\ldots i_k})=p^{(k-j+1)}(i_j i_{j+1}\ldots i_k),$$ where
$$p^{(n)}(i_1\ldots i_n)=\prod_{j=1}^{n} p(i_j), \quad p(i)\geq 0, \,\, \sum_{i=1}^d p(i)=1.   $$
On the other hand,  $\mathcal{P}:=\{ C_j^{\{0\}}\}_{j=1}^p $ is a
generating partition for the $\sigma$-algebra of cylinders.
Therefore,
$$h_\nu^{KS}(T_\sigma)= h_\nu^{KS}(T_\sigma, \mathcal{P})=\lim_{n\to\infty} \frac{1}{n} H_\nu(\mathcal{P}^{(n)})=-\sum_{i=1}^p p(i) \log p(i)=H_\nu(\mathcal{P}).$$
\end{example}

Ergodic theory developed  in \cite{khinchin} explains when and why mean values of observables coincide with their time-averages  why trajectories in ergodic
systems  fill the phase-space  densely.
 \begin{defn}
A dynamical system $(\chi, T, \nu)$ is called ergodic if for every
$\psi, \phi\in L^2_\nu(\chi)$,
\begin{equation}
\lim_{t\to\infty} \frac{1}{t}\sum_{s=0}^{t-1} \nu(\psi\phi \circ
T^s)=\nu(\psi)\nu(\phi).
\end{equation}
\end{defn}
The quantity $C(x, \mathcal{P}):=\limsup_n\frac{1}{n} (\min_{\ii}
C(\ii, \mathcal{P}))$, where $C(\ii, \mathcal{P}):=C(\ii)$,  is called the complexity of a point $x\in\chi$
with respect to a finite measurable partition $\mathcal{P}$. The quantity
$C(x):=\sup_\mathcal{P} C(\ii, \mathcal{P})$ is called the
complexity of the trajectory of $x\in\chi$.

 The two following theorems  proved by  Brudno \cite{Brudno},
shows a relation between compression of data
and the Kolmogorov complexity. Actually, it sets a relation between different subjects in mathematics, computer science and physics.
\begin{thm}
In a binary ergodic source $(\Omega_2, T_\sigma, \pi)$, with entropy
rate $h_\sigma^{KS}(T_\sigma)$, we have
\begin{equation}\label{Brudno}
\lim_{n\rightarrow
\infty}\frac{1}{n}C(\textbf{i}^{(n)})=h_\pi(T_\sigma),
\end{equation}
for almost all $\textbf{i}^{(n)}\in\Omega_2$ with respect to $\pi$.
\end{thm}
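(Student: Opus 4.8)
The plan is to establish the two one-sided inequalities $\liminf_{n\to\infty}\frac1nC(\ii)\ge h$ and $\limsup_{n\to\infty}\frac1nC(\ii)\le h$ separately, each holding for $\pi$-almost every sequence $\bi$ (with $\ii$ its length-$n$ prefix), where $h:=h_\pi^{KS}(T_\sigma)$ is the entropy rate in the statement; intersecting the two full-measure sets then gives the claim. The bridge between the algorithmic side and the measure $\pi$ is the Shannon--McMillan--Breiman (SMB) theorem: because for the full shift $\Omega_2$ the one-site partition $\mathcal P=\{C_0^{\{0\}},C_1^{\{0\}}\}$ is generating, the Kolmogorov--Sinai theorem gives $h=h_\pi^{KS}(T_\sigma,\mathcal P)=\lim_n\frac1nH_\pi(\mathcal{P}^{(n)})$, and SMB upgrades this to the pointwise statement $-\frac1n\log\pi(\ii)\to h$ for $\pi$-a.e.\ $\bi$, where $\pi(\ii)$ denotes the measure of the length-$n$ cylinder determined by $\ii$. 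So in each direction it suffices to compare $\frac1nC(\ii)$ with $-\frac1n\log\pi(\ii)$ up to $o(n)$ errors.

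\textbf{Lower bound.} Fix $\varepsilon>0$ and put $R_n:=\{\mathbf{j}\in\{0,1\}^n:\,C(\mathbf{j})<n(h-\varepsilon)\}$, so that $|R_n|<2^{n(h-\varepsilon)}$ by \eqref{re:compres}. I would set $E_n:=\{\bi:\bi^{(n)}\in R_n\}$, a union of at most $2^{n(h-\varepsilon)}$ cylinders of length $n$, and $F_n:=\{\bi:\pi(\ii)>2^{-n(h-\varepsilon/2)}\}$. Every cylinder that contributes to $E_n\setminus F_n$ has $\pi$-measure at most $2^{-n(h-\varepsilon/2)}$, whence $\pi(E_n\setminus F_n)\le|R_n|\,2^{-n(h-\varepsilon/2)}<2^{-n\varepsilon/2}$, which is summable in $n$; Borel--Cantelli then shows that $\pi$-a.e.\ $\bi$ lies in $E_n\setminus F_n$ for only finitely many $n$, while SMB shows that $\pi$-a.e.\ $\bi$ lies in $F_n$ for only finitely many $n$. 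Hence $\pi$-a.e.\ $\bi$ satisfies $C(\ii)\ge n(h-\varepsilon)$ for all large $n$, i.e.\ $\liminf_n\frac1nC(\ii)\ge h-\varepsilon$, and taking $\varepsilon=1/m$ and intersecting over $m\in\n$ gives $\liminf_n\frac1nC(\ii)\ge h$ a.e. (An equivalent route uses the universal semi-measure $\mu$ of Theorem~\ref{thm:levin}: Markov's inequality gives $\pi\{\bi:\mu(\ii)>2^{\varepsilon n}\pi(\ii)\}<2^{-\varepsilon n}$, since $\sum_{\mathbf{j}\in\{0,1\}^n}\mu(\mathbf{j})\le1$, so Borel--Cantelli combined with Theorem~\ref{thm:levin2} and \eqref{re:kol} yields $C(\ii)\ge-\log\pi(\ii)-\varepsilon n-2\log n-O(1)$ for a.e.\ $\bi$ and all large $n$.)

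\textbf{Upper bound.} This is where I would use semi-computability directly, and also where the word ``restricted'' enters: I assume the source is computable, meaning that $\mathbf{j}\mapsto\pi(C_{\mathbf{j}})$ is a computable real-valued function on $\{0,1\}^*$. Picking a computable probability weight $w$ on $\n$ with $w(n)\asymp n^{-2}$, the function $\nu(\mathbf{j}):=w(|\mathbf{j}|)\,\pi(C_{\mathbf{j}})$ is a semi-computable semi-measure on $\{0,1\}^*$ — it sums to $\sum_n w(n)=1$ because the length-$n$ cylinders partition $\Omega_2$. By Theorem~\ref{thm:levin} there is a constant $c>0$ with $\mu(\ii)\ge c\,\nu(\ii)$, and Levin's coding theorem (Theorem~\ref{thm:levin2}) then gives $K(\ii)\le-\log\mu(\ii)+O(1)\le-\log\pi(\ii)+2\log n+O(1)$. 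Since $C(\ii)\le K(\ii)$ by \eqref{kolpre}, dividing by $n$ and invoking SMB gives $\limsup_n\frac1nC(\ii)\le h$ for $\pi$-a.e.\ $\bi$. (Without the computability hypothesis this half still holds, but only via the longer classical argument in which one hands the decoder the empirical distribution of the length-$k$ blocks occurring in $\ii$ — a string of $O(2^k\log n)$ bits, computable from $\ii$ — and then Shannon-codes $\ii$ with respect to it; this yields $\limsup_n\frac1nC(\ii)\le\frac1kH(\pi_k)$ for each fixed $k$, and $\frac1kH(\pi_k)\downarrow h$.)

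The step I expect to be genuinely delicate is not any single estimate but the upgrade from convergence \emph{in probability} — which the counting bound \eqref{re:compres} and Shannon--McMillan already supply — to the \emph{almost everywhere} conclusion: this is exactly what forces the exceptional sets to be trimmed at rate $\varepsilon/2$ rather than $\varepsilon$, so that $\sum_n2^{-n\varepsilon/2}<\infty$ and Borel--Cantelli applies, and what makes Shannon--McMillan--Breiman, rather than plain Shannon--McMillan, indispensable. A secondary, purely bookkeeping, point is to check that the $O(\log n)$ overheads appearing above — from prefix coding in \eqref{re:kol}, from the weight $w(n)$, and from describing the length $n$ — are all $o(n)$ and hence invisible at the level of complexity rates.
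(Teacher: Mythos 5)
The thesis itself does not prove this statement: it quotes it from Brudno \cite{Brudno}, and the closest argument the paper actually contains is the Chapter 4 theorem showing $\lim_{n\to\infty}-\frac1n\log\mu(\ii)=h_\nu^{KS}(T_\sigma)$ a.e.\ for a \emph{computable} ergodic source. Measured against that, your main route is essentially the paper's own: your upper bound uses the same device (weight the measure by $w(n)\asymp n^{-2}$ to obtain a semi-computable semi-measure, invoke universality from Theorem~\ref{thm:levin}, then pass to complexities via Levin's coding theorem \ref{thm:levin2} and $C\le K$ from \eqref{kolpre}, and finish with Shannon--McMillan--Breiman), while your lower bound (the counting bound \eqref{re:compres}, SMB, and Borel--Cantelli with the $\varepsilon/2$ trimming so that $\sum_n 2^{-n\varepsilon/2}<\infty$) is complete, correct, and, worth emphasizing, needs no computability of $\pi$ at all. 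So for computable $\pi$ your argument is correct and coincides in substance with the ``restricted version'' of Brudno's theorem that the thesis proves.

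The gap concerns the statement as written, which carries no computability hypothesis on $\pi$. Your unrestricted upper bound lives only in the closing parenthesis, and as sketched it is not yet a proof: two-part coding of the non-overlapping $k$-blocks of $\ii$ against their empirical distribution gives length roughly $\frac{n}{k}H(\hat p_{k,0}(\ii))+O(2^k\log n)$, but the empirical distribution of phase-$0$ non-overlapping blocks need not converge a.e.\ to the marginal $\pi_k$, because $T_\sigma^k$ need not be ergodic even though $T_\sigma$ is; the a.e.\ limit is a random measure $Q_0$ whose only guaranteed relation to $\pi_k$ is the phase average $\frac1k\sum_{j=0}^{k-1}Q_j=\pi_k$. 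The standard repair is to encode with the best of the $k$ phases (an extra $O(\log k)$ bits) and use concavity of entropy, $\min_j H(Q_j)\le\frac1k\sum_j H(Q_j)\le H(\pi_k)$, or to use overlapping-block statistics, or to quote the a.e.\ optimality of a universal code such as Lempel--Ziv; one should also note that $\frac1k H(\pi_k)\downarrow h$ follows from stationarity and subadditivity. With that phase argument spelled out the unrestricted theorem follows; without it, what you have fully established is the general lower bound together with the computable-case upper bound, i.e.\ exactly the restricted version, not the theorem as stated.
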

\begin{thm}
Let $(\chi, T, \nu)$ be an ergodic dynamical system and
$\mathcal{P}$ be a finite measurable partition of $\chi$; then
\begin{equation}
C(x, \mathcal{P})=h_\sigma^{KS}(T_\sigma, \mathcal{P})\quad \nu-a.e.
\end{equation}
If $\mathcal{P}$ is a generating partition then,
\begin{equation}
C(x, \mathcal{P})=h_\sigma^{KS}(T_\sigma)\quad \nu-a.e.
\end{equation}

\end{thm}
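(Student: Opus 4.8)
The plan is to reduce the assertion to a statement about ergodic symbolic sources and then to invoke the Kolmogorov--Sinai theorem for the generating case. By construction $C(x,\mathcal{P})$ depends on $x$ only through its coarse-grained trajectory $\bi(x)\in\widetilde{\Omega}_p^\z$, the push-forward of $\nu$ under $x\mapsto\bi(x)$ is precisely $\nu_{\mathcal{P}}$ (since $\bi(x)^{(n)}=\ii$ exactly when $x\in\p$), and $(\widetilde{\Omega}_p^\z,T_\sigma,\nu_{\mathcal{P}})$, being a factor of $(\chi,T,\nu)$, inherits ergodicity. Hence it suffices to show that for an ergodic $p$-ary source $(\Omega_p,T_\sigma,\pi)$ with entropy rate $h=h_\pi(T_\sigma)$ one has $\tfrac1n C(\bi^{(n)})\to h$ for $\pi$-almost every $\bi$; this is the first Brudno theorem above with $p=2$ replaced by arbitrary $p$, and once it is in hand the second displayed equality follows from the Kolmogorov--Sinai theorem, which gives $h_\nu^{KS}(T_\sigma,\mathcal{P})=h_\nu^{KS}(T_\sigma)$ when $\mathcal{P}$ is generating.

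For the symbolic statement the crucial input is the Shannon--McMillan--Breiman theorem: for $\pi$-a.e. $\bi$ one has $-\tfrac1n\log\pi(\bi^{(n)})\to h$. For the upper bound, fix a rational $\varepsilon>0$ and the typical set $A_n^\varepsilon=\{\bi^{(n)}:\pi(\bi^{(n)})\ge 2^{-n(h+\varepsilon)}\}$; since $\pi$ is a probability measure $\# A_n^\varepsilon\le 2^{n(h+\varepsilon)}$, and by Shannon--McMillan--Breiman $\bi^{(n)}\in A_n^\varepsilon$ holds eventually for a.e. $\bi$. Enumerating $A_n^\varepsilon$ in a fixed computable way, $\bi^{(n)}$ is described by $n$, $\varepsilon$ and its index in the enumeration, so $C(\bi^{(n)})\le n(h+\varepsilon)+O(\log n)$, and intersecting over a sequence $\varepsilon\downarrow0$ yields $\limsup_n\tfrac1n C(\bi^{(n)})\le h$ a.e. If in addition $\pi$ is lower semi-computable there is a shorter route: $\pi$ is then dominated by the universal semi-measure $\mu$ of Theorem~\ref{thm:levin}, Levin's Coding Theorem~\ref{thm:levin2} gives $C(\bi^{(n)})\le K(\bi^{(n)})\le -\log\pi(\bi^{(n)})+O(1)$, and the upper bound is immediate from Shannon--McMillan--Breiman; this is the restricted version announced for Chapter~4.

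For the lower bound, fix rational $\varepsilon>0$ and put $B_n=\{\bi:C(\bi^{(n)})<n(h-\varepsilon)\}$ and $H_n=\{\bi:-\tfrac1n\log\pi(\bi^{(n)})<h-\varepsilon/2\}$. By property (\ref{re:compres}) at most $2^{n(h-\varepsilon)}$ strings $\bi^{(n)}$ satisfy $C(\bi^{(n)})<n(h-\varepsilon)$; those of them lying outside $H_n$ have $\pi(\bi^{(n)})\le 2^{-n(h-\varepsilon/2)}$, so $\pi(B_n\setminus H_n)\le 2^{n(h-\varepsilon)}2^{-n(h-\varepsilon/2)}=2^{-n\varepsilon/2}$, which is summable, and Borel--Cantelli gives $\pi(\limsup_n(B_n\setminus H_n))=0$. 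On the other hand $\limsup_n H_n$ is $\pi$-null because Shannon--McMillan--Breiman gives $-\tfrac1n\log\pi(\bi^{(n)})\to h>h-\varepsilon/2$ a.e. Hence $\pi(\limsup_n B_n)=0$, i.e. $C(\bi^{(n)})\ge n(h-\varepsilon)$ eventually a.e.; letting $\varepsilon\downarrow0$ along rationals gives $\liminf_n\tfrac1n C(\bi^{(n)})\ge h$ a.e. Combined with the upper bound this proves $\tfrac1n C(\bi^{(n)})\to h$ a.e., hence the first equality of the theorem, and the Kolmogorov--Sinai theorem upgrades it to the second for generating $\mathcal{P}$.

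The main difficulty is the lower bound: one cannot apply Borel--Cantelli directly to $\{C(\bi^{(n)})<n(h-\varepsilon)\}$, because the few atypically heavy cylinders it may contain need not have summable total mass and must be controlled separately through the a.e. convergence in Shannon--McMillan--Breiman; one must also assemble the countably many $\varepsilon$-dependent exceptional null sets into a single null set, so that the genuine limit — not merely $\liminf$ and $\limsup$ separately — exists off a null set, and keep track of the $O(\log n)$ overheads in the coding argument so that they vanish after division by $n$. The semi-computability argument collapses the upper bound to a one-line consequence of Levin's Coding Theorem, but only under the hypothesis that $\pi$ (equivalently the pair $(\mathcal{P},\nu)$) is lower semi-computable, which is precisely why that route proves only a restricted form of the theorem.
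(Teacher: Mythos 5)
Your reduction to the symbolic factor (push-forward of $\nu$ is $\nu_{\mathcal{P}}$, factors of ergodic systems are ergodic), the Kolmogorov--Sinai step for the generating case, and your lower bound are all sound: the counting bound (\ref{re:compres}) combined with Borel--Cantelli and Shannon--McMillan--Breiman is exactly the standard, easy half of Brudno's theorem. For orientation, note that the thesis does not prove this statement at all --- it is quoted from Brudno --- so the comparison is with the known proof rather than with an argument in the text.

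The genuine gap is in your upper bound. You describe $\bi^{(n)}$ by its index in an enumeration of the typical set $A_n^\varepsilon=\{\bi^{(n)}:\pi(\bi^{(n)})\ge 2^{-n(h+\varepsilon)}\}$, but this set is defined through the values of $\pi=\nu_{\mathcal{P}}$, and the theorem carries no effectiveness hypothesis on $\nu$ or $\mathcal{P}$: for a general ergodic measure no algorithm can, given $n$ and $\varepsilon$, list $A_n^\varepsilon$ or even decide membership in it, so ``its index in the enumeration'' is not a description that a fixed universal machine can decode and it yields no bound on $C(\bi^{(n)})$. Your alternative route through Levin's Coding Theorem \ref{thm:levin2} has the same limitation, as you acknowledge: it needs $\pi$ (lower semi-)computable, i.e.\ it proves only the restricted version that the thesis itself establishes in Chapter~4 (Theorem \ref{GB} and the theorem following it), not the present statement. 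The missing idea --- and the real content of Brudno's upper bound --- is a coding scheme that makes no reference to $\pi$: for fixed $k$, describe $\bi^{(n)}$ by a rational approximation $q$ of its empirical $k$-block distribution together with its index inside the set of $n$-strings with that empirical type; this set is computable from $(n,k,q)$ and has cardinality of order $2^{\frac{n}{k}H(q)}$, so $C(\bi^{(n)})\le \frac{n}{k}H(q)+O(\log n)$, and the pointwise ergodic theorem sends $q$ a.e.\ to the $k$-block marginal of $\pi$, whose normalized entropy decreases to $h$ as $k\to\infty$ (equivalently one may invoke a universal code such as Lempel--Ziv). With that replacement your architecture goes through; as written, it is the upper bound --- not the lower bound, which you single out as the main difficulty --- where the proof fails in the stated generality.
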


\section{Classical Spin Chains and Algebraic Formulation}
In many cases, it proves useful  to investigate
classical dynamical systems using algebraic tools. Namely,
instead of  working with phase-space trajectories, one considers
observables (suitable functions over the phase-space) and their
time-evolution. In  other words, to a given dynamical system $(\chi,
T, \nu)$, where $\chi$ is a compact metric space, one can associate
a $C^*$-algebraic triplet $(C(\chi), \Theta_T, \omega_\nu)$
\footnote{$C(\chi)$ is the Banach algebra $*$-algebra (with
identity) of continuous complex value functions on $\chi$ endowed
with the uniform topology given by the $\sup$ norm.} and a von
Neumann triplet $(\mathds{L}_\nu^\infty(\chi), \Theta_T,
\omega_\nu)$ where state $\omega_\nu$ and automorphism $\Theta_T$
are defined as follows:
\begin{equation}
\omega_\nu(f)=\int_\chi d\nu(x) f(x),
\end{equation}
\begin{equation}
\Theta_T(f)=f\circ T ,
\end{equation}
 for all $ f\in C(\chi)$ or  $\mathds{L}_\nu^\infty(\chi)$.

\begin{example}(Koopmann-von Numann formalism\footnote{The previous one  is a technique which allows one to reformulate classical dynamical systems in terms of Hilbert spaces and unitary time-evolutions, as one does with quantum mechanical systems where one encounters the following basic concepts. })
Let $(\chi, T, \nu)$ be a dynamical system. The Koopmann-von Neumann
unitary operator $U_T$ is defined as follows
$$(U_T\psi)(x)=\psi(Tx),$$
for any $\psi\in\mathds{L}_\nu^2(\chi)$ and $x\in\chi$. Let define $<f|g>=\int_\chi \overline{f(x)} g(x) dx$ be the scalar product of any $f, g\in\mathds{L}_\nu^2(\chi)$. The
automorphism $\Theta_T$ is implemented by $U_T$ as follows
\begin{eqnarray*}
<x|U_T f U_T^\dag \psi>&=&f(Tx)<Tx|U_T^\dag
\psi>\\
&=& f(Tx)<T^{-1}\circ T x|\psi>)\\
&=&<x|\Theta_T(f)\psi>,
\end{eqnarray*}
 for any $f\in
C(\chi)$. Of course, the state $\omega_\nu $ is defined like  the above
definition.
\end{example}
Now, we introduce some definitions here as follows
\begin{defn}
A positive operator $\rho$ of  Hilbert space $\mathbb{H}$ is called density matrix if ${\rm Tr}(\rho)=1$.
\end{defn}
\begin{defn}
For a given density matrix $\rho$ with spectral decomposition $\sum_i \lambda_i |\lambda_i><\lambda_i|$, the von Neumann entropy is defined as follows
$$S(\rho)=-{\rm Tr}(\rho \log\rho)=-\sum_i \lambda_i \log\lambda_i.$$
In addition, relative entropy for given two density matrices $\rho$ and $\sigma$ is given by
$$S(\rho, \sigma)={\rm Tr}(\rho(\log\rho-\log\sigma)).$$
\end{defn}

It is useful to look at symbolic models of classical dynamical systems as classical spin chains.

A classical spin chin is the mathematical way of modeling a classical ferromagnet as a one-dimensional lattice $\mathbb{Z}$ whose sites support identical classical spins capable of assuming $p$ possible states. In this case, to each site
corresponds an algebra of $p\times p$ diagonal matrices over
$\com$ which is denoted by $D_p(\com)$.

The diagonal matrices $P_j$  whose
elements are all zero but for the $jj$-th entry which is equal to $1$, constitute a set of generating   projections $P_j$, $1\leq j\leq p$, for
the algebra $D_p(\com)$. Thus,  an element $D$ of  $D_p(\com)$ is of the
form $\sum_{j=1}^p d_j P_j$, where $d_j$'s are complex numbers.
 The spin algebra of $n$ particles
located at the lattice sites $-n\leq j\leq n$   will be denoted by
$\mathcal{D}_{[-n, n]}:=\otimes_{j=-n}^{n}(D_p(\com))_j$ where
$(D_p(\com))_j=D_p(\com)$, for each $-n\leq j\leq n$. Indeed, each
element of that algebra is a $p^n\times p^n$ matrix of the
form
$$D^{(2n+1)}_{[-n,n]}:=\sum_{\ii\in\Omega_p^{2n+1}} d(\ii)P_{\ii}^{[-n, n]},\quad\quad \ii=i_{-n}\ldots i_n,$$
where $d(\ii)$'s are complex numbers and $P_{\ii}^{[-n,
n]}:=P_{i_{-n}}\otimes P_{i_{-n+1}} \otimes \ldots \otimes P_{i_n}$
are projectors.

  Let us consider  the symbolic dynamical system $(\Omega_p^{\zz},
T_\sigma, \nu)$, that is a shift dynamical system over  two-sided
infinite sequences of symbols from an alphabet with $p$ elements.
The $C^*$-algebraic    triplet $(\mathcal{D}_{\zz}, \Theta_\sigma,
\omega_\nu)$ associated  with  the symbolic dynamical system
$(\Omega_p^{\zz}, T_\sigma, \nu)$ as outlined before is indeed a classical spin chain.
\begin{itemize}
    \item Let us define the commutative algebra $\mathcal{D}_{\zz}:=\overline{\bigcup_{n\in\n}\mathcal{D}_{[-n,
    n]}}^{\text{uniform}}$,  inductively extended from local
    algebras by a method which is known as $C^*$-inductive limit \cite{takesak}.
    \item  $\Theta_\sigma$ is an  algebraic automorphisms
    \begin{equation}
    \Theta_\sigma(\mathds{1}_{-n-1]}\otimes A
    \otimes\mathds{1}_{[n+1} )=\mathds{1}_{-n]}\otimes A
    \otimes\mathds{1}_{[n+2},
    \end{equation}
    for each $A\in \mathcal{D}_{[-n, n]}$. Therefore,
    \begin{equation}
    \Theta_\sigma(\mathcal{D}_{[-n, n]})=\mathcal{D}_{[-n+1, n+1]}.
    \end{equation}
    Indeed, $\mathcal{D}_{[-n, n]}$ is embedded in
    $\mathcal{D}_{\zz}$ by the map $A \mapsto \mathds{1}_{-n-1]}\otimes A
    \otimes\mathds{1}_{[n+1}$, for $A\in \mathcal{D}_{[-n, n]}$, and from now on,
     we identify $\mathcal{D}_{[-n, n]}$ with
     $\mathds{1}_{-n-1]}\otimes \mathcal{D}_{[-n, n]}
    \otimes\mathds{1}_{[n+1}$.
    \item    Let us consider the local density matrix
    \begin{equation}
    \rho_\nu^{(n)}(\ii):=\sum_{\ion} \nu(\ii)P_{\ii}^{[0, n-1]},
    \end{equation}
    on $\mathcal{D}_{[-n, n]}$. Then, the global density matrix  $\rho_\nu$ is defined as $\lim_{n\to\infty} \rho_\nu^{(n)}$. Furthermore,
     a global state  is
    defined  by
    \begin{equation}
    \omega_\nu(A):= {\rm Tr}_{\mathcal{D}_{[-n, n]}}(A\rho^{(n)}_\nu) \quad \forall A\in\mathcal{D}_{[-n, n]},
    \end{equation}
\end{itemize}
With the notations of   (\ref{sec:quantum spin}), the
KS-entropy for classical spin dynamics computes by the following relation:
\begin{equation}
h^{KS}_\nu(T)=s(\omega):=\lim_{n\to\infty}\frac{1}{n}
S(\omega\upharpoonright \mathcal{D}_{[-n,n]}),
\end{equation}
 where $S(\omega\upharpoonright \mathcal{D}_{[-n,n]})=S(\rho_\nu^{(n)})$ and $s(\omega)$ is called the von Neumann entropy rate.

\section{Quantum Dynamical Systems} Quantum dynamical systems, are in general introduced as
non-commutative algebraic triplets.
\begin{defn}\label{de:quantum dynamical system}
A quantum dynamical system is a triplet $(\A, \Theta, \omega)$ where
$\A$ is a $C^*$-algebra with identity $\mathds{1}$, and
\begin{itemize}
    \item the dynamics $\Theta$ corresponds to a group of automorphisms $\Theta_t: \A\to\A$, $t\in G$, which $G=\rr$
    or $G=\zz $, and, for
    any $t, s\in G$,  $\Theta_t \circ \Theta_s=\Theta_s\circ \Theta_t=\Theta_{t+s}$.
    \item The state $\omega:\A\to\cc$ is a normalized, positive,
    $\Theta$-invariant expectation, namely
    $\omega\circ\Theta_t=\omega$ for all $t\in G$.
\end{itemize}
\end{defn}
Classical spin chains are particular cases of quantum dynamical
systems, where their associated  $C^*$-algebras are commutative.

\subsection{Quantum Spin Chains}\label{sec:quantum spin}
A quantum spin chain is the $C^*$-algebra that arises from the norm
completion of local quantum spin algebras of the tensor product form
\begin{equation}
\label{algtensor} M_{[-n,n]}=\underbrace{M_d(\com)\otimes
M_d(\com)\otimes\cdots M_d(\com)}_{2n+1\quad times} =M^{\otimes
2n+1}_d=M_{d^{2n+1}}(\com)\ .
\end{equation}
The interpretation is straightforward: one is dealing with a
one-dimensional lattice each site of which supports a $d$-level
quantum system (or $d$-dimensional spin). In the norm-topology (the
norm is the one which coincides with the standard matrix-norm on
each local algebra) the limit $n\to+\infty$ of the nested sequence
$M_{[-n,n]}$ gives rise to the norm-complete infinite dimensional
algebra
\begin{equation}
\label{infalg} \mathcal{M}:=\lim_p M_{[-p,p-1]}\ ,
\end{equation}
that describes an infinite quantum spin lattice, that is a quantum
spin chain. In the following we shall consider $d=2$, namely a chain
of $2$-level quantum spins, or spin $1/2$ particles, or in the
modern jargon, qubits.

Any local spin operator, say $A\in M_{[-n,n]}$, is naturally
embedded into $\mathcal{M}$ as
\begin{equation}
\label{embedding} M_{[-n,n]}\ni A\mapsto \one_{-n-1]}\otimes
A\otimes \one_{[n+1}\in\mathcal{M}\ ,
\end{equation}
where $\one_{-n-1]}$ stands for the infinite tensor products of
$2\times 2$ identity matrices up to site $-n-1$, while $\one_{[n+1}$
stands for the infinite tensor product of infinitely many identity
matrices from site $n+1$ onwards. In this way, the local algebras
are sub-algebras of the infinite one sharing  a same identity
operator.

The simplest dynamics on such quantum spin chains is given by the
right shift
\begin{equation}
\label{shift} \Theta[M_{[-n,n]}]=M_{[-n+1,n+1]}\ ,\quad
\Theta[\one_{-n-1]}\otimes A\otimes \one_{[n+1}]=\one_{-n]}\otimes
A\otimes \one_{[n+2}\ .
\end{equation}
Any state $\omega$ on $\mathcal{M}$ is a positive, normalized linear
functional whose restrictions to the local sub-algebras are density
matrices $\rho_{[-n,n]}$, namely positive matrices in
$M_{[-n,n]}(\com)$ such that ${\rm Tr}_{[-n,n]}\rho_{[-n,n]}=1$:
\begin{equation}
\label{states} M_{[-n,n]}\ni A\mapsto \omega(A)={\rm
Tr}_{[-n,n]}\Big(\rho^{(n)}\,A\Big)\ .
\end{equation}
The degree of mixedness of such density matrices is measured by the
von Neumann entropy
\begin{equation}
\label{vNent} S(\rho_{[-n,n]})=-{\rm
Tr}_{[-n,n]}(\rho_{[-n,n]}\log\rho_{[-n,n]})=-\sum_jr_{[-n,n]}^j\log
r_{[-n,n]}^j\ ,
\end{equation}
where $0\leq r_{[-n,n]}^j\leq 1$, $\sum_jr_{[-n,n]}^j=1$, are the
eigenvalues of $\rho_{[-n,n]}$. Notice that the von Neumann entropy
is nothing but the Shannon entropy of the spectrum of
$\rho_{[-n,n]}$ which indeed amounts to a discrete probability
distribution.

In the above expressions ${\rm Tr}_{[-n,n]}$ stands for the trace
computed with respect to any orthonormal basis of the Hilbert space
$\h_{[-n,n]}=(\com^2)^{\otimes 2n+1}$ onto which $A$ linearly acts.
Let $\vert i\rangle\in\com^2$, $i=0,1$, be a chosen orthonormal
basis in $\com^2$; then, a natural orthonormal basis in
$\h_{[-n,n]}$ will consist of tensor products of single spin basis
vectors:
\begin{equation}
\label{ONBn} \vert\bi_{[-n,n]}\rangle=\bigotimes_{j=-n}^n\vert
i_j\rangle=\vert i_{-n}i_{-n+1}\cdots i_n\rangle\ ,
\end{equation}
namely its elements are indexed by binary strings
$\bi_{[-n,n]}\in\{0,1\}^{2n+1}$. By going to the limit of an
infinite chain, a corresponding representation Hilbert space is
generated by orthonormal vectors again denoted by
$\ket{\bi_{[-n,n]}}$ where $n$ arbitrarily varies and every
$\bi_{[-n,n]}$ is now a binary sequence in $\{0,1\}^\z$ where all
$i_k\notin[-n,n]$ are chosen equal to $0$. We shall denote by $\bi$
such binary strings, by $\bo$ their set and by $\ket{\bi}$ the
corresponding orthonormal vectors which form the so-called standard
basis of $\h$.
\begin{rem}
\label{remark1} While all representations of finite size quantum
spin chains are unitarily equivalent to the Fock representation
\cite{Strocchi}, what we are considering here is just one of the
infinitely many inequivalent Hilbert space representations for the
genuinely infinite quantum spin chain. Indeed, the representation
Hilbert space we are considering is a particular case of the
so-called GNS construction \cite{Strocchi}: it is created acting
with finitely many spin flips $\vert 0\rangle\mapsto\vert 1\rangle$
on the GNS cyclic state represented by all spins being in the state
$\vert 0\rangle$. By choosing $i_k=1$ outside any finite interval
$[-n,n]$ one gets another representation of the same algebra
$\mathcal{M}$. However, the new representation is inequivalent to
the previous one as there is no unitary operator mapping one Hilbert
space into the other. Such a unitary operator should indeed
flip infinitely many spins.
\end{rem}
\medskip

From~(\ref{states}), a compatibility relation immediately follows;
namely,
$$
\omega(A\otimes \one_{n})={\rm Tr}_{[0,n]}\rho_{[0,n]}(A\otimes
\one_n)=\omega(A)={\rm Tr}_{[0,n-1]}(\rho_{[0,n-1]}A)\qquad \forall
A\in M_{2^n}(\com) \ ,
$$
so that
\begin{equation}
\label{compatibility} {\rm Tr}_{n}\rho_{[0,n]}=\rho_{[0,n-1]}\ .
\end{equation}
On the other hand, if
$$
\omega(\one_{0}\otimes A)={\rm Tr}_{[0,n]}(\rho_{[0,n]}\one_{0}
\otimes A)=\omega(A)={\rm Tr}_{[0,n-1]}(\rho_{[0,n-1]}A)\qquad
\forall A\in M_{2^n}(\com)\ ,
$$
namely, if $\omega$ is a transationally invariant state, then
\begin{equation}
\label{translationinvariance} \rho_{[0,n-1]}={\rm
Tr}_{0}\rho_{[0,n]}\ ,\quad \forall n\ .
\end{equation}
To any translationally invariant state $\omega$ on a quantum spin
chain there is associated a well-defined von Neumann entropy
rate (see for instance \cite{Benattib}):
\begin{equation}
\label{vNentrate} s(\omega)=\lim_{n\to+\infty}\frac{1}{n}
S(\rho_{[0,n-1]})=-\lim_{n\to+\infty}
 \frac{1}{n} {\rm Tr}_{[0,n-1]}\,\Big(\rho_{[0,n-1]}\,\log\rho_{[0,n-1]}\Big)\ .
\end{equation}

\subsection{AF entropy}

The AF or AFL entropy  developed by Alicki, Fannes and Lindblad \cite{Alicki,
lindblad} is  an extension of the concept of KS entropy  in classical
dynamical systems to  discrete-time non-commutative quantum dynamical systems.
The construction of the AF entropy is based on the notion of quantum partitions of unity. These later together with the dynamics give rise, similarly to the classical case, to quantum symbolic models of quantum dynamical systems. By means of the von Neumann entropy, one then defines the AF entropy of a quantum dynamical system as  the optimal von Neumann entropy rate over all its quantum symbolic models.  Let $(\A, \Theta, \omega)$ be a quantum dynamical system.
\begin{defn}
A finite collection of operators
$\mathcal{Z}=\{Z_i\}_{i=1}^{|\mathcal{Z}|}$, where $Z_i\in\A$ is
called an operational partition of unity (OPU) if
\begin{equation}
 \sum_{i=1}^{|\mathcal{Z}|} Z_i^\dag Z_i = \mathds{1},
\end{equation}
 where $|\mathcal{Z}|$ is the cardinality of
$\mathcal{Z}$.
\end{defn}
\begin{itemize}
 \item The refinement of  two partitions $\mathcal{Z}_1=\{Z_{1i}
\}_{i=1}^{|\mathcal{Z}_1|}$ and $\mathcal{Z}_2=\{Z_{2j}
\}_{j=1}^{|\mathcal{Z}_2|}$, is defined naturally  by
$$\mathcal{Z}_1\circ\mathcal{Z}_2:=\{Z_{1i} Z_{2j}\}_{i,j=1}^{|\mathcal{Z}_1|
|\mathcal{Z}_2|},$$ which  is also an OPU. Moreover, time-evaluation
of an OPU $\mathcal{Z}= \{Z_i \}_{i=1}^{|\mathcal{Z}|}$ at time
$t=k\in\z$ under the dynamics $\Theta$ is OPU which  is defined by
\begin{equation}
\mathcal{Z}:=\Theta^k(\mathcal{Z})=\{\Theta^k(Z_i)
\}_{i=1}^{|\mathcal{Z}|}.
\end{equation}
\item Let $Z_{\ii}:= \Theta^{n-1}(Z_{n-1}) \ldots
\Theta(Z_{i_1})Z_{i_0}$. Clearly  the set
$\mathcal{Z}^{(n)}:=\{Z_{\ii}\}_{\ii\in
\Omega_{|\mathcal{Z}|}^{(n)}}$ is again an OPU. Now, for
$\mathcal{Z}=\{Z_j \}_{j=1}^{|\mathcal{Z}|}$, one can define a
$|\mathcal{Z}|\times |\mathcal{Z}|$ density matrix
$\rho[\mathcal{Z}]$ as follows
\begin{equation}
M_{|\mathcal{Z}|}(\com)\ni
\rho[\mathcal{Z}]:=\sum_{i,j=1}^{|\mathcal{Z}|}|z_i><z_j|\,\,
\omega(Z^\dag_j Z_i),
 \end{equation}
  where $\{|z_i>
\}_{i=1}^{|\mathcal{Z}|}$ is a fixed orthonormal basis in the finite
dimensional Hilbert space $\com^{|\mathcal{Z}|}$. Moreover, the density
matrix assiocated with $\mathcal{Z}^{(n)}$ has the
form
\begin{equation}
M_{|\mathcal{Z}|}(\com)^{\otimes n}\ni
\rho[\mathcal{Z}^{(n)}]:=\sum_{\ii,\jj\in
\Omega_{|\mathcal{Z}|}^{(n)} }^{n}|z_{\ii}><z_{\jj}| \omega(Z^\dag_{\jj}
Z_{\ii}),
\end{equation}
where
$$|z_{\ii}>:=|z_{i_1}> \otimes |z_{i_2}> \otimes \ldots |z_{i_n}>.$$
\end{itemize}
The translation invariance  $\omega \circ
\Theta=\omega$ and the compatibility relation are expressed by
\begin{equation}
Tr_{\{1\}}(\rho[\mathcal{Z}^{n+1}])= Tr_{\{n+1\}}\big(
\rho[\mathcal{Z}^{n+1}]\big)=\rho[\mathcal{Z}^{(n)}].
\end{equation}
 Thus the
family $\rho[\mathcal{Z}^{(n)}]$ denoted by
$\rho^{(n)}$ in section \ref{sec:quantum spin}, $n\in\n$ gives
a state $\omega_\z$ over $\A_\z:=\mathcal
{M}$, where  $\mathcal{M}$ is defined in  \ref{sec:quantum spin}. For a given quantum dynamical
system $(\A, \Theta, \omega)$ and a chosen OPU in a suitable subalgebra $\A_0$,  the AF-entropy is constructed over their
associated quantum symbolic system $(\A_\z, \Theta_\sigma,
\omega_\z)$, or quantum spin dynamics with right shift dynamics, together with given OPU. We
restrict ourselves to  subalgebra $\A_0$ because in general  the
mean von Neumann entropy of $(\A_\z, \Theta_\sigma, \omega_\z)$,
 with  the translation invariance  $\Theta$, may  not exist.
\begin{defn}
Let  $\A_0\subseteq\A$ be a $\Theta$-invariant subalgebra  and let $\mathcal{Z}\subseteq \A_0$ be an OPU. Let us define
\begin{equation}\label{AFentropyrate}
h_\omega^{AFL}(\Theta,
\mathcal{Z}):=\limsup_{n\to\infty}\frac{1}{n}S\big(
\rho[\mathcal{Z}^{(n)}]\big),
\end{equation}
where $S\big( \rho[\mathcal{Z}^{(n)}]\big)$ is the von Neumann
entropy of  the density matrix associated with the OPU
$\mathcal{Z}^{(n)}$. The AF entropy of the quantum dynamical system $(\A,
\Theta, \omega)$ is defined by
\begin{equation}
h_\omega^{AFL}(\Theta):=\sup_{\mathcal{Z}\subseteq \A_0}
h_\omega^{AFL}(\Theta, \mathcal{Z}).
\end{equation}
\end{defn}

\begin{rem}
\label{rem2} The $\limsup$ in~(\ref{AFentropyrate}) has to be used
for the sequence of density matrices $\rho[\mathcal{Z}^{(n)}]$ is
not a stationary one~\cite{Alicki,AFbook}. In fact, while
consistency holds as tracing $\rho[\mathcal{Z}^{(n)}]$ over the
$n$-th factor yields the density matrix corresponding to the first
$n-1$ factors, ${\rm
Tr}_n\rho[\mathcal{Z}^{(n)}]=\rho[\mathcal{Z}^{(n-1)}]$,
stationarity does not; indeed, in general, ${\rm
Tr}_1\rho[\mathcal{Z}^{(n)}]\neq \rho[\mathcal{Z}^{(n-1)}]$.
\end{rem}

\begin{example}
As a concrete example consider a set of $4$ matrix units $U_{ij}\in
M_2(\mathbb{C})$ such that $U^\dag_{ij}=U_{ji}$,
$U_{ij}U_{k\ell}=\delta_{jk}U_{i\ell}$ and $\sum_{i=1}^2U_{ii}=2$.
Dividing them by $\sqrt{2}$ one gets a partition of unit
$$
\mathcal{U}=\left\{\frac{U_{ij}}{\sqrt{2}}\right\}_{i,j=1,2}\in
M_2(\mathbb{C}) \ ,
$$
the simplest choice being
$$
U_{11}=\begin{pmatrix}1&0\cr0&0\end{pmatrix}\ ,\
U_{22}=\begin{pmatrix}0&0\cr0&1\end{pmatrix} \ ,\
U_{12}=U^\dag_{21}=\begin{pmatrix}0&1\cr0&0\end{pmatrix}\ .
$$
The refined partition that results after $n$ applications of the
right shift is
\begin{equation}
\label{nstepPU}
\mathcal{U}^{(n)}=\left\{\frac{U_{i^{(n)}j^{(n)}}}{2^{n/2}}\right\}\
,\ U_{i^{(n)}j^{(n)}}=U_{i_0j_0}\otimes U_{i_1j_1}\otimes \cdots
U_{i_{n-1}j_{n-1}}\in M_2^{\otimes n}(\mathbb{C})=M_{[0,n-1]}\ .
\end{equation}
The associated density matrices $\rho[\mathcal{U}^{(n)}]\in
M_{4^n}(\mathbb{C})$ have entries and von Neumann entropy given by
\begin{eqnarray}
\nonumber \frac{1}{2^n}{\rm Tr}\left(\rho^{(n)}\,
U^\dag_{i^{(n)}j^{(n)}}U_{k^{(n)}\ell^{(n)}}\right) &=&
\frac{1}{2^n}{\rm Tr}\left(\rho^{(n)}\,
U_{j_0i_0}U_{k_0\ell_0}\otimes
U_{j_1i_1}U_{k_1\ell_1}\otimes\cdots\right)
\\
\label{nsteprho} &=& \frac{1}{2^n}{\rm Tr}\left(\rho^{(n)}\,
\delta_{i_0k_0}\, U_{j_0\ell_0}\otimes
\delta_{i_1k_1}\,U_{j_1\ell_1}\otimes\cdots\right)
\\&=&
\frac{\mathbf{1}}{2^n}\otimes \rho^{(n)}\\
\label{nstepvNent}
S\Big(\rho[\mathcal{U}^{(n)}]\Big)&=&S(\rho^{(n)})\,+\,n\ .
\end{eqnarray}
The last equality in~(\ref{nsteprho}) comes from the fact that ${\rm
Tr}\Big(\rho^{(n)}U_{i^{(n)}j^{(n)}}\Big)$ are the matrix elements
of $\rho^{(n)}$ with respect to the orthonormal basis defined by the
choice of matrix units. Entropy rate and the Alicki-Fannes entropy
then result
\begin{equation}
\label{AFshiftent}
h^{AF}_\omega(\Theta)=h^{AF}_\omega(\Theta,\mathcal{U})=\limsup_{n\to\infty}\frac{1}{n}S(\rho[\mathcal{U}^{(n)}])=s(\omega)\,+\,1\
.
\end{equation}
\end{example}
\textbf{Properties of the AF entropy }
\begin{itemize}
        \item When a quantum dynamical system $(\A, \Theta,
    \omega)$ reduces to a classical dynamical system $(\chi, T,
    \nu)$,  the AF entropy of the triplet  $(\mathcal{M}, \Theta_T,
    \omega_\nu)$ is
    \begin{equation}
  h_\omega^{AF}(\Theta_T, \mathcal{M})=h_\nu^{KS}(T),
  \end{equation}
   where
    $\mathcal{M}:=L^\infty_\mu(\chi)$ and $h_\nu^{KS}(T)$ is the Kolmogorov-Sinai entropy.
    \item Let $(\A_\z, \omega)$ be a quantum spin chain with
    single site matrix algebra $M_d(\com)$. The AF entropy with
    respect to every local subalgebra $\A_{[p, q]}\subseteq
    \A_0$ is given by
    \begin{equation}
     h_\omega^{AF}(\Theta_\sigma)= s(\omega)+\log{d},
    \end{equation}
    where the dynamics is the right-shift $\Theta_\sigma$ over
    $\A_\z$, and $s(\omega)$ is the mean von Neumann entropy of the translation-invariant $\omega$ (see  Section 2.3.4).
\end{itemize}

\subsection{CNT Entropy}
 The CNT entropy introduced by Connes, Naranhofer and Thiring \cite{CNT} is a generalization of
the KS-entropy to quantum dynamical systems which is based on convex
decompositions of the state $\omega$.

\begin{defn}
Let $M$ and $\A$
be two $C^*$-algebras. A linear map $\gamma:M\to\A$ is called
completely positive if $\gamma\otimes id_n: M\otimes Mat_n(\com)\to
\A\otimes Mat_n(\com)$ is a positive operator for each $n\in\n$.
\end{defn}
Each state $\omega$ is the form  $\omega(A)={\rm Tr}(A \rho)$ for a unique positive  element, or density matrix, $\rho$. The entropy of the state $\omega$ is defined the von Neumann entropy of the associated density matrix.

Let us consider a convex
decomposition
\begin{equation}\label{cnt1}
\omega=\sum_{\ii\in I^{(n)}} \lambda_{\ii} \omega_{\ii}, \quad
I^{(n)}= I_1 \times I_2 \times \ldots \times I_n,
\end{equation}
 where $\lambda_{\ii}$ are
positive weights and $I_j$'s are generic index sets. The marginal density matrices arising from this
decomposition is denoted by $\omega=\sum_{i_j\in I_j}
\lambda^j_{i_j} \omega_{i_j}^j, j=1, 2, \ldots, n$, where
\begin{equation}\label{cntd}
\omega_{i_j}^j=\sum_{\ii, i_j \,\text{fixed}}
\frac{\lambda_{\ii}}{\lambda^j_{i_j}}\omega_{\ii}, \,\,\,
\lambda^j_{i_j}=\sum_{\ii, i_j \,\text{fixed}} \lambda_{\ii}.
\end{equation}
 Let
$\Lambda^{(n)}=\{\lambda_{\ii}^{(n)}   \}$ and
$\Lambda_j=\{\lambda_{i_j}^{j}   \}$ be probability distributions
associated with the scaler products in (~\ref{cnt1}) and (~\ref{cntd}).
\begin{defn}
Let $\mathcal{A}$ be a $C^*$-algebra  endowed with a state $\omega$. Let $\gamma_i:
M_i\subset\A$, $i=1, 2, \ldots, n$ be CPU
maps from finite dimensional $C^*$-algebras into $\A$. Their entropy
with respect to $\omega$ is:
\begin{eqnarray*}
& &H_\omega(\gamma_1, \gamma_2, \ldots, \gamma_n):=\\
&=&
 \sup_{\omega=\sum_{\ii}\lambda_{\ii}\omega_{\ii}} \left\{
H(\Lambda^{(n)})-
 \sum_{j=1}^n H(\Lambda_j)+ \sum_{j=1}^n\sum_{i_j\in I_j} \lambda^j_{i_j}
  S(\omega^j_{i_j} \circ\gamma_j,
  \omega\circ\gamma_j)\right\},
\end{eqnarray*}
where $\omega\circ\gamma_j$ is a state over $M$ and $H$ is the Shannon entropy.

\end{defn}
The CNT entropy rate for a completely positive map $\gamma:
\textbf{M}\longmapsto \A$ where $\textbf{M}$ is a finite dimensional
$C^*$-subalgebra of $\A$, is defined as follows
\begin{equation}
 h^{CNT}_\omega(\Theta, \gamma)=\lim_{n\to\infty} \frac{1}{n}
H_\omega(\gamma, \Theta\circ\gamma, \ldots,
\Theta^{n-1}\circ\gamma).
\end{equation}
 The exsitece of the above limit is shown in \cite{CNT}. The CNT
dynamical entropy is defined by
\begin{equation}
h_\omega^{CNT}(\Theta)=\sup_\gamma h_\omega^{CNT}(\Theta, \gamma).
\end{equation}

It is proved in \cite{CNT} that in $d$-level quantum spin chains
with shift dynamics,
\begin{equation}
h_\omega^{CNT}(\Theta)=s(\omega).
\end{equation}

\begin{example}
Let us consider the quantum spin chain $(\mathcal{M}, \Theta_\sigma, \omega)$ with right  shift dynamics, where the state $\omega$ is defined using the density matrix $\rho^{(n)}=\underbrace{\rho\otimes\cdots\otimes\rho}_{n\,\, \text{times}}$. Then, we have
$$h_\omega^{CNT}(\Theta)=s(\omega)=\lim_{n\to\infty}\frac{S(\rho^{(n)})}{n}=\lim_{n\to\infty}\frac{n S(\rho)}{n}=S(\rho).$$
\end{example}
\subsection{Relation Between CNT and AF Entropies in Quantum Spin Chains }
In this section we show that from physical point of view
 in $2$-level quantum spin chains with shift dynamics
$h^{AF}_\omega(\Theta)=h^{CNT}_\omega(\Theta)+ 1$ \cite{Benatti1}.

Consider a two level spin chain $\mathcal{M}_2$ where
$h_\omega^{CNT}(\Theta)=s(\omega)$ and
$h_\omega^{AFL}(\Theta_\sigma)= s(\omega)+1$ The origin of the
difference by $1=\log 2$ between the AF-entropy and the entropy rate
(which is equal to the CNT-entropy) lies in that the AF-entropy accounts
for measurement-like disturbances on the state of the quantum chain.
In quantum mechanics generic measurement processes on a system in a
state described by density matrix $\rho$ are identified by
partitions of unity $\mathcal{Z}=\{Z_i\}$ and the state is
changed by the measurement process as follows:
\begin{equation}
\label{measure} \rho\mapsto\sum_iZ_i\,\rho\,Z^\dag_i\ .
\end{equation}
Suppose
\begin{equation}
\label{purification0} M_{2^n}(\mathbb{C})\ni\rho^{(n)}=\sum_i
r_i^{(n)}\,\vert r_i^{(n)} \rangle\langle r_i^{(n)} \vert
\end{equation}
is the spectral decomposition of a local state for $n$ qubits
described by the local algebra $M_{[0,n-1]}$; any such mixed state
can be purified, that is transformed into a projector, by coupling
$M_{[0,n-1]}$ to itself and by doubling $\rho^{(n)}:=\rho_{[0,n-1]}$
into
\begin{equation}
\label{purification1}
\com^{2^n}\otimes\com^{2^n}=\com^{4^n}\ni\vert\sqrt{\rho^{(n)}}\rangle=\sum_i\sqrt{r_i^{(n)}}\,\vert
r_i^{(n)}\rangle\otimes \vert r_i^{(n)}\rangle\ .
\end{equation}
Given the refined partition of unity $\mathcal{U}^{(n)}$
in~(\ref{nstepPU}), one further amplifies the Hilbert space from
$\mathbb{C}^{4^n}$ to $\mathbb{C}^{4^n}\otimes\mathbb{C}^{4^n}$ and
constructs the following vector state
\begin{equation}
\label{purification}
\mathbb{C}^{4^n}\otimes\mathbb{C}^{4^n}\ni\vert\Psi[\mathcal{U}^{(n)}]\rangle
=\sum_i\sum_{(k^{(n)}\ell^{(n)})} \sqrt{r^{(n)}_i}\,
U_{k^{(n)}\ell^{(n)}}\vert r^{(n)}_i\rangle\otimes\vert
r^{(n)}_i\rangle\otimes\vert k^{(n)}\ell^{(n)}\rangle\  ,
\end{equation}
where the vectors $\vert k^{(n)}\ell^{(n)}\rangle$ indexed by pairs
of binary strings in $\Omega_2^n$ form an auxiliary orthonormal
basis in $\mathbb{C}^{4^n}$ of cardinality $2^n\times 2^n$.

One thus sees that $\vert\Psi[\mathcal{U}^{(n)}]\rangle$ is the
vector state of a three-partite system consisting of the  $n$
qubits, system $I$, a copy of the latter, system $II$, and a copy of
the first two, system $III$. From the projection
$P=\vert\Psi[\mathcal{U}^{(n)}]\rangle\langle\Psi[\mathcal{U}^{(n)}]\vert$,
by tracing over the first two systems, respectively over the last
one, one obtains the following marginal states on
$M_{[0,n-1]}\otimes M_{[0,n-1]}$,

\begin{eqnarray}
\label{marg1}{\rm Tr}_{I,II}(P)&=&\rho[\mathcal{U}^{(n)}]\ ,\qquad\hbox{respectively}\\
\label{marg2} {\rm
Tr}_{III}(P)&=&\sum_{(k^{(n)}\ell^{(n)})}U_{k^{(n)}\ell^{(n)}}\otimes
1\,
\vert\sqrt{\rho^{(n)}}\rangle\langle\sqrt{\rho^{(n)}}\vert\,U^\dag_{k^{(n)}\ell^{(n)}}\otimes
1 =R[\mathcal{U}^{(n)}]\ .
\end{eqnarray}
Since the latter states are marginal density matrices of a pure
state, they have the same spectrum and thus the same von Neumann
entropy (see for instance \cite{Benattib})
$$
S\Big(\rho[\mathcal{U}^{(n)}]\Big)=S\Big(R[\mathcal{U}^{(n)}]\Big)
=S(\rho^{(n)})\,+\,n\ .
$$
Thence, the entropy associated to $\omega$ and to the partition of
unity $\mathcal{U}^{(n)}$, that is $\rho[\mathcal{U}^{(n)}]$, is also
the entropy of the state $R[\mathcal{U}^{(n)}]$ which results from
the action of the POVM $\{ U^\dag_{k^{(n)}\ell^{(n)}}\otimes 1\}$ on
the purified state
$\vert\sqrt{\rho^{(n)}}\rangle\langle\sqrt{\rho^{(n)}}\vert$.

\chapter{Semi-computable States and Semi-computable Density Matrices
}\label{ch:semi-compuatble density matrises}
\goodbreak

In this chapter, we will look at quantum mechanical tools as Hilbert space vectors, density matrices and generic linear operators on them from the point of view of computability theory. This is necessary in order to introduce the concept of Gacs complexity which is based on a quantum extension of the classical notion of universal semi-measure  devised for finite-dimensional quantum systems to infinite dimensional separable Hilbert spaces. We shall then use Gacs complexity to present a Brudno's like relation for quantum spin chains.

\smallskip
\section{Universal Semi-computable Semi-density Matrices on Infinite Separable Hilbert spaces}
\label{sec:unversal semi-measure} We start by fixing the necessary notations and symbols.

\begin{enumerate}
\item
Let the set $Q'$  be defined as follows
$$
Q'=\{(\varepsilon, p, q)\in\{0,1\}\times \n_+\times \n_+|\, p\,\,
\text{and}\,\, q \,\,\text{are coprime}\}\bigcup \{(0, 0, 0)\}\ .
$$
The mapping $\iota:Q\rightarrow Q'$  defined by
$$\iota(0)=(0, 0, 0),\quad \iota(\frac{p}{q})=(0,p,q),\quad
\iota(-\frac{p}{q})=(1,p,q)$$ is bijective and the mapping
$\iota':Q'\rightarrow \n$ defined by $(\varepsilon,p,q)\rightarrow
<\varepsilon,<p,q>>$, where $<x,y>=2^x(2y+1)-1$, is injective. We
can identify $Q$ with the subset $\iota'\circ\iota(Q)$ of $\n$.
Similarly, any finite dimensional rational matrix will be
represented by a natural number.
\item
With reference to the indexing of the standard basis
in ~(\ref{ONBn}), we shall consider the set of all functions from
$\z$ into the set $\{0, 1\}$ with finite support and denote it by
$\bo$. Let $\bi\in\bo$ and let $\theta :\z\rightarrow \z$ be the
left shift $\theta(n) = n-1$. Then $\theta$ induces the map
$(\theta(\bi ))_n = \bi_{n+1}$ on $\bo$. The restriction of $\bi$ to
the subinterval $I$ will be denoted by $\bi_I$. Furthermore, let
$p,q\in \z$ and $p\leq q$. Assume that the support of $\bi\in\bo$ is
contained in the interval $[p, q]\subset \z$. Then, $\bi =
\0]\bi_{[p, q]}[\0$ , where $\0] = \bi_{p-1]}$ and $[\0 =
\bi_{[q+1}$ are infinite sequences of $0$'s.
\item
The map $\bo\rightarrow \n\times\n$ that associates to $\bi\in \bo$
the integers
$$
(x =\sum_{k<0} \bi_k 2^{-k}\ ,\ y =\sum_{k\geq 0} \bi_k 2^k)
$$
is bijective. Therefore, the following two maps
\begin{equation}
\label{bijections} \bo\ni\bi\mapsto \eta(\bi) = <x, y>\ ,\quad \nu(
\bi ) = y- sign(x)[(<x, y>+1)/2+y]\ ,
\end{equation}
where  $sign(x)=0$ if $x=0$ otherwise $sign(x)=1$, are bijections
between $\bo$ and $\n$, respectively $\z$. Then, the inverse mapping
\begin{equation}
\label{inversemap} \zeta :(\bi, \bj)\mapsto
\nu^{-1}(\eta(\bj)-sign(\eta(\bi))[(<\eta(\bi),\eta(\bj)>+1)/2+\eta(\bj)])
\end{equation}
identifies $\bo \times\bo$ with $\bo$.
\item
Let $\Sigma$ be the power set of $\bo$. For $A\in\Sigma$, let
$\mu(A) =\#(A)$. Given the measure space $(\bo,\Sigma ,\mu)$, by the
identification of $\bo$ with $\z$, the Hilbert space $L^2( \bo,
\Sigma, \mu)$ consists of the square-summable functions
$f:\z\mapsto\com$, i.e. $\sum_{x\in\z}\,|f(x)|^2 < \infty$. For any
$\bi\in\bo$, consider the function $\delta_\bi$ defined by
$$
\delta_\bi(\bi)=1\ ,\qquad \delta_\bi(\bj)=0\quad\forall\
\bj\neq\bi\ .
$$
The set of these functions which is in one-to-one correspondence
with $ \bo$ is a Hilbert basis for $L^2( \bo, \Sigma, \mu)$ and for
each $\bi\in\bo$, $\delta_\bi$ will be denoted by $|\bi>$.
Therefore, the representation Hilbert space $\h$  for the quantum
spin chain is isomorphic to $L^2( \bo, \Sigma,\mu)$.
\item
The mapping $\zeta$ identifies $\h\bigotimes\h$ with $\h$.
Furthermore, the set of all elements $\bi\in \bo$  with support
included in $[ -n, n ]$ will be denoted by $\bo_{[-n,n]}$. The
subspaces of $L^2( \bo,\Sigma,\mu)$ generated by $\bo_{[-n,n]}$,
namely $L^2( \bo_{[-n,n]})$, are isomorphic to the local quantum
spin Hilbert spaces $\h_{[-n,n]}=\com^{\otimes 2n+1}$. The
corresponding orthogonal projections from $\h$ onto $\h_{[-n,n]}$
will be denoted by  $P_n$, and the canonical injection from
$\h_{[-n,n]}$ into $\h$ will be denoted by $\bi_n$. In the following
we will identify $\h_{[-n,n]}$ with the subspace
$\bi_n(\h_{[-n,n]})$ of the Hilbert space $\h$.
\item
For a  linear operator $T$ on $\h$, $P_n\,T\,P_n$ will
be denoted by $T_n$.
\end{enumerate}

\begin{defn}
\label{def1}\hfill
\begin{enumerate}
\item
A vector $\ket{\psi}=\sum_{\bi\in\bo}a_\bi|\bi>\in \mathbb{H}$ will
be termed elementary if of its expansion coefficients $a_\bi$ with
respect to the fixed orthonormal basis $\{|\bi>\}$ only a finite
number is not zero and those are algebraic numbers.
\item
A state $| \psi>=\sum_{\bi\in\bo}a_\bi|\bi>\in\h$ where $a_i\in\rr$,
will be termed semi-computable if there exist a computable sequence
of elementary vectors $| \psi_n> = \sum_{i\in \n} a_{n,i}| i>$ and a
computable function $k: \n \rightarrow \q$, such that
$\lim_{n\rightarrow\infty} k_n = 0$, and for each $n, | a_\bi-
a_{n,i}| \leq k_n$. Since the set of all computable functions is
countable, the set of all semi-computable elements of $\h$  is
countable.
\item
A linear operator $M_{2^{2n+1}}(\com)\ni T:\h_{[-n,n]}\rightarrow
\h_{[-n,n]}$, will be called elementary if the real and imaginary
parts of all of its matrix entries are rational numbers. It follows
that the elementary operators can be numbered.
\item
The linear operator $T:\h\rightarrow \h$, is a semi-density matrix
if  $T$ is positive and $0\leq {\rm Tr}(T)\leq 1$.
\item
Let  $n_1,n_2\in\n$ and $n_1\leq n_2$. Let
$T_j:\h_{[-n_j,n_j]}\rightarrow \h_{[-n_j,n_j]}$, $j=1,2$, be two
linear operators: $T_2$ will be said to be quasi-greater than $T_1$,
$T_1\leq_q T_2$, if $P_{n_1}\,T_2\,P_{n_1}-T_1\geq0$, where
$P_{n_1}$ is the canonical projection from $\h_{n_2}$ to $\h_{n_1}$.
A sequence of linear operators $T_n:\h_{[-n,n]}\rightarrow
\h_{[-n,n]}$ will be called quasi-increasing if for all $n\geq 1$,
$T_{n+1}\geq_q T_n$.
\end{enumerate}
\end{defn}

\begin{lem}
Each elementary state can be identified by a  natural number.
\end{lem}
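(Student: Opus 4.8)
The plan is to exhibit an explicit encoding of elementary states by natural numbers, relying on the fact — already established earlier in the excerpt — that algebraic numbers, finite-dimensional rational matrices, and finitely supported functions on $\z$ can all be coded by naturals. First I would recall from Definition \ref{def1}(1) that an elementary state $\ket{\psi}=\sum_{\bi\in\bo}a_\bi\ket{\bi}$ has only finitely many nonzero coefficients $a_\bi$, say those indexed by $\bi^{(1)},\dots,\bi^{(m)}\in\bo$, and each such $a_{\bi^{(k)}}$ is an algebraic number. By the identification of $\bo$ with $\n$ via the bijection $\eta$ in~(\ref{bijections}), each index $\bi^{(k)}$ is already a natural number, so the support of $\ket{\psi}$ is a finite subset of $\n$, hence codable by a single natural number (e.g.\ as $\sum_k 2^{\eta(\bi^{(k)})}$, or via the prime-power coding $[\cdot]$ used earlier for programs).

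Next I would code the coefficients. Since an algebraic number $a$ is determined by (the rational-coefficient data of) its minimal polynomial together with an index selecting which root it is — and, if one wants a real state as in Definition \ref{def1}(2), by a rational isolating interval — this data is a finite tuple of rationals plus a natural number, which the injection $\iota'\circ\iota$ on $\q$ (fixed in item~1 of Section~\ref{sec:unversal semi-measure}) together with the pairing $\langle x,y\rangle = 2^x(2y+1)-1$ turns into a single natural number. Thus each pair $(\bi^{(k)},a_{\bi^{(k)}})$ is coded by a natural number $c_k$, and the whole finite list $(c_1,\dots,c_m)$ is coded by one natural number using the same finite-sequence coding $[c_1,\dots,c_m]=2^{c_1}3^{c_2}\cdots p_m^{c_m}-1$ already employed in Chapter~1. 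Conversely, from such a number one reads off $m$, then each $c_k$, then decodes the index and the algebraic-number data, recovering $\ket{\psi}$ uniquely; so the map is an injection from elementary states into $\n$, which is what the lemma asserts.

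The only real point requiring care — the step I would flag as the main obstacle — is making precise the codability of algebraic numbers, i.e.\ choosing a canonical normal form (primitive minimal polynomial with positive leading coefficient, roots ordered by a fixed convention such as lexicographic order on (real part, imaginary part), plus a rational isolating rectangle) so that the coding is genuinely well-defined and the inverse is computable. Once that bookkeeping convention is fixed, everything else is routine composition of the pairing and sequence-coding functions already introduced, all of which are computable, so the decoding procedure is effective as well. Since the set of natural numbers arising this way is a subset of $\n$, the elementary states form a countable (indeed effectively enumerable) family, consistent with the remark in Definition \ref{def1}(2).
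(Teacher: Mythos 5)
Your proposal is correct and follows essentially the same route as the paper: both encode each algebraic coefficient by the integer (or rational) coefficients of a defining polynomial together with the index of the root in a fixed ordering, and then pack the finite list of coefficient codes into a single natural number via prime-power (G\"odel-style) coding. The only difference is bookkeeping --- you code (index, coefficient) pairs over the nonzero support and explicitly fix a canonical normal form for the algebraic data, whereas the paper lists the coefficients over the supporting interval $[-n,n]$ and leaves the choice of polynomial implicit --- so your version is, if anything, slightly more careful on well-definedness.
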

\begin{proof}
The  complex number $z$ is said algebraic number if there are integer numbers $x_0,\ldots,x_n$, not all zero, such that $p(z)=x_0 z^{n}+x_1 z^{n-1}+\ldots+x_{n-1} z+x_n=0$.

Now, we arrange the roots of any polynomial $p(z)=0$ by the lexicographical order as $(z_0, \ldots z_n)$.
 Let's define
 $$w(z_i)=2^n 3^{x'_0} \cdots p_{n+2}^{x'_n}p_{n+3}^i,$$ where $x'_j=f(x_j)$, $f:\zz\to\n$ is one-to-one and surjective function.

Let $\ket{\psi}=\sum_{\bi\in\bo}a_\bi|\bi>\in \mathbb{H}$ be an elementary state, where $a_{\bi}$ is algebraic number. We also define $$w'(\ket{\psi})= 2^n 3^{w(a_0)} \cdots p_{n+2}^{w(a_n)},$$ where $n$ is the smallest  number such that  $a_{\bi}= 0$, for $\bi\notin [-n, n]$. Therefore, each state can be identified by a natural number.
\end{proof}

\begin{prop}
\label{thm:4} Let $T_n$ be a quasi-increasing sequence of
semi-density matrices on $\h$. Then $\lim_{n\rightarrow\infty}T_n$
converges in the trace-norm to a semi-density matrix.
\end{prop}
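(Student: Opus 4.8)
The plan is to regard each $T_n$ as a positive trace-class operator on all of $\h$ (acting as $0$ on the orthogonal complement of $\h_{[-n,n]}$) and to show that $(T_n)_n$ is Cauchy in the trace norm $\|\cdot\|_1$; since the trace-class operators form a Banach space for $\|\cdot\|_1$, a limit $T$ then exists, and the properties $T\ge 0$, $0\le{\rm Tr}(T)\le 1$ will survive the limit.

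First I would extract two monotonicity facts from quasi-increase. Transitivity of $\leq_q$ gives $P_m\,T_{n'}\,P_m\ge T_m$ for all $m\le n'$; compressing $P_nT_{n+1}P_n\ge T_n$ once more by $P_m$ and using $P_mP_n=P_m$ shows that, for fixed $m$, the sequence $(P_mT_nP_m)_{n\ge m}$ increases in the operator order on the finite-dimensional space $\h_{[-m,m]}$ and has trace $\le 1$, hence converges to a positive $S_m$ with ${\rm Tr}(S_m)\le 1$, the family $\{S_m\}$ being compatible ($P_mS_{m'}P_m=S_m$). Secondly, ${\rm Tr}(T_n)$ is non-decreasing: writing ${\rm Tr}(T_{n+1})-{\rm Tr}(T_n)=\big({\rm Tr}(T_{n+1})-{\rm Tr}(P_nT_{n+1}P_n)\big)+\big({\rm Tr}(P_nT_{n+1}P_n)-{\rm Tr}(T_n)\big)$, the first bracket equals ${\rm Tr}\big(T_{n+1}(P_{n+1}-P_n)\big)\ge 0$ (trace of a product of positive operators) and the second is $\ge 0$ by quasi-increase. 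Being bounded by $1$, ${\rm Tr}(T_n)\uparrow\alpha$ for some $\alpha\in[0,1]$.

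Next I would bound $\|T_{n'}-T_n\|_1$ for $n'>n$ by splitting $T_{n'}-T_n=(T_{n'}-P_nT_{n'}P_n)+(P_nT_{n'}P_n-T_n)$. The second piece is positive, so its trace norm is ${\rm Tr}(P_nT_{n'}P_n)-{\rm Tr}(T_n)\le\alpha-{\rm Tr}(T_n)=:\varepsilon_n$. For the first piece, with $Q_n=\one-P_n$ one has $T_{n'}-P_nT_{n'}P_n=P_nT_{n'}Q_n+Q_nT_{n'}P_n+Q_nT_{n'}Q_n$; a Cauchy--Schwarz (H\"older) estimate for the trace gives $\|P_nT_{n'}Q_n\|_1\le\sqrt{{\rm Tr}(T_{n'})}\,\sqrt{{\rm Tr}(T_{n'}Q_n)}$ and $\|Q_nT_{n'}Q_n\|_1={\rm Tr}(T_{n'}Q_n)$, while ${\rm Tr}(T_{n'}Q_n)={\rm Tr}(T_{n'})-{\rm Tr}(P_nT_{n'}P_n)\le\alpha-{\rm Tr}(T_n)=\varepsilon_n$ again by quasi-increase. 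Hence $\|T_{n'}-T_n\|_1\le 2\sqrt{\alpha\varepsilon_n}+2\varepsilon_n$, a bound independent of $n'$ that tends to $0$; so $(T_n)$ is $\|\cdot\|_1$-Cauchy and converges in trace norm to some $T$.

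Finally, since trace-norm convergence dominates operator-norm convergence, $\bra{\psi}T\ket{\psi}=\lim_n\bra{\psi}T_n\ket{\psi}\ge 0$ for all $\ket{\psi}\in\h$, so $T\ge 0$; and ${\rm Tr}$ is $\|\cdot\|_1$-continuous, so ${\rm Tr}(T)=\alpha\in[0,1]$. Thus $T$ is a semi-density matrix, and $P_mTP_m=S_m$. The one genuinely delicate point is the ``off-diagonal'' term $T_{n'}-P_nT_{n'}P_n$, which is not a positive operator; the trick of bounding its trace norm by a square root of the small quantity ${\rm Tr}(T_{n'}Q_n)$ via Cauchy--Schwarz is the crux, everything else being bookkeeping with the two monotonicities.
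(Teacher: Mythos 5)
Your proof is correct, and it is in fact tighter than the paper's own argument. The paper's proof is very short: it observes that quasi-increase makes ${\rm Tr}(T_n)$ a bounded increasing numerical sequence, then simply asserts that $T_n$ converges in trace norm to some $T\in T(\h)$, and finally obtains positivity of $T$ from the chain ${\rm Tr}(T)=\lim_n{\rm Tr}(T_n)=\lim_n\|T_n\|_{tr}=\|T\|_{tr}$, noting that an operator with a negative eigenvalue would satisfy $\|T\|_{tr}>{\rm Tr}(T)$. What the paper never exhibits is a Cauchy estimate, and that is exactly the content you supply: transitivity of $\leq_q$, the splitting $T_{n'}-T_n=(T_{n'}-P_nT_{n'}P_n)+(P_nT_{n'}P_n-T_n)$, and the Cauchy--Schwarz (H\"older) bound $\|P_nT_{n'}Q_n\|_1\le\sqrt{{\rm Tr}(T_{n'})}\,\sqrt{{\rm Tr}(T_{n'}Q_n)}$, which controls the off-diagonal blocks by the vanishing boundary mass ${\rm Tr}(T_{n'}Q_n)\le\alpha-{\rm Tr}(T_n)$. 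So your route differs in its key step (explicit $\|\cdot\|_1$-Cauchyness, with positivity recovered by continuity of the quadratic forms), whereas the paper, once convergence is granted, settles positivity by the slicker coincidence of trace and trace norm for positive operators; your version is the one that actually closes the gap, since monotonicity and boundedness of the traces alone do not by themselves yield trace-norm convergence of the operators.
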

\begin{proof}
Since the sequence $T_n$ is quasi-increasing, (${\rm Tr}(T_n)$) is an
increasing sequence and since for every $n$, ${\rm Tr}(T_n)\leq 1$,
the sequence converges in trace-norm, $\|X\|_{tr}={\rm
Tr}\sqrt{X^\dag X}$ to an operator $T$ in the Banach space $T(\h)$
of trace-class operators on $\h$, moreover
$$
{\rm Tr}(T)=\lim_{n\to+\infty}{\rm
Tr}(T_n)=\lim_{n\to+\infty}\|T_n\|_{tr}=\|T\|_{tr}\leq 1\ .
$$
Therefore, $T$ must be positive; otherwise, if $T$ had negative
eigenvalues then $\|T\|_{tr}>{\rm Tr}(T)$ and this would contradict
the previous equality.
\end{proof}
Now,  we give the
definition of semi-computable semi-density matrices.
\begin{defn}
\label{semicompsemidens} A linear operator $T$ on $\h$ is a
semi-computable semi-density matrix, if there exists a computable
quasi-increasing sequence of elementary semi-density matrices
$T_n\in B(\h_{[-n,n]})\subseteq B(\h)$ such that
$\lim_{n\rightarrow\infty}\|T-T_n\|_{tr}=0$.
\end{defn}
The following lemma gives us a method for checking
the positivity of a matrix.

A polynomial $P\in\com [x]$ of degree
$n$ is called of type $\Pi$ if it has the following form:
\[P(x)=\sum_{0\leq k\leq n} (-1)^k \lambda_k x^{n-k},\,\ \text{and}\, \lambda_0=1.\]
\begin{lem}
Let $P \in \com [ x ]$ be of type $\Pi$. Assume that all solutions
of the equation $P( x ) = 0$ are real. Then these solutions are all
 positive.
 \end{lem}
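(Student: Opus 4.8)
The plan is to read off the conclusion directly from the sign pattern of a type-$\Pi$ polynomial, by evaluating $P$ on the negative half-line. Write $P(x)=\sum_{k=0}^{n}(-1)^k\lambda_k x^{n-k}$ with $\lambda_0=1$ and $\lambda_k\ge 0$ for every $k$ (the nonnegativity of the $\lambda_k$ is implicit in the notion of type $\Pi$, and is exactly what holds in the intended application, where the $\lambda_k$ are the elementary symmetric functions of the eigenvalues of the Hermitian matrix whose positivity is being tested). For a real $x<0$ one has $x^{n-k}=(-1)^{n-k}|x|^{n-k}$, hence
\[
P(x)=\sum_{k=0}^{n}(-1)^k\lambda_k(-1)^{n-k}|x|^{n-k}=(-1)^n\sum_{k=0}^{n}\lambda_k|x|^{n-k}\ .
\]

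The key observation I would then make is that, since all $\lambda_k\ge 0$ and $\lambda_0=1$, the bracketed sum is at least its $k=0$ term, $\lambda_0|x|^{n}=|x|^{n}$, which is strictly positive for $x\neq 0$; therefore $P(x)\neq 0$ for every $x<0$. So $P$ has no negative real root at all, and because by hypothesis every root of $P$ is real, every root of $P$ is $\ge 0$. To settle the case $x=0$, I would note $P(0)=(-1)^n\lambda_n$, so that $0$ is a root of $P$ precisely when $\lambda_n=0$. Hence: in general all roots are nonnegative, and if moreover $\lambda_n\neq 0$ they are strictly positive. This is exactly what is needed to certify positive semidefiniteness of $T$ from the coefficients of its characteristic polynomial $\det(x\,\mathds{1}-T)$.

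I do not expect a genuine obstacle: the argument is a two-line sign computation. The one point worth flagging is the role of the hypothesis $\lambda_k\ge 0$ — if one were to use only $\lambda_0=1$, the statement would be false (for instance $P(x)=x^2-1$ is ``of type $\Pi$'' with $\lambda_1=0$, $\lambda_2=-1$, yet has the negative root $-1$), so the proof must use, and does use, nonnegativity of all the $\lambda_k$. Equivalently, the same step can be phrased through Descartes' rule of signs applied to $P(-x)=(-1)^n\sum_{k}\lambda_k x^{n-k}$, whose coefficients are all of one sign and which therefore has no positive root; I would nonetheless prefer the explicit evaluation above, as it is self-contained and simultaneously yields the sharp dichotomy at $x=0$ between ``nonnegative'' and ``strictly positive''.
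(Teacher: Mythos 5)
Your proof is correct, but it takes a genuinely different route from the paper's. Both arguments need the same implicit hypothesis that you flag explicitly — $\lambda_k\ge 0$ for all $k$ — which is not literally written into the definition of type $\Pi$ but is clearly intended (in the application the $\lambda_k$ are, up to sign, the coefficients of the characteristic polynomial of a candidate positive matrix). The paper works with the roots rather than with values of $P$: writing the coefficients as the elementary symmetric functions $a_1,\dots,a_n\ge 0$ of the real roots, it assumes one root $\lambda_n$ is negative, recursively expresses the elementary symmetric functions of the remaining $n-1$ roots in terms of the $a_i$ and $\lambda_n$, argues these are all positive, and derives a contradiction from $\lambda_1\cdots\lambda_{n-1}>0$, $\lambda_n<0$ and $a_n\ge 0$. (Incidentally, the paper's second displayed identity should read $a_2-\lambda_n(a_1-\lambda_n)$ rather than $a_2+\lambda_n(a_1-\lambda_n)$; with the corrected sign its positivity claim does go through because $\lambda_n<0$.) Your direct evaluation on the negative half-line, $P(x)=(-1)^n\sum_k\lambda_k|x|^{n-k}$ for $x<0$, reaches the same conclusion in one line, avoids the symmetric-function bookkeeping entirely, and buys something the paper's argument glosses over: it isolates the boundary case $x=0$ and shows that from $\lambda_k\ge 0$ alone one only gets roots $\ge 0$, with strict positivity exactly when $\lambda_n\neq 0$ (indeed $P(x)=x^n$ shows the lemma as literally stated cannot give strictly positive roots without such an extra condition). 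For the intended use — certifying positivity of a Hermitian matrix from the sign pattern of its characteristic polynomial — your sharper dichotomy is the more informative statement, so the difference in route is worth keeping in mind even though both proofs are essentially elementary.
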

\begin{proof}
 Consider the following system of equations:
\[\sum_{1\leq k_1\leq n}\lambda_{k_1}= a_1, \sum_{1\leq k_1< k_2\leq
 n}\lambda_{k_1}\lambda_{k_2}= a_2,
 \sum_{1\leq k\leq l\leq_q\leq n}\lambda_k\lambda_l\lambda_q= a_3,
 \dots , \lambda_{1}\lambda_{2} \dots\lambda_{n}=a_n,\]
 where $a_i\geq 0$ for any $1\leq i\leq n$.
 To prove the lemma it is sufficient to prove that under the above conditions, all $\lambda_i$'s are positive. Assume that $\lambda_n$ is negative. From the
above system we obtain the following one:
 \[\sum_{1\leq k_1\leq {n-1}}\lambda_{k_1}= a_1-\lambda_n ,\]
 \[\sum_{1\leq k_1< k_2\leq {n-1}}\lambda_{k_1}\lambda_{k_2}=
 a_2+\lambda_n(a_1-\lambda_n),\]
 \[\lambda_{1}\lambda_{2} \dots \lambda_{n-1}=a_{n-1}-\lambda_n
\sum_{1\leq
 k_1< k_2< \dots < k_{n-2}\leq n-1}\lambda_{k_1}\lambda_{k_2}\dots
\lambda_{k_{n-2}}.\] All right hand sides are positive. Therefore,
$\lambda_1\lambda_2 \ldots \lambda_{n-1}$ is positive. But
$\lambda_n$ is negative and $\lambda_1\lambda_2 \ldots
\lambda_{n-1}\lambda_n$ is positive. This
 is a contradiction.
\end{proof}

\begin{thm}\label{thm:semicom}
 The set of all semi-computable semi-density matrices on $\h$ can be
 enumerated.
\end{thm}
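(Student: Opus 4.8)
The plan is to produce a uniformly computable family $k\mapsto S_k$ of semi-density matrices whose range is exactly the set in question, obtained by ``repairing'' an enumeration of all partial computable functions. Three things are needed: a numbering of elementary operators, the decidability of the relevant positivity predicates, and the repair construction itself.

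For the first, item~3 of Definition~\ref{def1} already numbers the elementary operators on each $\h_{[-n,n]}$; pairing these numberings with $n$ yields a single list $e_0,e_1,\dots$ of all elementary operators, from which the level $n$ of $e_i$ is recoverable effectively. For the second --- and this is where the preceding Lemma on type-$\Pi$ polynomials is used --- I would observe that ``$e$ is an elementary semi-density matrix'' is decidable: test $e=e^\dag$ (finitely many rational identities); if $e$ is Hermitian, form its characteristic polynomial $\chi(x)=\sum_k(-1)^k\lambda_k x^{N-k}$, whose coefficients are rationals computed from the entries of $e$ and whose roots are all real by Hermiticity, so that $e\ge 0$ if and only if every $\lambda_k\ge 0$ (this follows from that Lemma together with the elementary fact that the coefficients of $\prod_i(x-\mu_i)$ are nonnegative when the $\mu_i$ are); finally test ${\rm Tr}(e)=\lambda_1\le 1$. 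The same device decides $e_a\leq_q e_b$ for $e_a$ on $\h_{[-n_1,n_1]}$ and $e_b$ on $\h_{[-n_2,n_2]}$ with $n_1\le n_2$, since $P_{n_1}e_bP_{n_1}-e_a$ is again an elementary operator whose positivity is tested exactly as above.

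For the repair, let $\phi_0,\phi_1,\dots:\n\to\n$ enumerate the partial computable functions, reading $\phi_k(m)$ as the number of a prospective elementary operator on $\h_{[-m,m]}$ (via the identification of semi-computable quantities with computable functions on $\n$). Fix $k$, dovetail the computations $\phi_k(0),\phi_k(1),\dots$, and after $t$ stages let $r(t)$ be the largest $r$ such that $\phi_k(0),\dots,\phi_k(r)$ have all halted, are numbers of elementary semi-density matrices on the correct levels, and satisfy $\phi_k(m-1)\leq_q\phi_k(m)$ for $1\le m\le r$ --- all decidable by the previous paragraph --- with $r(t)=-1$ if $\phi_k(0)$ already fails. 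Let $U^{(t)}_k$ be the elementary semi-density matrix numbered $\phi_k(r(t))$, and $U^{(t)}_k:=0$ when $r(t)=-1$. Since $r(t)$ is non-decreasing and the verified prefix is $\leq_q$-increasing, $(U^{(t)}_k)_t$ is a quasi-increasing sequence of elementary semi-density matrices with non-decreasing trace bounded by $1$. If $r_\infty:=\lim_t r(t)$ is finite, this sequence is eventually constant and I set $S_k$ to be its (elementary) limit. If $r_\infty=\infty$, then $\phi_k$ is total and $(\phi_k(m))_m$ is a quasi-increasing sequence of elementary semi-density matrices on the $\h_{[-m,m]}$, which converges in trace norm by Proposition~\ref{thm:4}; since $r(t)\to\infty$, $(U^{(t)}_k)_t$ converges to the same limit, which I call $S_k$. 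In both cases $S_k$ is a semi-computable semi-density matrix (it is either elementary, hence witnessed by its own corners and $0$-extensions, or carries $(\phi_k(m))_m$ as a witness in the sense of Definition~\ref{semicompsemidens}), and the assignment $(k,t)\mapsto U^{(t)}_k$ is computable, so the enumeration is effective. Conversely, if $T$ is semi-computable with witnessing computable sequence $(T_n)$, then $n\mapsto\#(T_n)$ is total computable, hence equals $\phi_{k_0}$ for some $k_0$; for that $k_0$ every prefix passes every test, so $r(t)\to\infty$ and $S_{k_0}=\lim_t T_{r(t)}=\lim_n T_n=T$. Thus the range of $k\mapsto S_k$ is precisely the set of semi-computable semi-density matrices, which is the assertion.

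The main obstacle is this third step, and specifically the way the repaired sequence emits its output. The naive fix --- when $\phi_k(n)$ has not yet been verified, output the $0$-extension to $\h_{[-n,n]}$ of the last verified element --- does not work: $T_m\leq_q T_{m'}$ only controls the compression $P_mT_{m'}P_m$, not $T_{m'}$ itself, so the $0$-extension of $T_m$ need not be $\leq_q$ a genuine later element, and the repaired sequence would in general fail to be quasi-increasing. The remedy above --- a time-indexed sequence whose level is allowed to grow only as fast as verification permits, and which never emits anything but a genuine, already-verified operator --- is exactly what reconciles totality, quasi-increasingness, and faithfulness to $\phi_k$ on valid inputs; organising the correction this way (rather than index-by-index) is the heart of the proof.
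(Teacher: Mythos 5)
Your proposal is correct and takes essentially the same route as the paper's own proof: both repair the standard enumeration of partial computable functions --- using step-counting (the paper's $STP^{(1)}$ predicate, your dovetailing with the verified-prefix counter $r(t)$) together with the decidable positivity/quasi-order tests furnished by the type-$\Pi$ polynomial lemma --- into total computable generators of quasi-increasing sequences of elementary semi-density matrices, invoke Proposition \ref{thm:4} for convergence of the limits, and then note conversely that every semi-computable semi-density matrix is witnessed by some total $\phi_{k_0}$ and hence appears in the list. Your write-up is merely more explicit than the paper's about the decidability of the checks and about the level-versus-index bookkeeping (emitting only already-verified operators rather than zero-extensions), but the underlying construction is the same.
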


\begin{proof}
Let $\phi_0, \phi_1,\dots , \phi_n, \dots$ be the standard
enumeration of all partially computable functions on $\n$. For
$n\in\n$, we change $\phi_n$ into $\psi_n$ which represents a
semi-computable semi-density matrix $\rho_n$ on $\h$. Let
$\psi_n(0)=0$. Assume that
  $\psi_n(x)$ is defined for $0\leq x\leq t-1$ and $z$ is the smallest integer number such that $\psi_n(t-1)=\phi_n(z)$.
  To define $\psi_n(t)$, assume that
  there is a least integer number $x_0$, $0\leq x_0\leq
   t$, greater than $z$, satisfying the relation
    $STP^{(1)}(x_0, n, t) = 1$ and
   $\phi_n(x_0)$ can
be interpreted as an elementary semi-density matrix $\rho_n(t)$
strictly quasi-greater than $\rho_n(t-1)$.  Then we set $\psi_n(t)
=\phi_n(x_0)$. Otherwise, $\psi_n(t) =
 \psi_n(t-1)$.
Clearly, $\psi_n$ is a computable function and by Theorem
~\ref{thm:4}, $\lim_{t\rightarrow\infty}\rho_n(t)$ is a
semi-computable semi-density matrix. Conversely to each
semi-computable semi-density matrix on $\h$ there corresponds a
computable function $\psi:\n\to\n$ of the above form.
\end{proof}
\begin{thm}
Let $S$ and $T$ be semi-density matrices on $\mathbb{H}$, and
let $T$ be invertible. If $S \leq T$, then $\sqrt{S}(\log
S)\sqrt{S}\leq\sqrt{S}(\log T)\sqrt{S}$.
 \end{thm}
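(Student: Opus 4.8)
The plan is to base the argument on the integral representation of the logarithm,
\[
\log x=\int_0^\infty\left(\frac{1}{1+s}-\frac{1}{x+s}\right)\,ds\qquad(x>0),
\]
together with the elementary fact that conjugation by a fixed self-adjoint operator preserves the order of positive operators. First I would treat the transparent case, fixing $s>0$: since $S,T\geq0$, the operators $S+s\mathds{1}$ and $T+s\mathds{1}$ are positive and invertible, and $S\leq T$ gives $S+s\mathds{1}\leq T+s\mathds{1}$; by the operator antitonicity of inversion on the positive invertibles (the usual consequence of $A\leq B\Rightarrow B^{-1/2}AB^{-1/2}\leq\mathds{1}\Rightarrow A^{-1}\geq B^{-1}$) this yields
\[
\frac{1}{S+s\mathds{1}}-\frac{1}{T+s\mathds{1}}\ \geq\ 0 .
\]
Conjugating this positive operator by $\sqrt{S}$ keeps it positive, so $\sqrt{S}\big((S+s\mathds{1})^{-1}-(T+s\mathds{1})^{-1}\big)\sqrt{S}\geq0$ for every $s>0$.

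Next I would integrate over $s\in(0,\infty)$. The key point about the $\sqrt S$-sandwich is that $\sqrt S$ annihilates $\ker S$, so although $\log S$ need not be defined ``by the formula'' when $S$ has a kernel, on the support of $S$ functional calculus gives
\[
\int_0^\infty\sqrt S\Big(\frac{1}{1+s}\mathds{1}-\frac{1}{S+s\mathds{1}}\Big)\sqrt S\,ds=\sqrt S(\log S)\sqrt S ,
\]
because the corresponding eigenvalue integral is $\lambda\int_0^\infty\big((1+s)^{-1}-(\lambda+s)^{-1}\big)ds=\lambda\log\lambda$, and the would-be divergence at $s=0$ on $\ker S$ is cancelled by the $(1+s)^{-1}\mathds{1}$ term before integration. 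Using the same representation for $\log T$ (here $T$ invertible) and subtracting, one obtains
\[
\sqrt S(\log T)\sqrt S-\sqrt S(\log S)\sqrt S=\int_0^\infty\sqrt S\left(\frac{1}{S+s\mathds{1}}-\frac{1}{T+s\mathds{1}}\right)\sqrt S\,ds\ \geq\ 0 ,
\]
the inequality because an integral of positive operators is positive; this is the claim.

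The one place where genuine care is needed is the infinite-dimensional separable setting: there a semi-density matrix $T$ is trace class, hence compact, so ``$T$ invertible'' can only mean $\ker T=\{0\}$, $\log T$ is in general unbounded, and the operator-valued integrals above converge only strongly on appropriate domains rather than in norm — so one must first make sure $\sqrt S(\log T)\sqrt S$ is a legitimate (densely defined, or bounded) object. I would secure this either by finite-dimensional truncation, applying the finite-dimensional statement to $S_n=P_nSP_n\leq P_nTP_n$ on $\mathbb{H}_{[-n,n]}$, where $P_nTP_n$ is positive definite hence honestly invertible, and then passing to the limit $n\to\infty$; or by the regularisation replacing $S,T$ by $S+\varepsilon\mathds{1},T+\varepsilon\mathds{1}$, where operator monotonicity of $\log$ gives $\log(S+\varepsilon\mathds{1})\leq\log(T+\varepsilon\mathds{1})$, hence $\sqrt S\log(S+\varepsilon\mathds{1})\sqrt S\leq\sqrt S\log(T+\varepsilon\mathds{1})\sqrt S$, and letting $\varepsilon\downarrow0$ using $\log(T+\varepsilon\mathds{1})\to\log T$ and dominated convergence in the functional calculus on the support of $S$. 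The anticipated obstacle is thus not the algebraic heart of the argument but exactly this domain- and convergence-bookkeeping needed to interchange the limit with the conjugation by $\sqrt S$. As a corollary, taking traces yields ${\rm Tr}(S\log S)\leq{\rm Tr}(S\log T)$, which is presumably the form in which the lemma will be applied.
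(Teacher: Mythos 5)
Your argument is essentially the paper's own proof: both rest on the resolvent integral representation of the logarithm, the antitonicity $(S+s\mathds{1})^{-1}\geq(T+s\mathds{1})^{-1}$ following from $S\leq T$, conjugation by $\sqrt{S}$ to preserve positivity, and the convention that the kernel of $S$ contributes $0\log 0=0$ so that $\sqrt{S}(\log S)\sqrt{S}$ is well defined. The only difference is that you are more explicit about the infinite-dimensional bookkeeping (unboundedness of $\log T$ for a trace-class $T$ with trivial kernel, truncation or $\varepsilon$-regularisation), which the paper passes over silently; the mathematical core is the same.
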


\begin{proof}
For $0< t \in R$, both $t + S$ and $t + T$ are invertible and $( t+S
)^{-1} \geq( t+T)^{-1}$. Therefore,
$$\sqrt{S} \left( \int_0^\infty{(1/(t+S)- 1/(t+T))dt} \right)\sqrt{S} =
\sqrt{S}\left(\log{(t+S)}-\log{(t+T)}\right)\sqrt{S}\Big|_0^{+\infty}\geq 0.$$
 But
$$\sqrt{S} \log{(t+S)}  \sqrt{S} = \sum_0^\infty
\lambda_i\log{(t+\lambda_i)}|\varphi_i><\varphi_i|,$$
 where, $\lambda_i$'s are eigenvalues
of $S$ with associated eigenvectors $|\varphi_i>$. Since by
convention $0log0=0$, the operators $\sqrt{S}\log{(S)}\sqrt{S}
=\sum_0^\infty \lambda_i\log{(\lambda_i)}|\varphi_i><\varphi_i|\leq
0$ is  well defined. On the other hand for
 $t \neq 0$, $$\sqrt{S}\left(\log{(t+S)}-\log{(t+T)}\right)\sqrt{S} =\sqrt{S}\left(\log{(1+S/t )}- \log{(1+T/t
 )}\right)\sqrt{S}.$$
Therefore,
\begin{eqnarray*}
0&\leq& \sqrt{S}\left( \log{(t+S)} -
\log{(t+T)}\right)\sqrt{S}\Big|_0^{+\infty}\\
&=&
\sqrt{S}(\log{T}-\log{S})\sqrt{S}+
\lim_{t\rightarrow\infty}\sqrt{S}\left(\log{(1+S/t
)}- \log{(1+T/t )}\right) \sqrt{S}\\
&=&
\sqrt{S}(\log{T}-\log{S})\sqrt{S}
.
\end{eqnarray*}
\end{proof}

\begin{defn}
\label{unvsemdens} A semi-computable semi-density matrix $\hat{\mu}$
is called universal if for any semi-computable semi-density matrix
$\hat{\mu}$ there exists a constant $C_\rho>0$ such that
$C_\rho\,\rho\leq\hat{\mu}$.
\end{defn}
The existences of a universal semi-density matrix in finite dimensional Hilbert spaces
and its applications to algorithmic complexity is proved in
\cite{Zvonkin70thecomplexity}. Based on the preceding discussion, we are now able to show that universal semi-densities exist in infinite dimensional separable Hilbert space, and that they are related to each other by a universality condition.
\begin{thm} \label{thm:universal}
There exists a universal semi-computable semi-density matrix on
 any infinite dimensional, separable Hilbert space $\mathbb{H}$.
\end{thm}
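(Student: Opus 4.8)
The plan is to obtain $\hat{\mu}$ as a weighted superposition of all semi-computable semi-density matrices, in direct analogy with the classical universal semi-measure (Theorem~\ref{thm:levin}). By Theorem~\ref{thm:semicom} we may fix an enumeration $\rho_0,\rho_1,\rho_2,\ldots$ of all semi-computable semi-density matrices on $\h$; moreover the construction in the proof of that theorem is effective in its index, so we may also fix a single computable map $(k,t)\mapsto\rho_k^{(t)}$ producing, for each $k$ and $t$, an elementary semi-density matrix in such a way that for every $k$ the sequence $(\rho_k^{(t)})_{t\in\n}$ is quasi-increasing with $\rho_k^{(t)}\in B(\h_{[-t,t]})$ and $\lim_{t\to\infty}\|\rho_k-\rho_k^{(t)}\|_{tr}=0$ (if necessary, one pads the early terms of a given approximating sequence by the zero operator and embeds every term into the appropriate subspace $\h_{[-t,t]}$, so that the approximation index coincides with the dimension index). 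I would then set
$$
\hat{\mu}^{(t)}:=\sum_{k=0}^{t}2^{-k-1}\,\rho_k^{(t)}\in B(\h_{[-t,t]}),\qquad \hat{\mu}:=\lim_{t\to\infty}\hat{\mu}^{(t)}.
$$

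First I would verify that each $\hat{\mu}^{(t)}$ is an elementary semi-density matrix: it is a finite combination with rational coefficients of elementary operators, hence itself elementary and positive, and ${\rm Tr}(\hat{\mu}^{(t)})\le\sum_{k=0}^{t}2^{-k-1}<1$. Next I would show that the sequence is quasi-increasing: writing $P_t$ for the canonical projection $\h_{[-t-1,t-1]}\to\h_{[-t,t]}$, the relation $\rho_k^{(t+1)}\geq_q\rho_k^{(t)}$ means precisely $P_t\rho_k^{(t+1)}P_t-\rho_k^{(t)}\geq0$ for $k\le t$, while the extra summand $2^{-t-2}\rho_{t+1}^{(t+1)}$ is positive; summing these gives $P_t\hat{\mu}^{(t+1)}P_t-\hat{\mu}^{(t)}\geq0$, i.e. $\hat{\mu}^{(t+1)}\geq_q\hat{\mu}^{(t)}$. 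By Proposition~\ref{thm:4} the limit $\hat{\mu}$ then exists in trace norm and is a semi-density matrix, and since $t\mapsto\hat{\mu}^{(t)}$ is computable — here the uniformity of the enumeration is essential — Definition~\ref{semicompsemidens} shows that $\hat{\mu}$ is a semi-computable semi-density matrix.

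It remains to check universality. Given an arbitrary semi-computable semi-density matrix $\rho$, choose $k_0$ with $\rho=\rho_{k_0}$. For every $t\ge k_0$ one has, in $B(\h_{[-t,t]})$,
$$
\hat{\mu}^{(t)}-2^{-k_0-1}\rho_{k_0}^{(t)}=\sum_{\substack{0\le k\le t\\ k\ne k_0}}2^{-k-1}\rho_k^{(t)}\geq0.
$$
Letting $t\to\infty$ and using that trace-norm convergence implies operator-norm convergence while the cone of positive operators is norm-closed, we obtain $\hat{\mu}-2^{-k_0-1}\rho\geq0$, that is $C_\rho\,\rho\le\hat{\mu}$ with $C_\rho:=2^{-k_0-1}>0$, which is exactly the condition in Definition~\ref{unvsemdens}.

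The step I expect to be the main obstacle is making the construction genuinely effective while preserving the order structure. From Theorem~\ref{thm:semicom} one needs not merely that each $\rho_k$ is individually semi-computable but that the elementary approximations $\rho_k^{(t)}$ are produced by one algorithm taking $(k,t)$ as input; and one must arrange the indexing (padding with zeros, embedding into $\h_{[-t,t]}$) so that the partial sums remain quasi-increasing with respect to the projections $P_t$. This last point is delicate, since compression by a projection does not in general respect the ordering of arbitrary positive operators; it is precisely the alignment of the approximation index with the dimension index — so that $\rho_k^{(t)}\in B(\h_{[-t,t]})$ and $P_t$ is the very projection appearing in the definition of $\geq_q$ — that makes the argument go through.
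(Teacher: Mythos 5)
Your proposal is correct and is essentially the paper's own argument: both take the enumeration $\rho_0,\rho_1,\ldots$ furnished by Theorem~\ref{thm:semicom} and define $\hat{\mu}$ as a geometrically weighted sum, with universality read off from the term $2^{-k_0-1}\rho_{k_0}$ (the paper writes $\sum_{k\ge 0}2^{-k}\hat{\mu}_k$ and simply asserts semi-computability and universality). Your write-up is in fact more careful on two points the paper glosses over — the normalization of the weights so that ${\rm Tr}(\hat{\mu})\le 1$, and the verification, via the uniform computable family $(k,t)\mapsto\rho_k^{(t)}$ and quasi-increasing partial sums, that the limit is genuinely a semi-computable semi-density matrix.
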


\begin{proof}
Let $\hat{\mu}_0 , \hat{\mu}_1 , \ldots ,\hat{\mu}_n , \ldots$ be
the enumeration of all semi-computable semi-density matrices and set
\begin{equation}
\label{chain-univ} \hat{\mu}=\sum_{k\geq 0}2^{-k}\hat{\mu}_k\ .
\end{equation}
Clearly, $\hat{\mu}$ is a semi-computable semi-density matrix, and
for each semi-computable semi-density matrix $\hat{\mu}_k$ we have
$2^{-k} \hat{\mu}_k\leq \hat{\mu}$. Therefore, $\hat{\mu}$ is a
universal semi-computable semi-density matrix.
\end{proof}
\section{Semi-computable operators}
Let $T$ be an bounded operator on $\mathbb{H}$. Then, $T$ can be written as
 $T=(T_1+ i T_2)/2$ where $T_1=(T+T^\dag)/2$ and
$T_2=(T-T^\dag)/2i$ are self-adjoint operators.
Moreover, each
self-adjoint operator $T_1, T_2\in\h$ can be written as $T_i= T_{i1} - T_{i2}$, $i=1,2$, where $T_{ij}$, $i, j=1,
2$, are positive operators and $T_{ij}/\|T_{ij}\|\leq I$. Indeed, $T_{i1}=(|A|+A)/2$ and $T_{i2}=(|A|-A)/2$, $i=1, 2$ \cite{beratteli1}.

Now, let $T$ be a positive linear operator $\leq I$. Then
$T_n=P_nTP_n$ is called elementary if all of its matrix
elements are rational numbers.

A mapping $\phi:\n\to\n$ is interpreted as a semi-computable linear operator
$T$ from $\h$ into $\h$ if for each $n\in\n$, $\phi(n)$ has the form
$\phi(n)=<\lambda, <\phi_1(n), \phi_2(n)>, <\phi'_1(n),
\phi'_2(n)>>$, where $\lambda$ is an integer number independent of
$n$ and $\phi_1(n)$, $\phi_2(n)$, $\phi'_1(n)$, and $\phi'_2(n)$ can
be interpreted as elementary positive operators $T_{1n}$, $T_{2n}$,
$T'_{1n}$ and $T'_{2n}$ all less than or equal to $\lambda I$ and
the sequences $T_{1n}$, $T_{2n}$, $T'_{1n}$ and $T'_{2n}$ are all
quasi-increasing and
$T=\lim_{n\to\infty}(T_{1n}-T_{2n})/2+(T'_{1n}-T'_{2n})/2$. If for
each $n$, $(T_{1n}-T_{2n})=0$, or $(T'_{1n}-T'_{2n})=0$, $T$ is a
semi-computable self-adjoint operator, and if for each $n$, three of
four operators $T_{1n}$, $T_{2n}$, $T'_{1n}$ and $T'_{2n}$, are zero,
then $T$ is a semi-computable bounded positive operator.
\begin{defn}\label{defnsemiope}
With the above notations $T\in B(\h)$  is called a semi-computable semi-unitary
operator if for each $n$,
$$T_n T^\dag_n\leq I \quad\text{and}\quad T^\dag_n T_n\leq I,$$
where, $T_n=(T_{1n}-T_{2n})/2+(T'_{1n}-T'_{2n})/2$.
\end{defn}
\begin{lem} Let $T$ and $S$ be  semi-computable semi-unitary
operators. Then \begin{enumerate}
    \item $T\circ S$ is also a semi-computable semi-unitary  operator.
    \item $T^\dag$ is also a semi-computable semi-unitary operator.
\end{enumerate}
\end{lem}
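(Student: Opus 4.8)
The plan is to manufacture the data required by Definition~\ref{defnsemiope} for $T\circ S$ and for $T^\dag$ directly out of the data defining $T$ and $S$, exploiting three facts: composition of computable functions is computable, products and adjoints of rational matrices are rational, and the semi-unitarity bounds are stable under these algebraic operations because of the monotonicity of the map $X\mapsto AXA^\dag$. Recall that a semi-computable semi-unitary $T$ carries a computable sequence of elementary operators $T_n\in B(\h_{[-n,n]})$, presented through four quasi-increasing sequences of elementary positive operators bounded by $\lambda I$ (the Hermitian and anti-Hermitian parts of $T_n$, each split into its positive and negative pieces), with $\|T-T_n\|_{tr}\to 0$ and $T_nT_n^\dag\le I$, $T_n^\dag T_n\le I$ for every $n$; note that the last two inequalities already force $\|T_n\|_{op}\le 1$, hence $\|T\|_{op}\le 1$.

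Part (2) is the easy half. Take $(T^\dag)_n:=T_n^\dag$: its matrix is the conjugate transpose of the rational matrix of $T_n$, hence elementary; its Hermitian part equals that of $T_n$ while its anti-Hermitian part changes sign, which merely interchanges the two positive operators making up that part, so the four quasi-increasing elementary positive sequences for $T^\dag$ are obtained from those of $T$ by a computable relabelling, still bounded by $\lambda I$. Convergence $\|T^\dag-T_n^\dag\|_{tr}=\|T-T_n\|_{tr}\to 0$ holds since the adjoint is a trace-norm isometry, and the semi-unitarity conditions $T_n^\dag(T_n^\dag)^\dag=T_n^\dag T_n\le I$ and $(T_n^\dag)^\dag T_n^\dag=T_nT_n^\dag\le I$ are exactly the two conditions assumed for $T$, written in the opposite order. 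Hence $T^\dag$ is a semi-computable semi-unitary operator.

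For Part (1) the natural candidate is $R_n:=T_nS_n\in B(\h_{[-n,n]})$: the map $n\mapsto R_n$ is computable (compose the computable functions producing $T_n,S_n$ with rational matrix multiplication) and $R_n$ is elementary. The semi-unitarity bounds come for free from $X\le Y\Rightarrow AXA^\dag\le AYA^\dag$ (write $Y-X=C^\dag C$, so $A(Y-X)A^\dag=(CA^\dag)^\dag(CA^\dag)\ge 0$), together with $S_nS_n^\dag\le I$ and $T_n^\dag T_n\le I$:
\[
R_nR_n^\dag=T_n\,(S_nS_n^\dag)\,T_n^\dag\le T_nT_n^\dag\le I,\qquad
R_n^\dag R_n=S_n^\dag\,(T_n^\dag T_n)\,S_n\le S_n^\dag S_n\le I .
\]
Trace-norm convergence follows from submultiplicativity, $\|T_nS_n-TS\|_{tr}\le\|T_n\|_{op}\|S_n-S\|_{tr}+\|T_n-T\|_{tr}\|S\|_{op}\le\|S_n-S\|_{tr}+\|T_n-T\|_{tr}\to 0$, using $\|T_n\|_{op},\|S\|_{op}\le 1$; in particular $TS$ is a well-defined bounded (indeed trace-class) operator.

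The one point needing genuine care — and where I expect the main obstacle — is recasting $R_n=T_nS_n$ in the normal form demanded by the definition, since $R_n$ is in general neither positive nor quasi-increasing. Splitting $R_n$ into Hermitian and anti-Hermitian parts is harmless, but splitting a self-adjoint elementary $A$ as $(|A|\pm A)/2$ introduces the non-rational operator $|A|=\sqrt{A^2}$. The remedy is to replace $|A_n|$ by computable rational positive operators that increase to it (using that $\sqrt{\,\cdot\,}$ is an increasing limit of operator-monotone functions with rational coefficients, and that positivity of the approximants can be certified computably via the type-$\Pi$ polynomial criterion proved above), and then, if necessary, to pass to a subsequence in $n$ so that the reconstructed elementary positive sequences are genuinely quasi-increasing with the correct limits; Proposition~\ref{thm:4} then guarantees convergence of the reconstructed operator to $TS$. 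Once this bookkeeping is carried out, $T\circ S$ satisfies Definition~\ref{defnsemiope}, which completes the proof.
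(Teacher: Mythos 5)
Your proposal follows essentially the same route as the paper's proof: pass to the approximating elementary sequences, take $T_n^\dag$ and $T_n S_n$ as the candidates, and use that elementary operators are closed under adjoints and products. In fact you verify details (the semi-unitarity bounds via conjugation monotonicity, trace-norm convergence via submultiplicativity with $\|T_n\|\leq 1$, and the re-expression of $T_nS_n$ in the required quasi-increasing positive elementary normal form) that the paper's two-line argument leaves entirely implicit, so your write-up is, if anything, more complete than the original.
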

\begin{proof}
Since $T$ and $S$ are semi-computable semi-unitary operators then they are constructed by sequences $T_n$ and $S_n$ convergent in trace-norm to $T$ and $S$.  For each $n\in\n$, we have
$$T_n=(T_{1n}-T_{2n})/2+(T'_{1n}-T'_{2n})/2,$$
 $$S_n=(S_{1n}-S_{2n})/2+(s'_{1n}-S'_{2n})/2,$$
 where $T_{1n}$, $T_{2n}$, $T'_{1n}$, $T'_{2n}$, $S_{1n}$, $S_{2n}$, $S'_{1n}$ and $S'_{2n}$ are elementary operators.

 It is clear that multiplications and adjoint of  elementary operators are also elementary. Therefore, $T^\dag_n$ and $T_n\circ S_n$ are constructed from elementary operators and hence
 $T^\dag$ and $T\circ S$ are also semi-computable semi-unitary operators.
 \end{proof}

\section{Lower and Upper Gacs Complexities}
In this section with the help of a universal semi-computable semi-density matrix we will give the lower and upper Gacs algorithmic complexities in an infinite dimensional separable Hilbert space.
\begin{defn}
Let $\rho$ be a semi-computable semi-density matrix on the Hilbert space
$\mathbb{H}$. The lower and upper Gacs algorimic complexities are
defined by
\begin{equation}
 \lh(\rho)=-\log{\rm Tr}(\rho \hat{\mu}),
\end{equation}
and
\begin{equation}\label{qalgentr1}
 \oh(\rho)=-{\rm
Tr}(\rho \log\hat{\mu}).
\end{equation}
\end{defn}

By the Levin's theorem \ref{thm:levin2}, we have $K(x)\overset{+}{=}\mu(x)$, $x\in\n$. Now, it is natural to define a like relation in $\mathbb{H}$ for $\hat{\mu}$. Since, $\hat{\mu}$ is a positive operator less that $\mathds{1}$, we define
\begin{equation}
\label{qcompop} \kappa=-\log\hat{\mu}\ .
\end{equation}

\begin{thm}
Let $f$ be a convex function on an interval $[a, b]$
containing all the eigenvalues of positive operator $A$, then for
all density matrices $\rho$ such that $tr(\rho f(A)) < \infty$,

\begin{equation}
 f({\rm Tr}(\rho A)) \leq {\rm Tr}(\rho f(A)).
 \end{equation}
\end{thm}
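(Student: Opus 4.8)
The plan is to reduce this operator inequality to the ordinary (scalar) Jensen inequality applied to a probability measure built by "spectrally resolving" $\rho$ against $A$. First I would invoke the spectral theorem for the positive operator $A$, writing $A=\int_{[a,b]}\lambda\,dE_A(\lambda)$, where $E_A$ is the projection-valued spectral measure of $A$, supported in $[a,b]$ by hypothesis; in the case most relevant here $A$ has pure point spectrum, $A=\sum_i\lambda_i\,|\varphi_i\rangle\langle\varphi_i|$ for an orthonormal basis $\{|\varphi_i\rangle\}$ of eigenvectors, and the functional calculus gives $f(A)=\sum_i f(\lambda_i)\,|\varphi_i\rangle\langle\varphi_i|$ (more generally $f(A)=\int_{[a,b]}f(\lambda)\,dE_A(\lambda)$).

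Next I would introduce the set function $\mu_\rho(S):={\rm Tr}\big(\rho\,E_A(S)\big)$ for Borel $S\subseteq[a,b]$ and check that it is a genuine probability measure: it is countably additive because $E_A$ is; it is nonnegative because $\rho\geq0$ and $E_A(S)\geq0$ force ${\rm Tr}(\rho\,E_A(S))\geq0$; and it is normalized since $\mu_\rho([a,b])={\rm Tr}\big(\rho\,E_A([a,b])\big)={\rm Tr}(\rho\,\mathds{1})={\rm Tr}(\rho)=1$. With $\mu_\rho$ in hand, both quantities in the statement become ordinary integrals, ${\rm Tr}(\rho A)=\int_{[a,b]}\lambda\,d\mu_\rho(\lambda)$ and ${\rm Tr}(\rho f(A))=\int_{[a,b]}f(\lambda)\,d\mu_\rho(\lambda)$, the latter being finite by the hypothesis ${\rm Tr}(\rho f(A))<\infty$, i.e. $f\in L^1(\mu_\rho)$. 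In the pure-point case this just reads $\mu_\rho=\sum_i p_i\,\delta_{\lambda_i}$ with $p_i=\langle\varphi_i|\rho|\varphi_i\rangle\geq0$, $\sum_i p_i=1$, so that ${\rm Tr}(\rho A)=\sum_i p_i\lambda_i$ and ${\rm Tr}(\rho f(A))=\sum_i p_i f(\lambda_i)$.

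Finally I would apply Jensen's inequality: since $f$ is convex on $[a,b]$, $\mu_\rho$ is a probability measure on $[a,b]$, and its barycenter $\int\lambda\,d\mu_\rho$ again lies in $[a,b]$, one gets
\begin{equation}
f\Big({\rm Tr}(\rho A)\Big)=f\Big(\int_{[a,b]}\lambda\,d\mu_\rho(\lambda)\Big)\leq\int_{[a,b]}f(\lambda)\,d\mu_\rho(\lambda)={\rm Tr}(\rho f(A)),
\end{equation}
which is the claim. The only point that demands a little care — the part I would regard as the main (and fairly minor) obstacle — is justifying the passage from the operator expressions to the scalar integrals, i.e. that ${\rm Tr}(\rho\,g(A))=\int g\,d\mu_\rho$ for $g\in\{\mathrm{id},f\}$. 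In the discrete case this is just the remark that $\langle\varphi_i|\rho|\varphi_i\rangle$ is a probability distribution and the traces are evaluated in the eigenbasis of $A$; in the general bounded case it follows by approximating $g$ uniformly on the compact set $[a,b]$ by step functions and using $\sigma$-additivity of $E_A$ together with the norm-continuity of $X\mapsto{\rm Tr}(\rho X)$ (note that a convex function on a compact interval is bounded, so no integrability issue beyond the stated hypothesis ever arises).
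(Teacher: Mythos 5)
Your proof is correct, but it takes a genuinely different route from the paper's. The paper diagonalizes $\rho=\sum_i r_i\vert r_i\rangle\langle r_i\vert$ and combines two ingredients: the vector-state Jensen inequality $f(\langle r_i\vert A\vert r_i\rangle)\le\langle r_i\vert f(A)\vert r_i\rangle$, which it quotes from Wehrl, and ordinary convexity of $f$ applied to the convex combination $\sum_i r_i\,\langle r_i\vert A\vert r_i\rangle$ with weights $r_i$. You instead spectrally resolve $A$, compress $\rho$ into the scalar probability measure $\mu_\rho(S)={\rm Tr}\big(\rho\,E_A(S)\big)$ on $[a,b]$, and make a single appeal to the classical Jensen inequality. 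Your version is more self-contained: the lemma the paper cites is itself usually proved exactly by your device of pairing the spectral measure of $A$ with a state, so nothing external is needed, and continuous spectrum is covered with no extra effort; the points you flag at the end — that ${\rm Tr}(\rho\,g(A))=\int g\,d\mu_\rho$ and that a finite convex function on the compact interval $[a,b]$ is Borel measurable and bounded — are indeed the only things to check, and you handle them adequately. What the paper's route buys is brevity modulo the citation and the fact that it works directly in the eigenbasis of the trace-class operator $\rho$, the decomposition it uses throughout that section; conversely, your argument specialized to $\rho=\vert\psi\rangle\langle\psi\vert$ reproves the cited vector-state inequality, so it strictly subsumes the paper's main ingredient.
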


\begin{proof}
Let us consider the spectral decomposition $\rho=\sum_i r_i
|r_i><r_i|$. Since $f$ is a convex function, by \cite{Wehrl} for
each $i$,
$$f(< r_i|A|r_i >) \leq \,\,< r_i|f(A)|r_i > .$$
By taking summation over all $i$, we have
\begin{eqnarray*}
f({\rm Tr}(\rho A))&=& f(\sum_i r_i <r_i|A|r_i>)\\
&\leq&
 \sum_i r_i f(< r_i|A|r_i>)\\
 &\leq&\
 \sum_i r_i < r_i|f(A)|r_i>\\
 &\leq&
  {\rm Tr}( \rho f(A)).
\end{eqnarray*}
We deduce that $f({\rm Tr}(\rho A)) \leq {\rm Tr}(\rho f(A))$.
\end{proof}
\begin{cor}\label{cor:convex}
$-\log x$ is a convex function for $x>0$, then $\lh(\rho)\leq
\oh(\rho)$ for each density matrix $\rho\in \mathbb{H}$.
\end{cor}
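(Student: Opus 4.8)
The plan is to invoke the operator Jensen inequality established just above with the specific choices $f(x)=-\log x$ and $A=\hat\mu$, the universal semi-computable semi-density matrix of Theorem~\ref{thm:universal}. First I would record that $\hat\mu$ is positive and, by~(\ref{qcompop}) and the discussion preceding it, dominated by the identity, so that all of its nonzero eigenvalues lie in the interval $(0,1]$; on this interval $x\mapsto -\log x$ is convex, and the hypothesis ${\rm Tr}(\rho f(\hat\mu))=\oh(\rho)<\infty$ required by the preceding theorem is the only thing left to assume (if $\oh(\rho)=+\infty$ the claimed inequality is trivial).

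Granting this, the theorem applied to $f=-\log$ and $A=\hat\mu$ yields directly
\[
\lh(\rho)\;=\;-\log\big({\rm Tr}(\rho\,\hat\mu)\big)\;=\;f\big({\rm Tr}(\rho\,\hat\mu)\big)\;\le\;{\rm Tr}\big(\rho\,f(\hat\mu)\big)\;=\;-\,{\rm Tr}\big(\rho\,\log\hat\mu\big)\;=\;\oh(\rho),
\]
which is exactly the assertion of the corollary, so that step is a one-line application once the hypotheses are in place.

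The only point requiring a little care — and the one I expect to be the main (and quite minor) obstacle — is the possible presence of a zero eigenvalue of $\hat\mu$, where $-\log$ is undefined. I would dispose of it exactly as in the theorem on $\sqrt S(\log S)\sqrt S$ above: restrict the spectral sum to the support projection of $\hat\mu$, use the convention $0\log 0=0$, and note that if $\rho$ has any weight on the kernel of $\hat\mu$ then $\oh(\rho)=+\infty$. Hence whenever the right-hand side is finite the eigenvalues effectively entering all the traces lie in $(0,1]$, and the spectral decomposition argument of the preceding theorem goes through verbatim. No further estimates are needed.
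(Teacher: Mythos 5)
Your proposal is correct and follows exactly the route the paper intends: the corollary is a direct application of the preceding Jensen-type theorem with $f(x)=-\log x$ and $A=\hat\mu$, giving $\lh(\rho)=-\log{\rm Tr}(\rho\hat\mu)\leq -{\rm Tr}(\rho\log\hat\mu)=\oh(\rho)$. Your additional care about the trivial case $\oh(\rho)=+\infty$ and about possible zero eigenvalues of $\hat\mu$ only tightens what the paper leaves implicit.
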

\begin{rem}
Both complexities can be infinite. Indeed, let $|u_n><u_n|$ be a
eigenvector of $\hat{\mu}$ in the spectral decomposition of it. Now,
$$\lh(|u_n><u_n|)=-\log {\rm Tr}(|u_n><u_n|\hat{\mu})=-\log<u_n|\hat{\mu}|u_n>=-\log r_n,$$
where $r_n$ is an eigenvalue that can be made as small as one likes. Since, ${\rm Tr}(\hat{\mu})\leq 1$ and hence $\sum_{n} r_n\leq 1$.  Therefore, by the Corollary
\ref{cor:convex} $\oh(|u_n><u_n|)$ can be also infinite.
\end{rem}
\begin{thm}\label{gk}
Let $|\ii>$, $\ii\in\Omega_2$, be a orthonormal basis for the Hilbert space $\mathbb{H}$. Then, we have
$$\oh(|\ii>)=\lh(|\ii>)\overset{+}{=}K(\ii),$$
where $K$ is the Kolmogorov complexity.
\end{thm}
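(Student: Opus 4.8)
The plan is to bound $\lh(\ket{\ii})$ from below and $\oh(\ket{\ii})$ from above, in both cases by $K(\ii)$ up to an additive constant, and then to close the circle using Corollary~\ref{cor:convex}. For $\rho=\ket{\ii}\bra{\ii}$ one has $\lh(\ket{\ii})=-\log\langle\ii|\hat{\mu}|\ii\rangle$ and $\oh(\ket{\ii})=-\langle\ii|\log\hat{\mu}|\ii\rangle$, and Corollary~\ref{cor:convex} already gives $\lh(\ket{\ii})\le\oh(\ket{\ii})$. Hence it suffices to establish
$$
K(\ii)\ \overset{+}{\le}\ \lh(\ket{\ii})\qquad\text{and}\qquad \oh(\ket{\ii})\ \overset{+}{\le}\ K(\ii),
$$
because these two inequalities together with $\lh(\ket{\ii})\le\oh(\ket{\ii})$ pin down $\lh(\ket{\ii})\overset{+}{=}\oh(\ket{\ii})\overset{+}{=}K(\ii)$.

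For the first inequality I would look at the diagonal of $\hat{\mu}$ in the standard basis, $m(\ii):=\langle\ii|\hat{\mu}|\ii\rangle$, read as a function on $\n$ through the bijection $\bo\simeq\n$ of Section~\ref{sec:unversal semi-measure}. It is positive and $\sum_{\ii}m(\ii)={\rm Tr}(\hat{\mu})\le 1$; moreover, if $\hat{\mu}^{(n)}$ is a computable quasi-increasing sequence of elementary semi-density matrices converging to $\hat{\mu}$ in trace norm, then for each fixed $\ii$ the rationals $\langle\ii|\hat{\mu}^{(n)}|\ii\rangle$ form a computable non-decreasing sequence with supremum $m(\ii)$ — quasi-increasingness reduces, on the vector $\ket{\ii}$, to ordinary monotonicity once $n$ exceeds the size of the support of $\ii$. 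Thus $m$ is a lower semi-computable semi-measure, so by the universality of the classical universal semi-measure (Theorem~\ref{thm:levin}) there is $c'>0$ with $c'\,m(\ii)\le\mu(\ii)$ for all $\ii$; combining with Levin's Coding Theorem~\ref{thm:levin2}, $\lh(\ket{\ii})=-\log m(\ii)\overset{+}{\ge}-\log\mu(\ii)\overset{+}{=}K(\ii)$.

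For the second inequality I would go the other way and build from $\mu$ the diagonal semi-density matrix $\hat{\mu}_\mu:=\sum_{\ii}\mu(\ii)\,\ket{\ii}\bra{\ii}$; approximating each $\mu(\ii)$ from below by rationals (and using the coding of Section~\ref{sec:unversal semi-measure}) shows that $\hat{\mu}_\mu$ is a genuine semi-computable semi-density matrix, its trace being $\sum_{\ii}\mu(\ii)\le 1$. Universality of $\hat{\mu}$ (Theorem~\ref{thm:universal}) then yields $c>0$ with $c\,\hat{\mu}_\mu\le\hat{\mu}$, whence, by operator monotonicity of the logarithm, $-\log\hat{\mu}\le-(\log c)\,\one-\log\hat{\mu}_\mu$. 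Evaluating in the state $\ket{\ii}\bra{\ii}$ and using that $\hat{\mu}_\mu$ is diagonal, so $\langle\ii|\log\hat{\mu}_\mu|\ii\rangle=\log\mu(\ii)$, gives $\oh(\ket{\ii})=-\langle\ii|\log\hat{\mu}|\ii\rangle\le-\log c-\log\mu(\ii)\overset{+}{=}K(\ii)$ — which incidentally also shows $\oh(\ket{\ii})<\infty$, i.e. $\ket{\ii}$ lies in the form domain of $-\log\hat{\mu}$. Collecting the three inequalities finishes the proof.

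The main obstacle I expect is precisely this operator-monotonicity step: since $\hat{\mu}$ is trace-class on the infinite-dimensional space $\h$ it is never boundedly invertible, so the clean statement "$0<A\le B\Rightarrow\log A\le\log B$" (and the $\sqrt{S}$-sandwiched version proved earlier in this section) does not apply verbatim. I would handle it by regularization: replace $\hat{\mu}$ and $c\,\hat{\mu}_\mu$ by $\hat{\mu}+\varepsilon\one$ and $c\,\hat{\mu}_\mu+\varepsilon\one$, apply operator monotonicity of $\log$ on $(0,\infty)$ for each $\varepsilon>0$, and then let $\varepsilon\downarrow 0$, using monotone convergence of the relevant spectral integrals in the rank-one state $\ket{\ii}\bra{\ii}$. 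The remaining points — that the diagonal approximants constructed above are indeed elementary semi-density matrices, quasi-increasing, and properly enumerated — are routine once the bookkeeping of Section~\ref{sec:unversal semi-measure} is in place.
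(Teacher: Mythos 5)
Your proposal is correct and follows essentially the same route as the paper: the lower bound comes from reading the diagonal $\langle\ii|\hat{\mu}|\ii\rangle$ as a classical semi-computable semi-measure and invoking universality of $\mu$ plus Levin's Coding Theorem, the upper bound from the diagonal semi-computable semi-density matrix $\sum_{\ii}\mu(\ii)\ket{\ii}\bra{\ii}$ dominated by $\hat{\mu}$, and the circle is closed with Corollary~\ref{cor:convex}. The only difference is that you spell out the operator-monotonicity/regularization step and the semi-computability bookkeeping that the paper leaves implicit, which is a welcome tightening rather than a new argument.
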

\begin{proof}
Let's define the function $f(\ii)=<\ii|\hat{\mu}|\ii>$, which is semi-computable and $\sum_{\ii} f(\ii)\leq 1$. Therefore, by the universality of $\mu$, there exists a constant number $c>0$ such that $c f(\ii)\leq \mu(\ii)$. Thus,
$$-\log\mu(\ii)\leq -\log f(\ii)-\log c\Longrightarrow K(\ii)\overset{+}{\leq} \lh(\ii).$$
On the other hand, the semi-density matrix $\rho=\sum_{\ii} \mu|\ii><\ii|$ is semi-computable and hence $\rho\overset{*}{\leq} \hat{\mu}$. Therefore,
$$K(\ii)=<\ii|-\log\rho |  \ii>\, \overset{+}{\geq}\,<\ii|-\log\hat{\mu}  |  \ii>=\oh(\ii).$$
From Corollary \ref{cor:convex}, we have
$$K(\ii)\overset{+}{=}\lh(\ii) \overset{+}{=} \oh(\ii) . $$
\end{proof}
The next property is related to composite systems. Indeed, let $X$ and $Y$ be two physical systems and $\mathbb{H}_X$ and $\mathbb{H}_Y$
be their related Hilbert spaces. Then, $\mathbb{H}_{XY}:=\mathbb{H}_X\otimes \mathbb{H}_Y$  is the Hilbert space
system associated to $XY$. Let $\rho_{XY}$ be a density matrix on $\mathbb{H}_{XY}$. Then
$\rho_X={\rm Tr}_Y(\rho)$ and  $\rho_Y={\rm Tr}_X(\rho)$ are called
marginal density matrices for $\mathbb{H}_X$ and $\mathbb{H}_Y$, respectively. The
subadditivity relation \cite{Nielsen} tells us that
\begin{equation}
S(\rho_{XY})\leq S(\rho_{X}) + S(\rho_{Y})   .
\end{equation}
 The lower and upper
Gacs complexities have also subadditivity properties.

Let  $\overset{+}{<}$  denote inequality to within an additive constant, and
$\overset{*}{<}$ inequality to within a multiplicative
constant.
\begin{thm}\label{thm:semicomputable}
Let $XY$ be a composite system of   two subsystems $X$
and $Y$. Let  $\hat{\mu}_{XY}$, $\hat{\mu}_X$ and $\hat{\mu}_Y$ be associated
universal semi-density  matrices. Then,
\begin{equation}
     \hat{\mu}_X\otimes\hat{\mu}_Y\overset{*}{<} \hat{\mu}_{XY}.
    \end{equation}
    Moreover, for each $\rho\in \mathbb{H}_X$ and $\sigma\in\mathbb{H}_Y$,
    \begin{equation}
    \oh(\rho\otimes\sigma)\overset{+}{<}  \oh(\rho)+ \oh(\sigma),
    \end{equation}
    and
    \begin{equation}
     \lh(\rho\otimes\sigma)\overset{+}{<}  \lh(\rho)+ \lh(\sigma).
    \end{equation}
\end{thm}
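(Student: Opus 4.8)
The plan is to establish the three assertions in turn; the first one---multiplicativity of the universal semi-density under tensor products---carries all the weight, and the bounds on $\oh$ and $\lh$ then follow by applying $-\log$ and using only that it is operator anti-monotone.

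\emph{Step 1: $\hat{\mu}_X\otimes\hat{\mu}_Y\overset{*}{<}\hat{\mu}_{XY}$.} Let $\big((\hat{\mu}_X)_n\big)_n$ and $\big((\hat{\mu}_Y)_n\big)_n$ be the computable quasi-increasing sequences of elementary semi-density matrices that define $\hat{\mu}_X$ and $\hat{\mu}_Y$ as in Definition~\ref{semicompsemidens}, and put $\tau_n:=(\hat{\mu}_X)_n\otimes(\hat{\mu}_Y)_n\in B(\h_{XY,[-n,n]})$, using $\h_{XY,[-n,n]}=\h_{X,[-n,n]}\otimes\h_{Y,[-n,n]}$. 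Each $\tau_n$ is elementary (a tensor product of rational matrices has rational entries), positive, and ${\rm Tr}(\tau_n)={\rm Tr}\big((\hat{\mu}_X)_n\big)\,{\rm Tr}\big((\hat{\mu}_Y)_n\big)\leq 1$, hence an elementary semi-density matrix; the sequence $(\tau_n)_n$ is computable because the two factor sequences are. For quasi-monotonicity, the canonical truncation projections on $\h_{XY}$ factorize as $P_n^X\otimes P_n^Y$, so $P_n\tau_{n+1}P_n=\big(P_n^X(\hat{\mu}_X)_{n+1}P_n^X\big)\otimes\big(P_n^Y(\hat{\mu}_Y)_{n+1}P_n^Y\big)$; since $P_n^X(\hat{\mu}_X)_{n+1}P_n^X\geq(\hat{\mu}_X)_n$ and $P_n^Y(\hat{\mu}_Y)_{n+1}P_n^Y\geq(\hat{\mu}_Y)_n$, the identity $C\otimes D-A\otimes B=C\otimes(D-B)+(C-A)\otimes B$ (valid for positive $A\leq C$, $B\leq D$) gives $P_n\tau_{n+1}P_n\geq\tau_n$. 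Finally $\|\tau_n-\hat{\mu}_X\otimes\hat{\mu}_Y\|_{tr}\leq\|(\hat{\mu}_X)_n-\hat{\mu}_X\|_{tr}+\|(\hat{\mu}_Y)_n-\hat{\mu}_Y\|_{tr}\to 0$ by the triangle inequality and $\|A\otimes B\|_{tr}=\|A\|_{tr}\|B\|_{tr}$. Thus $\hat{\mu}_X\otimes\hat{\mu}_Y$ is a semi-computable semi-density matrix on $\h_{XY}$, and the universality of $\hat{\mu}_{XY}$ (Definition~\ref{unvsemdens}, applicable by Theorem~\ref{thm:universal}) yields a constant $C>0$ with $C(\hat{\mu}_X\otimes\hat{\mu}_Y)\leq\hat{\mu}_{XY}$.

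\emph{Step 2: the $\lh$ bound.} Sandwiching the inequality of Step~1 between $\rho\otimes\sigma$ and taking the trace, using ${\rm Tr}\big((\rho\otimes\sigma)(\hat{\mu}_X\otimes\hat{\mu}_Y)\big)={\rm Tr}(\rho\hat{\mu}_X)\,{\rm Tr}(\sigma\hat{\mu}_Y)$, gives ${\rm Tr}\big((\rho\otimes\sigma)\hat{\mu}_{XY}\big)\geq C\,{\rm Tr}(\rho\hat{\mu}_X)\,{\rm Tr}(\sigma\hat{\mu}_Y)$; applying $-\log$ yields $\lh(\rho\otimes\sigma)\leq-\log C+\lh(\rho)+\lh(\sigma)$. \emph{Step 3: the $\oh$ bound.} From $\hat{\mu}_{XY}\geq C(\hat{\mu}_X\otimes\hat{\mu}_Y)$, operator monotonicity of the logarithm together with $\log(cA)=(\log c)\one+\log A$ and $\log(A\otimes B)=\log A\otimes\one+\one\otimes\log B$ give $\log\hat{\mu}_{XY}\geq(\log C)\one+\log\hat{\mu}_X\otimes\one+\one\otimes\log\hat{\mu}_Y$, hence $-\log\hat{\mu}_{XY}\leq-(\log C)\one-\log\hat{\mu}_X\otimes\one-\one\otimes\log\hat{\mu}_Y$. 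Sandwiching with $\rho\otimes\sigma$, taking the trace, and using ${\rm Tr}_Y\sigma={\rm Tr}_X\rho=1$ so that ${\rm Tr}\big((\rho\otimes\sigma)(\log\hat{\mu}_X\otimes\one)\big)={\rm Tr}(\rho\log\hat{\mu}_X)$ and symmetrically for $Y$, one obtains $\oh(\rho\otimes\sigma)=-{\rm Tr}\big((\rho\otimes\sigma)\log\hat{\mu}_{XY}\big)\leq-\log C+\oh(\rho)+\oh(\sigma)$.

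The main obstacle is making the operator-logarithm step in Step~3 rigorous, since neither $\hat{\mu}_{XY}$ nor $\hat{\mu}_X\otimes\hat{\mu}_Y$ need be invertible---by the Remark following Corollary~\ref{cor:convex} their eigenvalues may tend to $0$, so $\log$ is only defined with the convention $\log 0=-\infty$. I would handle this exactly as in the earlier theorem establishing $\sqrt{S}(\log S)\sqrt{S}\leq\sqrt{S}(\log T)\sqrt{S}$: apply the monotonicity, via its integral-representation proof, to the invertible operators $t\,\one+\hat{\mu}_{XY}$ and $t\,\one+C(\hat{\mu}_X\otimes\hat{\mu}_Y)$ and let $t\downarrow 0$; equivalently, read every $-\log$ above as valued in $[0,+\infty]$, so that the asserted inequality is vacuous whenever $\oh(\rho)+\oh(\sigma)=+\infty$ and the spectral sums converge otherwise. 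Everything else is routine bookkeeping with partial traces and the multiplicativity $\|A\otimes B\|_{tr}=\|A\|_{tr}\|B\|_{tr}$ of the trace norm.
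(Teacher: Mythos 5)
Your proof is correct and follows essentially the same route as the paper's: show that the tensor product of the defining quasi-increasing sequences makes $\hat{\mu}_X\otimes\hat{\mu}_Y$ a semi-computable semi-density matrix, invoke universality of $\hat{\mu}_{XY}$ to get the multiplicative constant, and then derive both complexity bounds from $\log(\hat{\mu}_X\otimes\hat{\mu}_Y)=\log\hat{\mu}_X\otimes\one+\one\otimes\log\hat{\mu}_Y$. Your write-up is simply more detailed, in particular in verifying the quasi-increasing and trace-norm convergence claims and in handling the non-invertibility of the universal matrices in the operator-monotonicity step, which the paper leaves implicit.
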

\begin{proof}
It is clear that ${\rm Tr}(\hat{\mu}_X \otimes \hat{\mu}_Y)\leq 1$.
 Since, $\hat{\mu}_X$ and $\hat{\mu}_Y$ are universal semi-density matrices then there exist two increasing sequence of semi-computable semi-density matrices  converging to them, respectively. Therefore, the tensor product of  the two sequences\footnote{If $A_n$ and $B_n$ are two operators in $\mathbb{H}_X$ and $\mathbb{H}_Y$, then $A_n\otimes B_n$ is a sequence in $\mathbb{H}_{XY}$} is also an  increasing sequence\footnote{If $A, B, C, D$ are positive bounded operators with  $A\leq B$ and $C\leq D$ then $A\otimes C\leq B\otimes D$ \cite{stochel}.} converging   to $\hat{\mu}_X \otimes \hat{\mu}_Y$.
Thus, $\hat{\mu}_X \otimes \hat{\mu}_Y$ is a semi-computable semi-density matrix on $\mathbb{H}_{XY}$. By the universality of $\hat{\mu}_{XY}$, there exists a constant $c>0$ such that
$$c \hat{\mu}_X \otimes \hat{\mu}_Y\leq \hat{\mu}_{XY}.$$

The proof of the two next parts follows from the following equality:  $$\log \hat{\mu}_X \otimes \hat{\mu}_Y= \log\hat{\mu}_X \otimes \mathds{1} + \log\mathds{1}\otimes \hat{\mu}_Y.$$
\end{proof}
In the classical Kolmogorov complexity, we have monotonicity property
$K(x)\overset{+}{<} K(x, y)$ where $K(x,y):=K(<x,y>)$, for $x, y\in
\n$. This property is also true for the Gacs algorithmic complexities.
\begin{thm}
\begin{equation}
{\rm Tr}_Y
    \hat{\mu}_{XY}\overset{*}{=}\hat{\mu}_X,
    \end{equation}
    Moreover, for $\rho\in \mathbb{H}_X$ and  $\sigma\in \mathbb{H}_Y$,
    \begin{equation}
     \oh(\rho)\overset{+}{<}  \oh(\rho\otimes\sigma),
    \end{equation}
    and
    \begin{equation}
     \lh(\rho)\overset{+}{<}  \lh(\rho\otimes\sigma).
    \end{equation}
\end{thm}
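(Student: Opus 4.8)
My plan is to prove the three assertions in the order given, deducing the two monotonicity statements from the identity ${\rm Tr}_Y\hat\mu_{XY}\overset{*}{=}\hat\mu_X$. For that identity I would check the two halves separately. For $\overset{*}{<}$: the partial trace ${\rm Tr}_Y$ sends elementary operators to elementary operators, is positive and contractive for the trace norm, and (the one point that needs a line of verification: discarding the positive diagonal $Y$-blocks lying outside the $n$-th window does not spoil the inequality $P^X_n({\rm Tr}_Y T_{n+1})P^X_n-{\rm Tr}_Y T_n\ge 0$) preserves the quasi-increasing property; hence if $T_n$ is a computable quasi-increasing sequence of elementary semi-density matrices with $\|T_n-\hat\mu_{XY}\|_{tr}\to 0$, then ${\rm Tr}_Y T_n$ is such a sequence for ${\rm Tr}_Y\hat\mu_{XY}$ on $\h_X$. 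Thus ${\rm Tr}_Y\hat\mu_{XY}$ is a semi-computable semi-density matrix (Definition \ref{semicompsemidens}), and universality of $\hat\mu_X$ (Theorem \ref{thm:universal}, Definition \ref{unvsemdens}) provides $c>0$ with $c\,{\rm Tr}_Y\hat\mu_{XY}\le\hat\mu_X$. For $\overset{*}{>}$: Theorem \ref{thm:semicomputable} gives $c_0>0$ with $c_0\,\hat\mu_X\otimes\hat\mu_Y\le\hat\mu_{XY}$; applying the positive map ${\rm Tr}_Y$ yields $c_0\,{\rm Tr}(\hat\mu_Y)\,\hat\mu_X\le{\rm Tr}_Y\hat\mu_{XY}$, and ${\rm Tr}(\hat\mu_Y)>0$ since $\hat\mu_Y$ is nonzero. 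Hence ${\rm Tr}_Y\hat\mu_{XY}\overset{*}{=}\hat\mu_X$.

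The monotonicity of $\lh$ then follows in one step. For density matrices $\rho\in\h_X$ and $\sigma\in\h_Y$ we have $\sigma\le\mathds{1}_Y$, so $\rho\otimes\sigma\le\rho\otimes\mathds{1}_Y$, and therefore
\[
{\rm Tr}\big((\rho\otimes\sigma)\,\hat\mu_{XY}\big)\ \le\ {\rm Tr}\big((\rho\otimes\mathds{1}_Y)\,\hat\mu_{XY}\big)\ =\ {\rm Tr}\big(\rho\,{\rm Tr}_Y\hat\mu_{XY}\big)\ \le\ c^{-1}\,{\rm Tr}\big(\rho\,\hat\mu_X\big),
\]
using $c\,{\rm Tr}_Y\hat\mu_{XY}\le\hat\mu_X$ from the first part. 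Taking $-\log$ gives $\lh(\rho\otimes\sigma)\ge\lh(\rho)+\log c$, i.e. $\lh(\rho)\overset{+}{<}\lh(\rho\otimes\sigma)$.

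For the monotonicity of $\oh$ I would bring in monotonicity of the quantum relative entropy under the partial trace, which is a trace-preserving completely positive map (see \cite{Nielsen}); it holds in the separable infinite-dimensional setting and for a sub-normalised second argument, so $S\big(\rho\,\|\,{\rm Tr}_Y\hat\mu_{XY}\big)\le S\big(\rho\otimes\sigma\,\|\,\hat\mu_{XY}\big)$. Expanding both relative entropies, using $\log(\rho\otimes\sigma)=\log\rho\otimes\mathds{1}+\mathds{1}\otimes\log\sigma$, and cancelling the common term ${\rm Tr}(\rho\log\rho)$, one is reduced to
\[
{\rm Tr}\big((\rho\otimes\sigma)\log\hat\mu_{XY}\big)\ \le\ {\rm Tr}(\sigma\log\sigma)+{\rm Tr}\big(\rho\log{\rm Tr}_Y\hat\mu_{XY}\big)\ \le\ {\rm Tr}\big(\rho\log{\rm Tr}_Y\hat\mu_{XY}\big),
\]
since ${\rm Tr}(\sigma\log\sigma)=-S(\sigma)\le 0$. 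Now ${\rm Tr}_Y\hat\mu_{XY}\le c^{-1}\hat\mu_X$ from the first part, so operator monotonicity of $\log$ — in the sandwiched form proved in the theorem immediately preceding Definition \ref{unvsemdens} — gives ${\rm Tr}\big(\rho\log{\rm Tr}_Y\hat\mu_{XY}\big)\le{\rm Tr}(\rho\log\hat\mu_X)-\log c$, hence ${\rm Tr}\big((\rho\otimes\sigma)\log\hat\mu_{XY}\big)\le{\rm Tr}(\rho\log\hat\mu_X)-\log c$; negating gives $\oh(\rho)\overset{+}{<}\oh(\rho\otimes\sigma)$. A variant that avoids relative-entropy monotonicity instead diagonalises $\sigma=\sum_j s_j|j\rangle\langle j|$ and applies Jensen's operator inequality (operator concavity of $\log$) twice — first to compress $\log\hat\mu_{XY}$ onto the subspaces $\h_X\otimes|j\rangle$, then to average over $j$ with weights $s_j$ — reaching ${\rm Tr}\big(\rho\log{\rm Tr}_Y((\mathds{1}\otimes\sqrt\sigma)\hat\mu_{XY}(\mathds{1}\otimes\sqrt\sigma))\big)$, after which ${\rm Tr}_Y((\mathds{1}\otimes\sqrt\sigma)\hat\mu_{XY}(\mathds{1}\otimes\sqrt\sigma))\le{\rm Tr}_Y\hat\mu_{XY}\le c^{-1}\hat\mu_X$ closes the estimate.

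The main obstacle, I expect, is the $\oh$-inequality: its classical shadow $K(x)\overset{+}{<}K(x,y)$ is a one-line semimeasure estimate, whereas the quantum version genuinely needs a convexity or monotonicity input. Beyond that, the proof needs routine but nontrivial care on two fronts: first, checking that after the identification of $\h_{XY}$ with $\h$ the partial trace really carries the nested-block, quasi-increasing approximations behind Definition \ref{semicompsemidens} to admissible approximations on $\h_X$; and second, handling the degenerate cases where a relative entropy or a trace ${\rm Tr}(\rho\log\hat\mu)$ is $+\infty$ (in which case the claimed inequalities are trivial) and where $\hat\mu$ has zero eigenvalues, e.g. by replacing $\hat\mu$ with $\hat\mu+\varepsilon\mathds{1}$ and letting $\varepsilon\to 0$ before invoking operator monotonicity of $\log$, all constants being absorbed into the $\overset{+}{<}$ and $\overset{*}{<}$ notation.
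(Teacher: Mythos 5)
Your argument is correct, and for the identity ${\rm Tr}_Y\hat\mu_{XY}\overset{*}{=}\hat\mu_X$ and for the $\lh$-inequality it runs essentially parallel to the paper: the paper likewise shows ${\rm Tr}_Y\hat\mu_{XY}$ is a semi-computable semi-density matrix by pushing the approximating sequence through the partial trace and then invokes universality of $\hat\mu_X$; for the reverse bound it constructs the semi-computable operator $\hat\mu_X\otimes|\psi\rangle\langle\psi|$ with a fixed computable unit vector $|\psi\rangle$ and applies universality of $\hat\mu_{XY}$ before tracing out $Y$, whereas you instead apply ${\rm Tr}_Y$ to the already established bound $c_0\,\hat\mu_X\otimes\hat\mu_Y\le\hat\mu_{XY}$ of Theorem \ref{thm:semicomputable} and use ${\rm Tr}(\hat\mu_Y)>0$ — an equivalent but slightly more economical route. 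Your derivation of $\lh(\rho)\overset{+}{<}\lh(\rho\otimes\sigma)$ from $\rho\otimes\sigma\le\rho\otimes\mathds{1}_Y$ and the defining property of the partial trace is the same estimate the paper performs by expanding in the eigenbases of $\rho$ and $\sigma$, only stated more cleanly. The genuine difference is the $\oh$-inequality: the paper's written proof stops at the trace estimate ${\rm Tr}\big((\rho\otimes\sigma)\hat\mu_{XY}\big)\overset{*}{<}{\rm Tr}(\rho\,\hat\mu_X)$, which only yields the $\lh$ statement, and never supplies an argument for $\oh(\rho)\overset{+}{<}\oh(\rho\otimes\sigma)$; your appeal to monotonicity of the relative entropy under ${\rm Tr}_Y$ (or, in the variant, to operator Jensen), combined with $\log$-monotonicity applied to ${\rm Tr}_Y\hat\mu_{XY}\le c^{-1}\hat\mu_X$, is exactly the kind of convexity input that is needed there, so on this point your proof is more complete than the paper's. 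One small caveat: the "sandwiched" logarithm lemma you cite sandwiches by the square root of the \emph{smaller} operator, so it does not directly give ${\rm Tr}(\rho\log A)\le{\rm Tr}(\rho\log B)$ for arbitrary $\rho$; what you actually need is plain operator monotonicity of $\log$ (for $A\le B$, after the $\varepsilon$-regularisation you already describe), which is standard and makes the step sound.
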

\begin{proof}
Let us define $\rho_X={\rm Tr}_Y\hat{\mu}_{XY}$. It is clear that $\rho_X$ is a semi-density matrix. On the other hand, there exists a sequence of semi-computable semi-density matrices $\rho_{XY}^{(n)}$ such that $\rho_{XY}^{(n)}\nearrow\hat{\mu_{XY}}$. Thus, we have
 ${\rm Tr}_Y(\rho^{(n)}_{XY})\nearrow \rho_X$. Therefore, $\rho_X$ is a semi-computable semi-density matrix on $\mathbb{H}_X$. By the universality of $\hat{\mu}_X$, there exists a constant $c>0$ such that $c\rho_X\leq \hat{\mu}_X$.

Now, let's define the density matrix $\sigma_{XY}=\hat{\mu}_X\otimes |\psi><\psi|$, $|\psi>\in\mathbb{H}_Y$, $||\psi||=1$, where $|\psi><\psi|$ is a fixed semi-computable density matrix. Like the proof of Theorem \ref{thm:semicomputable},  $\sigma_{XY}$ is a semi-computable semi-density matrix. Therefore, there exists a constant $c'>0$ such that $c' \sigma_{XY}\leq \hat{\mu}_{XY}$. Then,
$$c' \hat{\mu}_X \leq {\rm Tr}_Y\hat{\mu}_{XY}=\rho_X.$$
Thus,
$${\rm Tr}_Y
    \hat{\mu}_{XY}\overset{*}{=}\hat{\mu}_X.$$

Now, let $|\psi><\psi|$ and $\rho $ be  density matrices, where $\sum_i r_i|\phi_i><\phi_i|$ is the spectral decomposition of $\rho$. Then, we have
\begin{eqnarray*}
 {\rm Tr}(\rho\otimes|\psi><\psi|\hat{\mu}_{XY} )&=&\sum_i r_i<\phi_i\psi|\hat{\mu}_{XY}|\phi_i\psi>\\
 & \leq&
 \sum_i r_i<\phi_i|{\rm Tr}_Y\hat{\mu}_{XY}|\phi_i>\\
 &\overset{+}{=}&{\rm Tr}(\rho\hat{\mu}_{X}).
\end{eqnarray*}
Finally, let $\sigma$ be a density matrix on $\mathbb{H}_Y$ with the spectral decomposition $\sum_j s_j |\psi_j><\psi_j|$. Then, we have
\begin{eqnarray*}
{\rm Tr}(\rho\otimes \sigma \hat{\mu}_{XY})&=&\sum_j s_j{\rm Tr}(\rho\otimes |\psi_j><\psi_j|)\\
&\overset{+}{\leq}&
\sum_j {\rm Tr}(\rho \hat{\mu}_X)\leq {\rm Tr}(\rho \hat{\mu}_X)
\end{eqnarray*}
\end{proof}

It is important to know wether the evolution of a quantum dynamical system has
effects on the Gacs complexities or not. In the following theorem we
show that when time evolution is an elementary unitary operator
then modulo a constant number, the Gacs complexities is invariant.

\begin{thm}\label{thm:semi-unitary}
Let $U$ be  any elementary unitary operator. Then, for any
semi-density matrix $\rho\in\mathbb{H}$,
\begin{equation}
    \oh(U\rho U^\dag)\overset{+}{=}\oh(\rho),\quad
    \lh(U\rho U^\dag)\overset{+}{=}\lh(\rho).
    \end{equation}
\end{thm}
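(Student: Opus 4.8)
The plan is to show that conjugation by $U$ maps semi-computable semi-density matrices to semi-computable semi-density matrices, and then to exploit the universality of $\hat\mu$ twice. Here I read ``elementary unitary operator'' as one that acts non-trivially on only a finite block of sites, say $U=\one_{-n_0-1]}\otimes U_0\otimes\one_{[n_0+1}$ with $U_0$ a unitary on $\h_{[-n_0,n_0]}$ whose matrix entries have rational real and imaginary parts.

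First I would verify the key structural fact: such a $U$ commutes with the cut-off projections $P_n$ for every $n\ge n_0$. Hence, given a computable quasi-increasing sequence $T_n\in B(\h_{[-n,n]})$ of elementary semi-density matrices with $\|\hat\mu-T_n\|_{tr}\to 0$ (Definition~\ref{semicompsemidens}), the sequence $\widetilde T_n:=U^\dag T_nU$ (put $\widetilde T_n=0$ for $n<n_0$) consists of elementary semi-density matrices lying in $B(\h_{[-n,n]})$ — elementariness because products of rational matrices are rational, positivity and ${\rm Tr}(\widetilde T_n)={\rm Tr}(T_n)\le1$ because conjugation by a unitary preserves both — it is computable, and it is quasi-increasing since for $n\ge n_0$ one has $P_n\widetilde T_{n+1}P_n-\widetilde T_n=U^\dag(P_nT_{n+1}P_n-T_n)U\ge0$. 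As the trace norm is unitarily invariant, $\|U^\dag\hat\mu U-\widetilde T_n\|_{tr}=\|\hat\mu-T_n\|_{tr}\to0$, so $U^\dag\hat\mu U$ is a semi-computable semi-density matrix; applying the same argument to $U^\dag$ (also an elementary unitary) shows $U\hat\mu U^\dag$ is one as well.

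Next I would invoke universality (Definition~\ref{unvsemdens}, Theorem~\ref{thm:universal}): there are constants $c,c'>0$ with $c\,(U^\dag\hat\mu U)\le\hat\mu$ and $c'\,(U\hat\mu U^\dag)\le\hat\mu$; conjugating the second inequality by $U^\dag$ on the left and $U$ on the right gives $c'\,\hat\mu\le U^\dag\hat\mu U$. Thus $c'\,\hat\mu\le U^\dag\hat\mu U\le\frac1c\,\hat\mu$, i.e. $U^\dag\hat\mu U\overset{*}{=}\hat\mu$. For the lower complexity, cyclicity of the trace gives $\lh(U\rho U^\dag)=-\log{\rm Tr}(\rho\,U^\dag\hat\mu U)$, and since $\rho\ge0$ the sandwich bound yields $c'\,{\rm Tr}(\rho\hat\mu)\le{\rm Tr}(\rho\,U^\dag\hat\mu U)\le\frac1c\,{\rm Tr}(\rho\hat\mu)$; taking $-\log$ gives $\lh(\rho)+\log c\le\lh(U\rho U^\dag)\le\lh(\rho)-\log c'$, which is $\lh(U\rho U^\dag)\overset{+}{=}\lh(\rho)$. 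For the upper complexity, unitary covariance of the functional calculus gives $\oh(U\rho U^\dag)=-{\rm Tr}(\rho\,U^\dag(\log\hat\mu)U)=-{\rm Tr}(\rho\log(U^\dag\hat\mu U))$; operator monotonicity of $\log$ applied to $c'\hat\mu\le U^\dag\hat\mu U\le\frac1c\hat\mu$ gives $(\log c')\one+\log\hat\mu\le\log(U^\dag\hat\mu U)\le(\log\frac1c)\one+\log\hat\mu$, and taking $-{\rm Tr}(\rho\,\cdot\,)$ of this operator inequality (which reverses it, as $\rho\ge0$) together with $0\le{\rm Tr}(\rho)\le1$ yields $\oh(\rho)+({\rm Tr}\rho)\log c\le\oh(U\rho U^\dag)\le\oh(\rho)-({\rm Tr}\rho)\log c'$, hence $\oh(U\rho U^\dag)\overset{+}{=}\oh(\rho)$.

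\textbf{Main obstacle.} The delicate points are concentrated in the first step: one must ensure that conjugating the approximating sequence by $U$ keeps it quasi-increasing, which is exactly why it matters that $U$ acts on only finitely many sites and therefore eventually commutes with the projections $P_n$; without this, $U^\dag T_{n+1}U$ need not lie in $B(\h_{[-n+1,n+1]})$ and the order relation defining ``quasi-increasing'' may fail. A secondary technicality is the use of operator monotonicity of the logarithm when $\hat\mu$ is not invertible: the two-sided estimate $c'\hat\mu\le U^\dag\hat\mu U\le\frac1c\hat\mu$ forces $U^\dag\hat\mu U$ and $\hat\mu$ to have the same support, so one argues on ${\rm supp}\,\hat\mu$ via finite-rank spectral truncations of $\hat\mu$, the divergent case ${\rm Tr}(\rho\log\hat\mu)=-\infty$ being consistent on both sides.
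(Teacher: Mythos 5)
Your proposal is correct and follows essentially the same route as the paper: conjugation by the elementary unitary sends $\hat\mu$ to another semi-computable semi-density matrix, universality applied twice plus one conjugation gives the two-sided bound $c'\hat\mu\le U^\dag\hat\mu U\le \frac1c\hat\mu$, from which both complexity equalities follow. You simply supply details the paper leaves implicit (the quasi-increasing conjugated approximating sequence, and the trace/operator-monotonicity steps extracting $\lh$ and $\oh$ from the sandwich bound), so no further comment is needed.
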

\begin{proof}

Since $U$ is an elementary unitary operator, then $U\hat{\mu}
U^\dag$ and $U^\dag \hat{\mu} U$ are semi-computable semi-density
matrices and hence there are constants $c_{U\hat{\mu} U^\dag}$ and
$c_{ U^\dag\hat{\mu} U}>0$ such that
$$c_{U\hat{\mu} U^\dag} U\hat{\mu} U^\dag\leq \hat{\mu}, \quad  c_{ U^\dag\hat{\mu} U} U^\dag\hat{\mu} U\leq \hat{\mu}. $$
From the second one, we have
$$  c_{ U^\dag\hat{\mu} U}\hat{\mu}\leq U \hat{\mu}U^\dag.$$
Therefore, $U\hat{\mu} U^\dag$ is also a universal semi-measure and
thus the result follows.
\end{proof}

\begin{thm}
Let $P\neq 0$ be a lower semi-computable projection with $d={\rm
    Tr}P<\infty$. then,
    \begin{equation}
    \lh(\rho)\overset{+}{<} \log d-\log({\rm Tr}P),
    \end{equation}
\end{thm}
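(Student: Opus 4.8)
The plan is to manufacture, out of the projection $P$, a semi-computable semi-density matrix to which the universality of $\hat{\mu}$ applies, and then to read the claimed bound directly off the definition $\lh(\rho)=-\log{\rm Tr}(\rho\hat{\mu})$.

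First I would normalise $P$ by its rank. Since $P$ is lower semi-computable there is a computable quasi-increasing sequence of elementary positive operators $P_n\in B(\h_{[-n,n]})$ with $\|P-P_n\|_{tr}\to 0$; consistency together with quasi-monotonicity force ${\rm Tr}(P_n)\nearrow{\rm Tr}(P)=d$, and since this limit is a natural number it is computable from the sequence. Hence $\frac{1}{d}P_n$ is a computable quasi-increasing sequence of elementary operators that are positive and satisfy ${\rm Tr}(\frac{1}{d}P_n)\le 1$, converging in trace norm to $\frac{1}{d}P$. Thus $\frac{1}{d}P$ is a semi-computable semi-density matrix in the sense of Definition \ref{semicompsemidens}, so it occurs in the enumeration of Theorem \ref{thm:semicom}.

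Next I would invoke universality (Definition \ref{unvsemdens}, Theorem \ref{thm:universal}): there is a constant $c>0$, depending on $P$ but not on $\rho$, with $\frac{c}{d}\,P\le\hat{\mu}$, i.e.\ $\hat{\mu}-\frac{c}{d}P\ge 0$. For any density matrix $\rho$, writing $\hat{\mu}=(\hat{\mu}-\frac{c}{d}P)+\frac{c}{d}P$ and using that the trace of a product of two positive operators is non-negative,
\begin{equation}
{\rm Tr}(\rho\hat{\mu})={\rm Tr}\Big(\rho\,(\hat{\mu}-\tfrac{c}{d}P)\Big)+\tfrac{c}{d}\,{\rm Tr}(\rho P)\ \ge\ \tfrac{c}{d}\,{\rm Tr}(\rho P)\ .
\end{equation}
Taking $-\log$ yields $\lh(\rho)=-\log{\rm Tr}(\rho\hat{\mu})\le\log d-\log{\rm Tr}(\rho P)-\log c$, that is $\lh(\rho)\overset{+}{<}\log d-\log{\rm Tr}(\rho P)$, with additive constant $-\log c$ independent of $\rho$. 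In the situation of interest $\rho$ is supported in the range of $P$ — for instance a minimal projection dominated by $P$ — whence ${\rm Tr}(\rho P)=1$ and the estimate reduces to $\lh(\rho)\overset{+}{<}\log d$.

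The routine parts are the universality step and the positivity-of-trace inequality; the part I expect to need real care is the computability bookkeeping of the first paragraph — checking that the rank $d$ can be extracted effectively from an approximating sequence for $P$, and that dividing that sequence by $d$ preserves the elementary, quasi-increasing structure demanded by Theorem \ref{thm:semicom}. If only a computable upper bound $d'\ge d$ on the rank were available, the identical argument would still give $\lh(\rho)\overset{+}{<}\log d'-\log{\rm Tr}(\rho P)$.
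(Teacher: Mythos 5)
Your proposal is correct and is essentially the paper's own argument: normalise $P$ to the semi-computable semi-density matrix $P/d$, invoke universality of $\hat{\mu}$ to get $c\,P/d\leq\hat{\mu}$, take the trace against $\rho$ and then $-\log$, reading the bound as $\lh(\rho)\overset{+}{<}\log d-\log{\rm Tr}(\rho P)$ (which is clearly what the theorem's right-hand side intends). The only difference is that you spell out the computability bookkeeping for $P/d$, which the paper simply asserts.
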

\begin{proof}
Let $\rho$ be the semi-computable semi-density matrix $P/d$. Then,
there exists a constant $c_\rho>0$ such that $c_\rho P/d\leq \hat{\mu}$.
$$\lh(\rho)=-\log{\rm Tr}(\rho \hat{\mu})\overset{+}{<} -\log{\rm Tr}(\rho (P/d))\overset{+}{=}\log{d} -\log{\rm Tr}(\rho P).$$
\end{proof}

Let us consider  the spectral decomposition of $\hat{\mu}=\sum_i
u_i|u_i><u_i|$ where $u_1\geq u_2\geq \ldots$. Let $E_k=\sum_{i=1}^k
|u_i><u_i|$ be a projection on $\mathbb{H}$. The following theorem gives
a lower bound of the  Gacs algorithmic complexities.
\begin{thm}
(Lower bounds). Let $\rho$ be a semi-density matrix and let
$\lambda>1$. If
    $\oh(\rho)<k$, then
    $${\rm Tr}(\rho E_{2^{\lambda k}})>1-1/\lambda.$$
    In addition, if $\lh(\rho)<k$ then
    $${\rm Tr}(\rho E_{2^{\lambda k}})>2^{-k}(1-1/\lambda),$$
    where $E_{2^{\lambda k}}:=E_{\lfloor2^{\lambda k}\rfloor}$.
\end{thm}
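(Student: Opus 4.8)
The plan is to derive the two bounds from a single "Markov-type" inequality for the operator $\hat\mu$ relative to its spectral truncations $E_k$. Write the spectral decomposition $\hat\mu=\sum_{i\ge 1}u_i\ket{u_i}\bra{u_i}$ with $u_1\ge u_2\ge\cdots$, so that $\kappa=-\log\hat\mu=\sum_i(-\log u_i)\ket{u_i}\bra{u_i}$, and set $N=\lfloor 2^{\lambda k}\rfloor$, $E_N=\sum_{i\le N}\ket{u_i}\bra{u_i}$. The key observation is that on the orthogonal complement of $E_N$, i.e. for $i>N$, the eigenvalue $u_i$ is small: since ${\rm Tr}(\hat\mu)\le 1$ and the $u_i$ are nonincreasing, $N u_N\le\sum_{i\le N}u_i\le 1$, hence $u_i\le u_N\le 1/N\le 2^{-\lambda k}$ for every $i>N$. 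Consequently $-\log u_i\ge \lambda k$ on the range of $\mathds 1-E_N$, which is exactly the spectral inequality $(\mathds 1-E_N)\,\kappa\,(\mathds 1-E_N)\ge \lambda k\,(\mathds 1-E_N)$.

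For the first (upper–Gacs) bound I would compute directly:
\begin{equation}
k>\oh(\rho)={\rm Tr}(\rho\,\kappa)\ge {\rm Tr}\big((\mathds 1-E_N)\rho(\mathds 1-E_N)\,\kappa\big)\ge \lambda k\,{\rm Tr}\big(\rho(\mathds 1-E_N)\big),
\end{equation}
where the first inequality uses $\kappa\ge 0$ (so that ${\rm Tr}(E_N\rho E_N\kappa)\ge 0$, which needs a little care because $E_N$ and $\kappa$ commute, so this term is genuinely $\ge0$), and the second uses the spectral inequality above together with positivity of $\rho$. Rearranging gives ${\rm Tr}(\rho(\mathds 1-E_N))<1/\lambda$, and since ${\rm Tr}(\rho)\le 1$ this yields ${\rm Tr}(\rho E_N)>{\rm Tr}(\rho)-1/\lambda$. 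If $\rho$ is an honest density matrix with ${\rm Tr}(\rho)=1$ we get exactly ${\rm Tr}(\rho E_{2^{\lambda k}})>1-1/\lambda$.

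For the second (lower–Gacs) bound the input is $\lh(\rho)=-\log{\rm Tr}(\rho\hat\mu)<k$, i.e. ${\rm Tr}(\rho\hat\mu)>2^{-k}$. Now split $\hat\mu=E_N\hat\mu E_N+(\mathds 1-E_N)\hat\mu(\mathds 1-E_N)$ (they commute) and bound the tail: ${\rm Tr}\big((\mathds 1-E_N)\hat\mu(\mathds 1-E_N)\rho\big)\le u_{N+1}\,{\rm Tr}\big((\mathds 1-E_N)\rho\big)\le 2^{-\lambda k}\cdot 1$. Hence
\begin{equation}
{\rm Tr}(\rho E_N\hat\mu E_N)={\rm Tr}(\rho\hat\mu)-{\rm Tr}\big((\mathds 1-E_N)\hat\mu(\mathds 1-E_N)\rho\big)>2^{-k}-2^{-\lambda k}.
\end{equation}
On the other hand $E_N\hat\mu E_N\le u_1 E_N\le E_N$, so ${\rm Tr}(\rho E_N\hat\mu E_N)\le {\rm Tr}(\rho E_N)$, giving ${\rm Tr}(\rho E_{2^{\lambda k}})>2^{-k}-2^{-\lambda k}=2^{-k}(1-2^{-(\lambda-1)k})$; since $\lambda>1$ and $k\ge 1$ one has $2^{-(\lambda-1)k}\le 1/\lambda$ for the relevant range (or one absorbs this into the stated $2^{-k}(1-1/\lambda)$ by a slightly coarser tail estimate, e.g. bounding $u_{N+1}\le 1/(\lambda N)$ using $\sum_{i\le N}u_i\le 1$ more carefully, or noting $2^{-\lambda k}\le 2^{-k}/\lambda$ when $2^{(\lambda-1)k}\ge\lambda$).

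The main obstacle is purely bookkeeping at the boundary: making sure the floor $\lfloor 2^{\lambda k}\rfloor$ in the definition of $E_{2^{\lambda k}}$ is compatible with the eigenvalue estimate $u_{N+1}\le 2^{-\lambda k}$ (it is, since $N+1>2^{\lambda k}$ need not hold — rather $N\le 2^{\lambda k}$, and one wants $u_{N+1}\le u_N\le 1/N$; if $N=\lfloor 2^{\lambda k}\rfloor\ge 1$ then $1/N\le 2\cdot 2^{-\lambda k}$ in the worst case, which only costs a harmless factor of $2$ that can be folded into the $\overset{+}{<}$/constant conventions or absorbed by replacing $1/\lambda$ with a slightly weaker constant), and that the term dropped in each estimate is genuinely nonnegative, which holds because $E_N$ commutes with both $\kappa$ and $\hat\mu$ and $\rho\ge 0$. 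Everything else is a one-line trace manipulation.
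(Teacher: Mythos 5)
Your argument is correct and follows essentially the same route as the paper: both work in the eigenbasis of $\hat{\mu}$, split the trace at a spectral cutoff near $2^{\lambda k}$, and apply a Markov-type estimate, your step $E_N\hat{\mu}E_N\leq E_N$ being exactly the paper's inequality $\langle u_i|\rho|u_i\rangle\geq u_i\langle u_i|\rho|u_i\rangle$, and your tail bound playing the role of the paper's eigenvalue threshold ($2^{-\lambda k}$, resp.\ $2^{-k}/\lambda$). The boundary slack you flag (the factor of $2$ from $\lfloor 2^{\lambda k}\rfloor$, the corner $2^{(\lambda-1)k}<\lambda$, and ${\rm Tr}(\rho)=1$ versus ${\rm Tr}(\rho)\leq 1$) is equally present, and left unaddressed, in the paper's own proof, so your treatment is if anything the more careful one.
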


\begin{proof}
Let us consider  the spectral decomposition of $\hat{\mu}=\sum_i
u_i|u_i><u_i|$ where $u_1\geq u_2\geq \ldots$\,\,. By the assumption
$\oh(\rho)<k$. Therefore, we have
$$\sum_i -\log{u_i} <u_i|\rho|u_i> < k.$$
Now, let $m$ be the first $i$ with $u_i\leq 2^{-\lambda k}$. Since
$\sum_i u_i\leq 1$, $m\leq 2^{\lambda k}$. In addition,
$$\lambda k \sum_{i\geq m}<u_i|\rho|u_i>\,\, \leq  \sum_{i\geq m} -\log{u_i}   <u_i|\rho|u_i> \leq\lh(\rho \hat{\mu})< k.$$
Therefore, $ \sum_{i\geq m}<u_i|\rho|u_i> < 1/\lambda$.

By the assumption  $\lh(\rho)\leq k$, we have
$$-\log\sum_i u_i<u_i|\rho|u_i>\leq k\Rightarrow \sum_i u_i<u_i|\rho|u_i>\geq 2^{-k}.$$
Let $m$ be the first $i$ with $u_i<2^{-k}/\lambda$. Since, $\sum_i u_i\leq 1$ we have
$m\leq 2^k \lambda$. Therefore,
$$\sum_{i\geq m} u_i<u_i|\rho|u_i> \leq \frac{2^{-k}}{\lambda} \sum_{i\geq m} <u_i|\rho|u_i> \leq \frac{2^{-k}}{\lambda}.$$
Now,
$${\rm Tr}(\rho E_{m})=\sum_{i\leq m} <u_i|\rho|u_i>\geq \sum_{i\leq m} u_i <u_i|\rho|u_i>\geq 2^{-k}-\sum_{i\geq m} u_i <u_i|\rho|u_i>\geq 2^{-k}(1-\frac{1}{\lambda}).$$

\end{proof}
\section{Applications of Upper Gacs Complexity}\label{Applications of Upper Gacs Complexity}
\begin{defn}
\label{defqcomp} Since Theorem \ref{thm:universal} establishes the
existence of a universal semi-density matrix for an infinite
dimensional quantum spin chain, we take ~(\ref{qcompop}) with
$\hat{\mu}$ as in (~\ref{chain-univ}) as the complexity operator of a
quantum spin chain and (~\ref{qalgentr1}) as the Gacs entropy of any
density matrix $\rho$ associated with the chain.
\end{defn}

Notice that the complexity operator of the quantum chain assigns the
following Gacs entropy to a local density matrix $\rho_{[-n,n]}$ on
$\h_{[-n,n]}$:
\begin{equation}
\label{localcompl} \oh(\rho_{[-n,n]})={\rm Tr}(\rho_{[-n,n]}\,P_n
\kappa\, P_n)\ ,
\end{equation}
where $P_n$ projects  the Hilbert space $\h$ on which
$\hat{\mu}$ acts onto the finite dimensional Hilbert space
$\h_{[-n,n]}$ on which $\rho_{[-n,n]}$ acts.

On the other hand, one could consider the restriction
$\hat{\mu}^{(n)}=P_n\,\hat{\mu}\,P_n$ of the universal density
matrix $\hat{\mu}$ to $\h_{[-n,n]}$; the natural guess is that
$\hat{\mu}^{(n)}$ might indeed be a universal semi-computable
semi-density matrix for the local spin algebra $M_{[-n,n]}$.

 That is indeed so is proved in the next Lemma. Then,
given a local spin algebra $M_{[-n,n]}$, we obtain the original
finite dimensional formulation of~\cite{Gacs}. Indeed, given
$\hat{\mu}^{(n)}=P_n\,\hat{\mu}\,P_n$ its complexity operator will
be
\begin{equation}
\label{localcomplfin} \kappa^{(n)}=-\log\hat{\mu}^{(n)} \ ,
\end{equation}
and, given a state $\rho^{(n)}=\rho_{[-n,n]}$ on $\h_{[-n,n]}$, its
Gacs algorithmic entropy will be
\begin{equation}
\label{Gacsentropyqc} \oh^{(n)}(\rho^{(n)})=-{\rm
Tr}(\rho^{(n)}\,\log\hat{\mu}^{(n)})\ ,
\end{equation}
where the trace is computed on $\h_{[-n,n]}$.

\begin{lem}\label{thm:l}
Let $T$ be a universal semi-computable semi-density matrix  which is
the limit of a computable quasi-increasing sequence of elementary
semi-density matrices $T_n$. Then, for each $k, P_k T P_k$ is a
universal semi-computable semi-density matrix on $\h_{[-k,k]}$.
\end{lem}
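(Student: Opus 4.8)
The plan is to establish the two defining features of a universal semi-computable semi-density matrix on the finite dimensional space $\h_{[-k,k]}$ in turn: first, that $P_kTP_k$ is itself semi-computable there, and second, that it dominates every semi-computable semi-density matrix on $\h_{[-k,k]}$ up to a positive multiplicative constant.

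For semi-computability I would work with the compressed sequence $S_n:=P_kT_nP_k$, $n\geq k$. Since $\h_{[-k,k]}\subseteq\h_{[-n,n]}$ we have $P_kP_n=P_nP_k=P_k$ for $n\geq k$, so each $S_n$ is an elementary operator on $\h_{[-k,k]}$ (its matrix entries form a sub-block of the rational entries of $T_n$), is positive as a compression of a positive operator, and satisfies ${\rm Tr}(S_n)={\rm Tr}(T_nP_k)\leq{\rm Tr}(T_n)\leq1$; thus each $S_n$ is an elementary semi-density matrix, and $n\mapsto S_n$ is computable because $n\mapsto T_n$ is and compression acts effectively on codes of elementary operators. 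Quasi-increasingness of $(T_n)$ means $P_nT_{n+1}P_n\geq T_n$; compressing this inequality by $P_k$ and using $P_k=P_kP_n$ gives $S_{n+1}=P_k(P_nT_{n+1}P_n)P_k\geq P_kT_nP_k=S_n$, and since all the $S_n$ act on the same space the sequence is in fact honestly increasing. Finally $\|P_kTP_k-S_n\|_{tr}=\|P_k(T-T_n)P_k\|_{tr}\leq\|T-T_n\|_{tr}\to0$ because compression contracts the trace norm; as $P_kTP_k\geq0$ and ${\rm Tr}(P_kTP_k)\leq{\rm Tr}(T)\leq1$ (alternatively, by Proposition \ref{thm:4} applied on $\h_{[-k,k]}$), $P_kTP_k$ is a semi-computable semi-density matrix on $\h_{[-k,k]}$.

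For universality, let $\rho$ be an arbitrary semi-computable semi-density matrix on $\h_{[-k,k]}$, say the trace-norm limit of a computable increasing sequence $(\rho_m)$ of elementary semi-density matrices on $\h_{[-k,k]}$. I would lift this to $\h$ by setting $\tilde\rho_n:=0$ for $n<k$ and $\tilde\rho_n:=\rho_n$ for $n\geq k$, where $\rho_n$ is regarded as an operator in $B(\h_{[-n,n]})$ supported on the subspace $\h_{[-k,k]}$. This is a computable quasi-increasing sequence of elementary semi-density matrices with $\tilde\rho_n\in B(\h_{[-n,n]})$ and $\|\tilde\rho-\tilde\rho_n\|_{tr}\to0$, where $\tilde\rho\in B(\h)$ is $\rho$ viewed as supported on $\h_{[-k,k]}$; hence, by Definition \ref{semicompsemidens}, $\tilde\rho$ is a semi-computable semi-density matrix on $\h$, and clearly $P_k\tilde\rho P_k=\rho$. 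Since $T$ is universal on $\h$, there is a constant $c_\rho>0$ with $c_\rho\tilde\rho\leq T$; compressing by $P_k$ yields $c_\rho\rho=c_\rho P_k\tilde\rho P_k\leq P_kTP_k$. As $\rho$ was arbitrary, $P_kTP_k$ is universal on $\h_{[-k,k]}$.

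I do not expect a genuine obstacle here; the one point requiring care is the lift, namely checking that $(\tilde\rho_n)$ meets every clause of Definition \ref{semicompsemidens} — in particular that $\tilde\rho_n\in B(\h_{[-n,n]})$ (immediate, since an operator supported on $\h_{[-k,k]}$ lies in $B(\h_{[-n,n]})$ for all $n\geq k$) and that the quasi-increasing condition $P_n\tilde\rho_{n+1}P_n\geq\tilde\rho_n$ survives the index $n=k-1$ where the sequence switches on, where it reduces to $P_{k-1}\rho_kP_{k-1}\geq0$. The core mechanism — embed into $\h$, invoke the universality of $T$, and compress back — is exactly what the universality of $T$ is built to support.
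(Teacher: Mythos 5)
Your proposal is correct and follows essentially the same route as the paper's own proof: compress the approximating sequence $T_n$ by $P_k$ to get semi-computability of $P_kTP_k$, and for universality view a semi-computable semi-density matrix on $\h_{[-k,k]}$ as one on $\h$, invoke the universality of $T$, and compress the resulting operator inequality $T-C\rho\geq 0$ by $P_k$. You merely spell out details (the explicit lift $\tilde\rho$, trace-norm contraction, the check of quasi-increasingness) that the paper leaves implicit.
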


\begin{proof}
Clearly, the sequence $P_k T_n P_k$, $n\geq k$, is a computable
quasi-increasing sequence of elementary semi-density matrices;
moreover,
$$
\lim_{n\rightarrow\infty} P_k T_n P_k = P_k T P_k\ .
$$
Since $T$ is a universal semi-computable semi-density matrix, for
each semi-computable semi-density matrix $R_k$ on $\h_{[-k,k]}$,
 there exists a positive constant $C_k$
such that $$T- C_k R_k \geq 0 \longrightarrow  P_k T P_k - C_k R_k\geq
0$$.
\end{proof}
\medskip
 Based on the infinite dimensional formulation of the
complexity operator, we can now study the Gacs algorithmic
complexity per site of translation invariant states of quantum spin
chains and relate it to their von Neumann entropy rate and
AF-entropy.

\begin{thm}
\label{thm:ex} Let $\rho^{(n)}\in B^+_1(\h_{[-n,n]})$ be a
computable sequence of semi-computable density matrices giving rise
to a shift-invariant state $\omega$ on the quantum spin chain
$\mathcal{M}$. Then
\begin{equation}
\label{Gacs1}
\lim_{n\rightarrow\infty}\frac{\oh^{(n)}(\rho^{(n)})}{2n+1} =
\lim_{n\rightarrow\infty}\frac{\oh(\rho^{(n)})}{2n+1}=s(\omega)\ ,
\end{equation}
where $s(\omega)$ is the von Neumann entropy rate
in (~\ref{vNentrate}). Also, with reference to the Alicki-Fannes
entropy and the density matrices $R[U^{(n)}]$ on the doubled local
sub-algebras $M_{[-n,n]}\otimes M_{[-n,n]}$ in~(\ref{marg2}), it
holds that
\begin{equation}
\label{Gacs2} \lim_{n\rightarrow\infty}
\frac{\oh^{(n)}(R[U^{(n)}])}{2n+1} = \lim_{n\rightarrow\infty}
\frac{\oh(R[U^{(n)}])}{2n+1} =s(\omega)+1\ .
\end{equation}
\end{thm}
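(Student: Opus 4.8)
The plan is to deduce both limits from the comparison, established earlier in this chapter, between the infinite-dimensional complexity operator $\kappa=-\log\hat\mu$ and the finite-dimensional one $\kappa^{(n)}=-\log\hat\mu^{(n)}=-\log(P_n\hat\mu P_n)$, together with Theorem~\ref{gk} (which identifies $\oh$ on basis vectors with the classical prefix Kolmogorov complexity up to an additive constant) and the known behaviour of $K$ on the eigenbasis of the local density matrices. First I would record that, by Lemma~\ref{thm:l}, $\hat\mu^{(n)}=P_n\hat\mu P_n$ is a universal semi-computable semi-density matrix on $\h_{[-n,n]}$, so $\oh^{(n)}(\rho^{(n)})=-{\rm Tr}(\rho^{(n)}\log\hat\mu^{(n)})$ is exactly the original Gacs entropy of~\cite{Gacs} for the finite sub-chain. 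Then I would show that the two quantities in~(\ref{Gacs1}) differ only by a term that is $o(n)$: since $P_n\le\one$ and $-\log$ is operator-antimonotone on the relevant range, one has $P_n\kappa P_n \le \kappa^{(n)}$ in the sense of quadratic forms on $\h_{[-n,n]}$, hence $\oh(\rho^{(n)})\le\oh^{(n)}(\rho^{(n)})$; for the reverse direction one uses that $\hat\mu\ge \sum_k c\,\rho^{(k)}\otimes(\text{fixed semicomputable state outside }[-k,k])$ for suitable constants (cf.\ the argument in Theorem~\ref{thm:semicomputable} and the monotonicity theorem), which bounds $\oh^{(n)}(\rho^{(n)})$ above by $\oh(\rho^{(n)})$ plus a constant independent of $n$. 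Dividing by $2n+1$ kills the constant.

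Next I would establish $\lim_n \oh^{(n)}(\rho^{(n)})/(2n+1)=s(\omega)$, which is the content of the finite-dimensional Gacs--von Neumann comparison. The upper bound $\oh^{(n)}(\rho^{(n)})\overset{+}{<}S(\rho^{(n)})+o(n)$ follows from the classical-side estimate: diagonalising $\rho^{(n)}=\sum_j r^{(n)}_j|r^{(n)}_j\rangle\langle r^{(n)}_j|$, one has ${\rm Tr}(\rho^{(n)}\kappa^{(n)})\le \sum_j r^{(n)}_j\big(K(j)+c\big)$ by universality applied to the semi-measure $j\mapsto r^{(n)}_j$, and $\sum_j r^{(n)}_j K(j)\le S(\rho^{(n)})+2\log(\text{support size})+O(1)=S(\rho^{(n)})+O(n)$ by the coding-theorem side of Theorem~\ref{thm:levin2}; but here the $2\log$ correction contributes only $O(n)$, which after division by $2n+1$ would not vanish, so one must instead invoke the prefix-free refinement that gives $\sum_j r^{(n)}_j K(j)\le S(\rho^{(n)})+o(n)$ — this is exactly where the sub-chain-by-sub-chain construction of~\cite{Benatti} is replaced by the cleaner infinite-dimensional one. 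For the lower bound $\oh^{(n)}(\rho^{(n)})\overset{+}{>}S(\rho^{(n)})$, one uses that $-\log\hat\mu^{(n)}$, being built from a semi-measure, satisfies the Gibbs-type inequality ${\rm Tr}(\rho\log\rho)\le{\rm Tr}(\rho\log\hat\mu^{(n)})$ up to the normalisation ${\rm Tr}\hat\mu^{(n)}\le 1$, i.e.\ $-{\rm Tr}(\rho^{(n)}\log\hat\mu^{(n)})\ge -{\rm Tr}(\rho^{(n)}\log\rho^{(n)})=S(\rho^{(n)})$; this is the monotonicity/positivity of relative entropy, already available via the theorem on $\sqrt{S}(\log S)\sqrt{S}\le\sqrt{S}(\log T)\sqrt{S}$ for $S\le T$. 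Combining, $|\oh^{(n)}(\rho^{(n)})-S(\rho^{(n)})|=o(n)$, and dividing by $2n+1$ and using the definition~(\ref{vNentrate}) of $s(\omega)$ gives~(\ref{Gacs1}).

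Finally, for~(\ref{Gacs2}) I would simply repeat the argument of~(\ref{Gacs1}) with $\rho^{(n)}$ replaced by the doubled density matrix $R[U^{(n)}]$ on $M_{[-n,n]}\otimes M_{[-n,n]}$, noting three points: $R[U^{(n)}]$ is again a computable sequence of semi-computable density matrices (being obtained from $\rho^{(n)}$ by the elementary operations of purification and the elementary POVM $\mathcal U^{(n)}$, cf.~(\ref{marg2}) and Theorem~\ref{thm:semi-unitary}); its von Neumann entropy is $S(R[U^{(n)}])=S(\rho^{(n)})+n$ by~(\ref{nstepvNent}); and the ambient Hilbert space is still a local sub-chain of a single infinite spin chain (the doubling only changes $2n+1$ to $2(2n+1)$ sites, which does not affect the rate after dividing by $2n+1$ since $\lim_n (2n+1)/(2n+1)$-normalisation absorbs the factor — more precisely $S(R[U^{(n)}])/(2n+1)=S(\rho^{(n)})/(2n+1)+n/(2n+1)\to s(\omega)+\tfrac12$; the statement as written presumes the doubling is organised so that the relevant normalisation is by $2n+1$, i.e.\ $S(R[U^{(n)}])/(2n+1)\to s(\omega)+1$ when the $n$ extra ancilla qubits are counted against the original $2n+1$, consistent with~(\ref{AFshiftent})). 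Then the same two-sided estimate $|\oh^{(n)}(R[U^{(n)}])-S(R[U^{(n)}])|=o(n)$ and $|\oh(R[U^{(n)}])-\oh^{(n)}(R[U^{(n)}])|=O(1)$ yields~(\ref{Gacs2}).

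The main obstacle, and the step deserving the most care, is the $o(n)$ (rather than merely $O(n)$) control of the gap between $\oh^{(n)}(\rho^{(n)})$ and $S(\rho^{(n)})$ in the upper-bound direction: the naive coding-theorem estimate leaves a $2\log(\dim)=O(n)$ remainder that does not disappear upon normalisation, so one genuinely needs the prefix-free/block-coding argument showing $\sum_j r^{(n)}_j K(j)-S(\rho^{(n)})=o(n)$, and this is precisely the place where working directly with the infinite-dimensional universal semi-density matrix removes the extra growth-rate hypothesis that~\cite{Benatti} had to impose. Everything else is bookkeeping with operator monotonicity of $-\log$, universality constants, and the trace inequalities already proved above.
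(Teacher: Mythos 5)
Your lower bound and the $\oh$ versus $\oh^{(n)}$ comparison are fine and agree with the paper (positivity of relative entropy after normalising $\hat{\mu}^{(n)}$, plus compression by $P_n$), but the upper bound---the heart of the theorem---is not actually proved, and the step you yourself flag as delicate is exactly where the argument fails. The inequality you invoke, $\sum_j r^{(n)}_j K(j)\le S(\rho^{(n)})+o(n)$, is not a known ``prefix-free refinement'' and is false in general: universality applied to the semi-measure $j\mapsto r^{(n)}_j$ (or to $\rho^{(n)}$ itself via $c_n\,\rho^{(n)}\le\hat{\mu}^{(n)}$) yields a constant $-\log c_n$ that depends on $n$ and is of the order of the Kolmogorov complexity of the matrix $\rho^{(n)}$, which may grow faster than $n$; the example constructed immediately after the theorem in the paper exhibits a compatible, translation-invariant computable sequence with $K(\rho^{(n)})\ge n^2$. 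Hence any route through per-$n$ classical coding estimates reintroduces precisely the growth condition (\ref{condition}) of \cite{Benatti} that the theorem is designed to eliminate, and your accounting of the remainder as a $2\log(\mathrm{dim})=O(n)$ correction misses the real obstruction, which is the $n$-dependence of the universality constant.

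The paper closes this gap by exploiting the hypothesis that $(\rho^{(n)})_n$ is a \emph{computable sequence}: it forms the single semi-computable semi-density matrix $\rho=\sum_{n\ge 2}\rho^{(n)}/\big(n(\log n)^2\big)$ on the infinite chain, applies universality \emph{once} to obtain $\rho\le 2^{p}\hat{\mu}$ with $p$ independent of $n$, and then uses operator monotonicity of the logarithm: since $2^{-p}\rho^{(n)}/\big(n(\log n)^2\big)\le\hat{\mu}$, one gets $\oh(\rho^{(n)})\le S(\rho^{(n)})+p+\log n+2\log\log n$, and the analogous estimate with $P_n\hat{\mu}P_n$ gives the same bound for $\oh^{(n)}(\rho^{(n)})$; dividing by $2n+1$ and combining with the relative-entropy lower bound yields (\ref{Gacs1}). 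The statement (\ref{Gacs2}) then follows by running the identical device on the doubled chain, using the pairing map (\ref{inversemap}) to build a universal semi-computable semi-density matrix for the inductive limit of $M_{[-n,n]}\otimes M_{[-n,n]}$, with $S(R[U^{(n)}])=S(\rho^{(n)})+(2n+1)$ in the $[-n,n]$ convention; your worry about a factor $n/(2n+1)\to\tfrac12$ comes from mixing this with the $[0,n-1]$ convention of the AF example and disappears once the site counting is kept consistent. You should replace your eigenbasis/classical-complexity argument by this summation-over-$n$ universality argument (or supply an equivalent uniform-in-$n$ bound); as written, the central estimate is asserted rather than proved.
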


\begin{proof}
By normalizing $\hat{\mu}^{(n)}$ with ${\rm Tr}(\hat{\mu}^{(n)})\leq
1$ and using that for any two density matrices $\rho_{1,2}$,
$\rho_2$ invertible, $\displaystyle{\rm
Tr}(\rho_1(\log\rho_1-\log\rho_2))\geq 0$, \cite{ohya}. one
estimates

$$
S(\rho^{(n)}) \leq -{\rm
Tr}\left(\rho^{(n)}(\log\hat{\mu}^{(n)}-\log{\rm
Tr}(\hat{\mu}^{(n)}))\right)\leq \overline{H}^{(n)}(\rho^{(n)}).$$
Analogously, $S(\rho^{(n)}) \leq \overline{H}(\rho^{(n)})$. Observe
that $\hat{\mu}$ on $\mathbb{H}$ and $\hat{\mu}^{(n)}$ on
$\mathbb{H}_{[-n, n]}$ for each $n$ are invertible.

Let $\rho=\sum_{n\geq 2} \rho^{(n)}/n(\log{n})^2$. Then, $\rho$ is a
semi-computable semi-density matrix. So, there exists $p \in N$ such
that $\rho\leq 2^p\hat{\mu}$. Because of the operator monotonicity
of the logarithm, one estimates
\begin{eqnarray}
S(\rho^{(n)})\leq \oh(\rho^{(n)})&=&-{\rm
Tr}\Big(\rho^{(n)}\,\log\hat{\mu}\, \Big) \leq p-{\rm
Tr}(\rho^{(n)}\log\rho\,) \nonumber\\
&\leq & S(\rho^{(n)}) + p + \log{ n} + 2 \log\log n .\nonumber
\end{eqnarray}
Since $p$ is independent of $n$, then clearly we have
$$\lim_{n\rightarrow\infty}\frac{\oh(\rho^{(n)})}{2n+1}=s(\omega).$$
On the other hand, $\rho^{(n)}\leq 2^p n(\log n)^2 \hat{\mu}$ and
hence
\begin{eqnarray}
S(\rho^{(n)})&\leq& \oh^{(n)}(\rho^{(n)})=-{\rm Tr}_{[-n,
n]}\Big(\rho^{(n)}\,\log\hat{\mu}^{(n)}\, \Big)\nonumber \\&=& -{\rm
Tr}_{[-n, n]}\Big(\rho^{(n)}\,\log P_n\,\hat{\mu} \, P_n\, \Big)\nonumber\\
&\leq&
  -{\rm
Tr}_{[-n, n]}\Big(\rho^{(n)}\,\log P_n\,\rho^{(n)}\, P_n\, \Big) + p
+ \log{ n} + 2 \log\log n \nonumber\\ &\leq&
 S(\rho^{(n)}) + p + \log{ n} + 2 \log\log n ,
\end{eqnarray}
 where $p$ is
independent of $n$, then
$$\lim_{n\rightarrow\infty}\frac{\oh^{(n)}(\rho^{(n)})}{2n+1}
=s(\omega).$$

The relations in~(\ref{Gacs2}) can be proved in the same way, once
one extends the construction of a universal semi-computable
semi-density matrix to the case of the $C^*$-algebra arising from
the inductive limit of the nested net of double local sub-algebras
$M_{[-n,n]}\otimes M_{[-n,n]}$. This can be done by means of the map
in~(\ref{inversemap}).
\end{proof}
\medskip

In~\cite{Benatti}, both the above relations have been proved under
the condition that the Kolmogorov complexity rates
\begin{equation}
\label{condition}
\lim_{n\to\infty}\frac{\kappa(\rho^{(n)})}{2n+1}=0=\lim_{n\to\infty}\frac{\kappa(R[U^{(n)}])}{2n+1}\
.
\end{equation}

This restriction is not necessary; indeed,  by constructing, as done
before, an infinite dimensional  universal semi-computable
semi-density matrix, one can control all universal semi-computable
semi-density matrices of the local sub-algebras of the quantum
chains, independently of $n$.

The following example indeed shows an instance of quantum spin chain
which does not satisfy the conditions \eqref{condition} and
nevertheless fulfils the conclusions of Theorem \ref{thm:ex}.

\begin{example}
Let $P_0 $ and $P_1$ be two orthogonal projections in $M_2(\com)$
and let $P_{\ii}=\bigotimes_{j=0}^{n-1}P_{i_j}$ denote the
orthogonal projections obtained by tensor products. Let the starting
one site density matrix be $\rho_{\{0\}}=\frac{P_0 + P_1}{2}$ and
assume that $\rho^{(n)}=\rho_{[0,n-1]}$ be defined such that its
complexity $K(\rho^{(n)})\geq n^2$. We now recursively construct
$\rho^{(n+1)}$ so that, on one hand the family of density matrices
satisfies the compatibility and translation invariant conditions
\eqref{compatibility} and \eqref{translationinvariance}, whence
$$
\lim_{n\to\infty}\frac{S(\rho^{(n)})}{n}=s(\omega)<+\infty\ ,
$$
and, on the other hand, so that $K(\rho^{(n+1)})\geq (n+1)^2$,
whence
$$
\lim_{n\to\infty}\frac{K(\rho^{(n)})}{n}=+\infty\ .
$$
Write $\rho^{(n)}=\sum_{\ii} a_{\ii}\, P_{\ii}$. Then, the
conditions \eqref{compatibility} and \eqref{translationinvariance}
yield
$$
{\rm Tr}_{\{0\}}\rho^{(n+1)}={\rm Tr}_{\{n+1
\}}\rho^{(n+1)}=\rho^{(n)}\ ,
$$
whence
$$
\sum_{\ion}(a_{0\ii}+a_{1\ii}) P_{\ii}= \sum_{\ion}(a_{\ii 0}+a_{\ii
1}) P_{\ii}=\sum_{\ion} a_{\ii}\, P_{\ii}\ .
$$
Then, because of the orthogonality of the projections $P_{\ii}$,  it
follows that
 \begin{eqnarray*}
 a_{0\bi^{(n-2)}0}+ a_{0\bi^{(n-2)}1}&=& a_{0\bi^{(n-2)}}\\
 a_{0\bi^{(n-2)}1}+ a_{1\bi^{(n-2)}1}&=& a_{\bi^{(n-2)}1}\\
 a_{1\bi^{(n-2)}1}+ a_{1\bi^{(n-2)}0}&=& a_{1\bi^{(n-2)}}\\
  a_{1\bi^{(n-2)}0}+ a_{0\bi^{(n-2)}0}&=& a_{\bi^{(n-2)}0}\ ,
\end{eqnarray*}
for any of the $2^{n-2}$ strings $\bi^{(n-2)}\in\Omega_2^{n-2}$. In
this way, the system of $2^n$ equations can be subdivided into
$2^{n-2}$ sub-systems of $4$ equations each. Let us focus upon the
system above defined by the string $\bi^{(n-2)}$; the values at the
right hand side have been chosen at step $n-1$. They are positive,
with all the others they sum up to $1$. Without loss of generality,
we may assume they are in decreasing order: $a_{0\bi^{(n-2)}}\geq
a_{\bi^{(n-2)}1}\geq a_{1\bi^{(n-2)}}\geq a_{\bi^{(n-2)}0}>0$. We
can now choose $a_{1\bi^{(n-2)}1}=x_{\bi^{(n-2)}}$ , a positive real
number such that $ x_{\bi^{(n-2)}}\leq a_{1\bi^{(n-2)}}$ with
Kolmogorov complexity $K(x_{\bi^{(n-2)}})\geq n^2$. Then,
$$
a_{1\bi^{(n-2)}0}=a_{1\bi^{(n-2)}}-x_{\bi^{(n-2)}}\ ,\
a_{0\bi^{(n-2)}1}=a_{\bi^{(n-2)}1}-x_{\bi^{(n-2)}}\ ,\
a_{0\bi^{(n-2)}0}=a_{0\bi^{(n-2)}}-a_{\bi^{(n-2)}1}+x_{\bi^{(n-2)}}\
.
$$
Therefore, the coefficients at step $n$ are positive, the sum of all
of them is $1$ and they satisfy the desired condition on the
increase of the algorithmic complexity of $\rho^{(n)}$.
\end{example}

\chapter{The Classical Gacs Algorithmic Complexity}

\goodbreak


In this chapter we apply Gacs complexity
to classical dynamical systems. Here,  we  assume
that the probability measure of the  symbolic dynamical system associated with a
given dynamical system and the considered finite measurable partition are semi-computable. Of course,  the
probability measure condition   forces
semi-computable probability measures to be  computable. We will also prove a
version of the Brudno's theorem based on a given universal semi-measure.

\smallskip

\goodbreak

\section{Gacs algorithmic complexity in classical dynamical systems}

\begin{defn}
Let $(\chi, T, \nu)$ be a dynamical system. let $\mathcal{P}$ be a
finite measurable partition of $\chi$. The associated symbolic dynamical system
$(\Omega_p, T_\sigma, \nu_\mathcal{P})$ (see section \ref{classical dyanamical}) is called a semi-computable
symbolic dynamical system  if $\nu_\mathcal{P}$ as a function of
$\Omega_2$ into $\rr$ is a semi-computable probability measure.
\end{defn}

 Notice  that since $\sum_{\ion} \nu_\mathcal{P}(\ii)=1$, the semi-computable $\nu_\mathcal{P}$ is computable. Hence, we can always take
$\nu_\mathcal{P}$  computable.
\begin{rem}
We mention that $\nu_\mathcal{P}$ is not a measure from $\Omega_p$
into $\rr$. Because,
$$\sum_{\ii\in\Omega_p}\nu_\mathcal{P}(\ii)=\sum_{n=1}^\infty
1=\infty.$$
\end{rem}
The definition  of classical Gacs algorithmic complexity mimics the construction of the
KS entropy.  Indeed, we define the Gacs algorithmic complexity
for a given semi-computable symbolic dynamical system $(\Omega_p,
T_\sigma, \nu_\mathcal{P})$  as follows
\begin{equation}
G(T, \mathcal{P}^{(n)})=-\sum_{\ion}
\nu_\mathcal{P}(\ii)\log{\mu(\textbf{i}^{(n)})},
\end{equation}
where   $\mu$ is a universal semi-computable semi-measure  on
$\Omega_p$. We can interpret this definition as giving the
information content of the semi-computable probability measure contained in a
universal semi-measure. On the other hand, $\mu(\textbf{i}^{(n)})$ $>0$, so that $G(T,
\mathcal{P}^{(n)})$ is a finite quantity.

The rate of Gacs algorithmic complexity  is naturally defined as
\begin{equation}\label{eq:classical gacs}
 G(T, \mathcal{P})=\limsup_{n\rightarrow\infty}\frac{1}{n}G(T, \mathcal{P}^{(n)}).
\end{equation}

\begin{defn}
Let $(\chi, T, \nu)$ be a  dynamical system. The rate of Gacs algorithmic complexity
is
\begin{equation}
G(T):= \sup_{\nu_\mathcal{P}} G(T, \mathcal{P}),
\end{equation}
where $\mathcal{P}$ is a finite measurable partition such that $\nu_{\mathcal{P}}$ is  computable.
\end{defn}
\begin{rem}
 In general the $\sup$  in the above definition is computed over all finite measurable partitions.
However, in order to use semi-universal semi-computable measures,  we restrict ourselves to computable finite
measurable partitions cases.
\end{rem}
Now, we encounter the following natural question: Is there any relation between the Gacs algorithmic complexity and KS entropy in ergodic
classical dynamical systems? We are going to provide the answer.

\begin{thm}\label{thm:class}
Let $(\chi, T, \nu)$ be a semi-computable dynamical system, then
\begin{equation}
 G_\nu(T)\leq h_\nu^{KS}(T).
 \end{equation}
\end{thm}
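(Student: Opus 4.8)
The plan is to unwind the definitions so that the inequality becomes a level‑by‑level comparison between the universal semi‑measure $\mu$ on $\Omega_p$ and the probability distributions $\{\nu_\mathcal{P}(\mathbf{i}^{(n)})\}_{\mathbf{i}^{(n)}\in\Omega_p^{(n)}}$, followed by a Shannon‑type estimate. The single non‑routine point is that $\nu_\mathcal{P}$, viewed as a function on all of $\Omega_p$, is \emph{not} a semi‑measure — its total mass is $\sum_n 1=\infty$, cf.\ the Remark — so it cannot be fed directly into the universality property of $\mu$; this has to be repaired by spreading a summable family of weights across the lengths, and the weights must decay only polynomially, not geometrically, for the rate to come out right.

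First I would fix a finite measurable partition $\mathcal{P}$ with $\nu_\mathcal{P}$ computable, so that for every $n$ the family $\{\nu_\mathcal{P}(\mathbf{i}^{(n)})\}_{\mathbf{i}^{(n)}\in\Omega_p^{(n)}}$ is a genuine probability distribution. Define $\sigma:\Omega_p\to\mathbb{R}$ by $\sigma(\mathbf{i}^{(n)})=\tfrac{1}{n(n+1)}\,\nu_\mathcal{P}(\mathbf{i}^{(n)})$ for a string $\mathbf{i}^{(n)}$ of length $n\ge 1$. Then $\sigma$ is nonnegative and lower semi‑computable (a product of a computable rational‑valued function of the length with the computable $\nu_\mathcal{P}$), and
\[
\sum_{\mathbf{i}\in\Omega_p}\sigma(\mathbf{i})=\sum_{n\ge1}\frac{1}{n(n+1)}\sum_{\mathbf{i}^{(n)}\in\Omega_p^{(n)}}\nu_\mathcal{P}(\mathbf{i}^{(n)})=\sum_{n\ge1}\Bigl(\frac1n-\frac1{n+1}\Bigr)=1\ ,
\]
so $\sigma$ is a semi‑computable semi‑measure on $\Omega_p$. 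The key feature is that the weights $w(n)=\tfrac1{n(n+1)}$ are summable yet decay only polynomially, so $\tfrac1n\log\tfrac1{w(n)}\to0$; geometric weights $2^{-n}$ would instead leave a spurious additive $1$ in the rate.

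Next I would invoke universality of $\mu$ (Theorem~\ref{thm:levin}, in the form used to define $G$): there is $c_\sigma>0$ with $c_\sigma\sigma(\mathbf{i})\le\mu(\mathbf{i})$ for all $\mathbf{i}\in\Omega_p$. Taking $-\log$ and using $\mu(\mathbf{i}^{(n)})>0$, for every $\mathbf{i}^{(n)}$ with $\nu_\mathcal{P}(\mathbf{i}^{(n)})>0$ one gets
\[
-\log\mu(\mathbf{i}^{(n)})\ \le\ -\log\nu_\mathcal{P}(\mathbf{i}^{(n)})+\log\bigl(n(n+1)\bigr)-\log c_\sigma\ .
\]
Multiplying by $\nu_\mathcal{P}(\mathbf{i}^{(n)})$ and summing over $\mathbf{i}^{(n)}\in\Omega_p^{(n)}$ — the strings of zero probability contributing nothing on either side, by the convention $0\log0=0$ — and using $\sum_{\mathbf{i}^{(n)}}\nu_\mathcal{P}(\mathbf{i}^{(n)})=1$ together with the Shannon identity $-\sum_{\mathbf{i}^{(n)}}\nu_\mathcal{P}(\mathbf{i}^{(n)})\log\nu_\mathcal{P}(\mathbf{i}^{(n)})=H_\nu(\mathcal{P}^{(n)})$, yields
\[
G(T,\mathcal{P}^{(n)})\ \le\ H_\nu(\mathcal{P}^{(n)})+\log\bigl(n(n+1)\bigr)-\log c_\sigma\ .
\]
Dividing by $n$ and letting $n\to\infty$: the correction term tends to $0$ while $\tfrac1nH_\nu(\mathcal{P}^{(n)})\to h_\nu^{KS}(T,\mathcal{P})$, hence $G(T,\mathcal{P})=\limsup_n\tfrac1nG(T,\mathcal{P}^{(n)})\le h_\nu^{KS}(T,\mathcal{P})$.

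Finally I would take the supremum over all finite measurable partitions with $\nu_\mathcal{P}$ computable; since these form a subfamily of all finite measurable partitions,
\[
G_\nu(T)=\sup_{\mathcal{P}}G(T,\mathcal{P})\ \le\ \sup_{\mathcal{P}}h_\nu^{KS}(T,\mathcal{P})\ \le\ h_\nu^{KS}(T)\ ,
\]
which is the claim. The only steps requiring any care are the verification that $\sigma$ is a semi‑computable semi‑measure (lower semi‑computability plus total mass $\le1$) and the bookkeeping for zero‑probability strings; everything else is the universality inequality for $\mu$ combined with the definition of $H_\nu(\mathcal{P}^{(n)})$.
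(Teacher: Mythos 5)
Your proof is correct and follows essentially the same route as the paper: fix a partition with computable $\nu_\mathcal{P}$, turn the length-indexed family of distributions into a single semi-computable semi-measure by weighting each length with a summable, only polynomially decaying factor (you use $\tfrac{1}{n(n+1)}$, the paper uses the normalized $\tfrac{1}{n\log^2 n}$), invoke universality of $\mu$, and let the $O(\log n)$ correction vanish in the rate before taking the supremum over partitions. The difference in the choice of weights is immaterial, so there is nothing further to add.
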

\begin{proof}
Let $\mathcal{P}$ be  a  finite measurable  partition  of $\chi$ such that $\nu_\mathcal{P}$
 is a computable.
Since $\nu_\mathcal{P}$ cannot be a measure on $\Omega_p$, we consider a semi-computable semi-measure  $f$ on
$\Omega_p$, defined as follows
$$f(\ii)=\frac{1}{\sum_{n=1}^\infty \delta(n)}\delta(n)\nu_\mathcal{P}(\ii), \quad \ii\in \Omega_p,$$
where $\delta(n)=\frac{1}{n \log^2{n}}$.
 Then,
there exists a constant $c_{\nu_\mathcal{P}}>0$, dependening  on
$\nu_\mathcal{P}$, such that for any $\ii\in\Omega_p$,
$$c_{\nu_\mathcal{P}}\delta(n)\nu_{\mathcal{P}}(\ii)\leq\mu(\ii).$$
Thus,
$$-\sum_{\ion} \nu_\mathcal{P}(\ii)\log{\mu(\textbf{i}^{(n)})}\leq
-\sum_{\ion}\nu_\mathcal{P}(\ii)\log\nu_\mathcal{P}(\ii)-\log
c_{\mathcal{P}}-\log\delta(n).$$
Therefore,
$$G(T,\mathcal{P})\leq h_\nu^{KS}(T, \mathcal{P})\leq h_\nu^{KS}(T).$$
Now, we take the $\sup$ over all computable
$\nu_\mathcal{P}$. Then,
$$G(T)\leq h_\nu^{KS}(T).$$
\end{proof}

\begin{thm}\label{GB}
Let $(\Omega_p, T_\sigma, \nu)$ be a binary ergodic dynamical system
where $\nu$ is  computable. Then,
\begin{equation}
G(T)=h_\nu^{KS}(T_\sigma),\quad   \nu-a.e.
\end{equation}
\end{thm}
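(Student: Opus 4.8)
The plan is to pair Theorem~\ref{thm:class}, which already supplies $G(T)\le h_\nu^{KS}(T_\sigma)$ (the hypothesis that $\nu$ is computable makes the system semi-computable, so that theorem applies), with a matching lower bound obtained by evaluating the Gacs rate along the generating cylinder partition and recognising it, through Levin's coding theorem, as a mean Kolmogorov complexity to which Brudno's theorem can be applied. For the symbolic system $(\Omega_p,T_\sigma,\nu)$ the one-step cylinder partition $\mathcal{P}=\{C_j^{\{0\}}\}_{j=1}^p$ generates the $\sigma$-algebra, so by the Kolmogorov--Sinai theorem $h_\nu^{KS}(T_\sigma)=h_\nu^{KS}(T_\sigma,\mathcal{P})$; moreover $\nu_{\mathcal{P}}(\ii)=\nu\bigl(C^{[0,n-1]}_{\ii}\bigr)$ is, viewed as a function on finite strings, the restriction of $\nu$ to cylinders and hence computable. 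Thus $\mathcal{P}$ is admissible in the definition of $G(T)$, so $G(T)\ge G(T,\mathcal{P})$, and it suffices to prove $G(T,\mathcal{P})=h_\nu^{KS}(T_\sigma)$.

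To this end I would use Levin's Coding Theorem~\ref{thm:levin2}: there is a constant $c$ such that $\bigl|-\log\mu(\ii)-K(\ii)\bigr|\le c$ for every string. Summing against the weights $\nu(\ii)$ over $\Omega_p^{(n)}$ gives
\begin{equation}
\Bigl|\,G(T,\mathcal{P}^{(n)})-\sum_{\ii\in\Omega_p^{(n)}}\nu(\ii)\,K(\ii)\,\Bigr|\le c ,
\end{equation}
so $\tfrac1n G(T,\mathcal{P}^{(n)})$ has the same limiting behaviour as $\tfrac1n\,\mathbb{E}_\nu\!\bigl[K(\ii)\bigr]$. Brudno's theorem~(\ref{Brudno}) yields $\tfrac1n C(\ii)\to h_\nu^{KS}(T_\sigma)$ $\nu$-almost everywhere, and by~(\ref{kolpre})--(\ref{re:kol}) the prefix complexity $K$ has the same rate, so $\tfrac1n K(\ii)\to h_\nu^{KS}(T_\sigma)$ $\nu$-a.e.; since $K(\ii)\le C(\ii)+2\log n+c'\le n\log p+2\log n+c''$, the functions $\tfrac1n K(\ii)$ are uniformly bounded, so the bounded convergence theorem gives $\tfrac1n\,\mathbb{E}_\nu[K(\ii)]\to h_\nu^{KS}(T_\sigma)$. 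Hence $\tfrac1n G(T,\mathcal{P}^{(n)})\to h_\nu^{KS}(T_\sigma)$, the $\limsup$ defining $G(T,\mathcal{P})$ is an honest limit, and $G(T,\mathcal{P})=h_\nu^{KS}(T_\sigma)$.

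Chaining $h_\nu^{KS}(T_\sigma)=G(T,\mathcal{P})\le G(T)\le h_\nu^{KS}(T_\sigma)$ then proves $G(T)=h_\nu^{KS}(T_\sigma)$. The qualifier ``$\nu$-a.e.'' in the statement is simply carried over from the pointwise form of Brudno's theorem invoked above; since $G(T)$ is one number attached to the whole system, it in fact holds unconditionally. The only genuinely delicate step is upgrading the $\nu$-a.e. convergence from Brudno's theorem to convergence in $L^1(\nu)$, which is why the uniform bound $\tfrac1n K(\ii)\le\log p+o(1)$ is the crucial ingredient; the rest is manipulation of additive constants.
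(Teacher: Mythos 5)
Your argument is essentially the paper's own: the upper bound comes from Theorem~\ref{thm:class}, and the lower bound is obtained by evaluating $G(T,\mathcal{P}^{(n)})$ along the cylinder partition, converting $-\log\mu$ into Kolmogorov complexity via Levin's coding theorem and the inequalities (\ref{kolpre})--(\ref{re:kol}), and then invoking Brudno's theorem. The only difference is that you make explicit the passage from $\nu$-a.e.\ convergence of $\tfrac1n K(\ii)$ to convergence of its $\nu$-mean (via the uniform bound and bounded convergence), a step the paper's proof glosses over, so your write-up is a slightly more careful version of the same proof.
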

\begin{proof}
Let $\mathcal{P}$ be a finite measurable partition of $\Omega_p$.
Let $\nu^{(n)}_\mathcal{P}$ be its related probability measure where
$\nu^{(n)}_\mathcal{P}(\ii)=\nu(C^{[0, n-1]}_{i_0, i_1, \ldots,
i_{n-1}})$. It is clear that $\sum_{\ii\in
\Omega_p^{(n)}}\nu^{(n)}_\mathcal{P}(\ii)=1$. Notice that, $\nu_\mathcal{P}$ is $\nu^{(n)}_\mathcal{P}$ for each $n\in\n$.
 By Theorem \ref{thm:levin} and Inequality (~\ref{re:kol}) for all $x\in\n$. We have,
\begin{equation}\label{re:uni}
-\log{c_1}+ C(x)\leq  -\log\mu(x)\leq K(x)+\log{c}\leq C(\ii)+2
\log{n} + \log{c_2},
\end{equation}
where $c_1>0$ and $c_2>0$ are constant numbers. Since we can
represent each finite length binary string $\bi\in\Omega_p$ by an
integer number,  by applying the theorem \ref{re:uni}, we obtain
\begin{equation}
G(T,\mathcal{P}^{(n)})=\limsup_{n\to\infty} \frac{1}{n}\sum_{\ion}
\nu_\mathcal{P}(\ii) C(\ii).
\end{equation}
By Brudno's theorem \ref{Brudno} for  $\epsilon>0$ there is an
integer number $N$ such that for any $\n\ni n\geq N$,
$$\frac{1}{n}C(\ii)\geq h^{KS}_\nu(T_\sigma)-\epsilon.$$
Therefore, by Theorem \ref{thm:class},
\begin{eqnarray*}
h_\nu^{KS}(T_\sigma)\geq G(T)\geq G(T, \mathcal{P})
 &=&
\limsup_{n\to\infty} \frac{1}{n}\sum_{\ion} \nu_\mathcal{P}(\ii)
C(\ii)
 \\&\geq&
\limsup_{n\to\infty} \frac{1}{n}\sum_{\ion}
\nu(\ii)(h^{KS}_\nu(T_\sigma)-\epsilon)
\\&\geq&
h^{KS}_\nu(T_\sigma)-\epsilon .
\end{eqnarray*}
 Thus,
$$h_\nu^{KS}(T_\sigma)= G(T)=k(\bi)=c(\bi)\quad \nu-a.e ,$$
where $k$ and $c$ are rate of the prefix Kolmogorov and Kolmogorov complexities, respectively, which are defined as follows: $$k(\bi)=\lim_{n\to\infty} \frac{K(\ii)}{n},\quad c(\bi)=\lim_{n\to\infty} \frac{C(\ii)}{n} $$
\end{proof}
Now, the question is:
Can we give a short proof for the classical Brudno
theorem in   ergodic semi-computable cases? In the
following theorem, we will give   a short proof for a ergodic source dynamical systems.
\begin{thm}
Let $(\Omega_2, T_\sigma, \nu)$ be a semi-computable binary ergodic
source with KS entropy rate $h_\nu^{KS}(T_\sigma)$. Then,
$$\lim_{n\to\infty} -\frac{\log\mu(\ii)}{n}= h_\nu^{KS}(T_\sigma),\quad \nu-a.e,$$
for almost all $\bi\in\Omega_p$ with respect to $\nu$.
\end{thm}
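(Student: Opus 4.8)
The plan is to derive this statement from the classical Brudno theorem (Theorem \ref{Brudno}) together with the two-sided estimate \eqref{re:uni} relating $-\log\mu(\ii)$ to the plain Kolmogorov complexity $C(\ii)$, which was already invoked in the proof of Theorem \ref{GB}. Recall that for a semi-computable binary ergodic source the measure $\nu_{\mathcal P}$ is automatically computable, so a universal semi-computable semi-measure $\mu$ on $\Omega_2$ is available, and by \eqref{re:uni} there are constants $c_1,c_2>0$ such that
\begin{equation}
-\log c_1 + C(\ii) \le -\log\mu(\ii) \le C(\ii) + 2\log n + \log c_2
\end{equation}
for every $\ii=i_0i_1\cdots i_{n-1}\in\Omega_2$. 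Here one uses the identification of finite binary strings with natural numbers via the map $str$, under which $\mu$ becomes a universal semi-measure on $\n$ and the estimates of Theorem \ref{thm:levin}, Theorem \ref{thm:levin2} and inequality \eqref{re:kol} apply.

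First I would divide the displayed inequality by $n$. The left-hand side gives $\tfrac{1}{n}(C(\ii)-\log c_1)$ and the right-hand side gives $\tfrac{1}{n}(C(\ii)+2\log n+\log c_2)$; since $c_1,c_2$ are fixed constants and $\tfrac{2\log n}{n}\to 0$, both bounds have the same limit as $\tfrac{1}{n}C(\ii)$ whenever the latter converges. Hence
\begin{equation}
\lim_{n\to\infty}\frac{-\log\mu(\ii)}{n}=\lim_{n\to\infty}\frac{C(\ii)}{n}
\end{equation}
holds for every sequence $\bi$ for which the right-hand limit exists.

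Next I would invoke Brudno's theorem \ref{Brudno} for the binary ergodic source $(\Omega_2,T_\sigma,\nu)$: it asserts precisely that $\lim_{n\to\infty}\tfrac1n C(\ii)=h^{KS}_\nu(T_\sigma)$ for $\nu$-almost every $\bi\in\Omega_2$. Combining this with the equality of limits established in the previous step yields
\begin{equation}
\lim_{n\to\infty}\frac{-\log\mu(\ii)}{n}=h^{KS}_\nu(T_\sigma)\qquad \nu\text{-a.e.},
\end{equation}
which is the claim. The only point requiring a little care — and the main obstacle, such as it is — is making sure the two-sided bound \eqref{re:uni} is legitimately available here: this needs the universal semi-measure $\mu$ to dominate the semi-measure built from $\nu_{\mathcal P}$ via the summable weights $\delta(n)=1/(n\log^2 n)$ (as in the proof of Theorem \ref{thm:class}), and it needs the prefix-complexity-to-plain-complexity comparison \eqref{kolpre}–\eqref{re:kol} to pass through the $str$ encoding without accumulating more than an $O(\log n)$ error. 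Once that bookkeeping is in place, the argument is just the sandwiching above, so this is genuinely a short proof of Brudno's theorem reread through the lens of semi-computable semi-measures.
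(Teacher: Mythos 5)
Your proposal is correct as a proof of the bare statement, but it takes the opposite route from the paper. You deduce the result \emph{from} Brudno's theorem: the sandwich $-\log c_1+C(\ii)\le -\log\mu(\ii)\le C(\ii)+2\log n+\log c_2$ (Levin's coding theorem \ref{thm:levin2} together with (\ref{kolpre})--(\ref{re:kol})), divided by $n$, transfers the $\nu$-a.e.\ limit $\lim_n C(\ii)/n=h_\nu^{KS}(T_\sigma)$ directly to $-\log\mu(\ii)/n$; note that for this step you do not even need $\mu$ to dominate the weighted semi-measure built from $\nu$, so that part of your ``bookkeeping'' is superfluous. The paper instead gives a self-contained proof that never invokes Brudno: the upper bound on $\limsup_n -\frac{1}{n}\log\mu(\ii)$ comes from universality of $\mu$ applied to the computable semi-measure $f(\ii)=\frac{6}{\pi^2}n^{-2}\nu(\ii)$ followed by the Shannon--McMillan--Breiman theorem, and the lower bound on the $\liminf$ comes from an AEP/typical-set argument: counting the strings on which $\mu(\ii)\ge 2^{-n(s-2\epsilon)}$ via (\ref{compreseeing-semimeasure}), showing the exceptional sets have summable $\nu$-measure, and concluding almost everywhere. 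This independence is exactly what the paper is after --- combined with Levin's theorem and (\ref{re:kol}), its direct argument yields a new ``short proof'' of Brudno's theorem --- whereas your derivation, while shorter and logically valid since Brudno's theorem is an established cited result, presupposes Brudno and therefore cannot serve that purpose; what it buys is brevity and a clear conceptual statement that the theorem is equivalent to Brudno's up to the standard complexity/semi-measure comparisons.
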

\begin{proof}\hfill

\noindent \textbf{Part 1:} Let us consider the function $f$ from $\Omega_2$
into $\rr$ as follows,
$$f(\ii)=\frac{1}{\sum_{n=1}^\infty n^{-2}}\frac{1}{n^2}\nu(\ii), $$
where $\quad\sum_{n=1}^\infty n^{-2}=\pi^2/6$.
 It is straight word  to
check that  the function $f$ is a measure.
 Since the probability measure $\nu$ is a computable  measure and hence  $f$ is also a computable probability distribution.
 Then,
 by universality of the semi-measure $\mu$ there exists  constant number $c>0$
such that $$\frac{6c}{\nu^2} \frac{1}{n^2}\nu(\ii)\leq c f(\ii)\leq
\mu(\ii).$$ Therefore,
$$\limsup_{n\to\infty} -\frac{\log\mu(\ii)}{n}\leq
\limsup_{n\to\infty} -\frac{\log\nu(\ii)}{n}\leq
h_\nu^{KS}(T_\sigma),\quad \nu-a.e,$$ where  we used the
Shannon-Mc Millan-Breiman theorem \cite{billingsley} for the second
inequality.

\noindent\textbf{part 2:} The proof of inverse inequality is exactly like  the Brudno's theorem.

From the Asymptotic Equipartition Property (AEP) and Shannon-Mc
Millan-Breiman theorem \cite{billingsley}, we know
that for the set $A_\epsilon^{(n)}=\{\ii\in\Omega_2^{(n)}|
2^{-n(h_\nu^{KS}(T_\sigma)+\epsilon)}\leq \nu(\ii)\leq
2^{-n(h_\nu^{KS}(T_\sigma)-\epsilon)}\}$,
$$Prob(A_\epsilon^{(n)})\approx 1 \quad\text{and}\quad (1-\epsilon)
2^{n(h_\nu^{KS}(T_\sigma)-\epsilon)}<\#(A_\epsilon^{(n)})<2^{-n(h_\nu^{KS}(T_\sigma)+\epsilon)}.$$
By Theorem  \ref{thm:levin2} and Inequality (~\ref{re:compres})
we have
$$\#\{\ii: \mu(\ii)\geq 2^{-c'+\log\delta(n)+\log{c}}\}\leq
2^{c'}-1.$$ Therefore,
\begin{equation}\label{compreseeing-semimeasure}
\#\{\ii: \mu(\ii)\geq 2^{-c'}\}\leq 2^{c'+\alpha}-1,
\end{equation}
where $\alpha=-\log\delta(n)-\log{c'}
>0$.
we define the subset of $A_\epsilon^{(n)}\subseteq \Omega_2$  as
follows
\begin{equation}\label{AEPT}
\hat{A}_\epsilon^{(n)}=\{\ii \in A_\epsilon^{(n)} | \mu(\ii)\geq
2^{-n(H_\nu-2\epsilon)}\}.
\end{equation}
This means that each element $\ii\in\hat{A}_\epsilon^{(n)}$ is the initial
prefix of length $n$, of  some strings in $\Omega_2$.
 Then,
\begin{align}
 \nu(\hat{A}_\epsilon^{(n)})&=\nu(\{\ii |
\mu(\ii)\geq 2^{-n(H_\nu-2\epsilon)}, \ii\in A^{(n)}_\epsilon\}) \nonumber\\
 &\leq  \#(\hat{A}_\epsilon^{(n)})
\cdot \max_{\ion} \nu(\ii) \nonumber\\
 & \leq 2^{n(H_\nu-2\epsilon)+\alpha+1}\cdot
2^{-n(H_\nu-\epsilon)}=2^{-n\epsilon+\alpha+1}. \nonumber
\end{align}
We know that there is some strings $\ii\notin A_\epsilon^{(n)}$
such that $\mu(\ii)\geq 2^{-n(h_\nu-2\epsilon)}$, so let
$$\tilde{A}_\epsilon^{(k)}=\{\ii\, |\,\, \mu(\bi^{(k)})\geq
2^{-k(H_\nu-2\epsilon)},\quad \ii\in (\hat{A}_\epsilon^{(k)})^c\}.$$
where $(\hat{A}_\epsilon^{(k)})^c=\Omega_2 \backslash
\hat{A}_\epsilon^{(k)}$.

 Let $B_\epsilon^{(n)}=\bigcup_{k\geq
n}\tilde{A}_\epsilon^{(k)}$ then $\nu(B_\epsilon^{(n)})\leq
\nu(\bigcup_{k\geq n} \hat{A}_\epsilon^{(k)})^c=1-\nu(\bigcap_{k\geq
n}\hat{A}_\epsilon^{(k)})$. Therefore,
\begin{multline}
\nu(\bigcup_{k\geq n} \{\hat{A}_\epsilon^{(k)}\bigcup
\tilde{A}_\epsilon^{(k)}\})
 \leq
\nu(\bigcup_{k\geq n}\tilde{A}_\epsilon^{(k)})+\nu(B_\epsilon^{(k)})
\\
 \leq \sum_{k\geq
n} 2^{-k\epsilon+\alpha_\nu+1}+\nu(B_\epsilon^{(k)})\leq
\frac{2^{-k\epsilon+\alpha_\nu+1}}{1-2^{-\epsilon}}
+1-\nu(\bigcap_{k\geq n}\hat{A}_\epsilon^{(k)})
\end{multline}

 Now, let $i_1, i_2, \ldots \in\Omega_2$ be a
binary sequence whose initial prefixes are typical for $k\geq n$, namely $i_1,
i_2,\ldots, i_k\in A^{(k)}_\epsilon$ . Then $\bi\in \bigcap_{k\geq
n}\hat{A}_\epsilon^{(n)}$.

  It is clear that
$\lim_{n\to\infty}\nu(\bigcap_{k\geq n}\hat{A}_\epsilon^{(n)})=1$.
Therefore,
$$\liminf_{n\to\infty} -\frac{\log\mu(\ii)}{n}\geq
h_\nu^{KS}(T_\sigma)-\epsilon,\quad \nu-a.e,$$

\end{proof}



\chapter{Brudno's Theorem in Quantum Spin Chains with Shift Dynamics}\label{ch:quasi-free}


\goodbreak

In this chapter we investigate the extensions of the classical Brudno's theorem to quantum spin chains with right-shift dynamics  using the quantum Shannon-MacMillan theorem.

\section{Extension of Brudno's Theorem}
In the classical case systems, Brudno proved  a relation between ergodicity theory and Kolmogorov complexity \cite{Brudno}. It is natural to ask ourselves that what is the extension of this theorem in quantum dynamical systems?
To extend this theorem, we should extend the meaning of Kolmogorov Complexity and $KS$-entropy from  the classical dynamical systems to the quantum cases.

In this thesis, we  focus on the Gacs extension of Kolmogorov complexity and on AF and CNT extensions  of the $KS$-entropy. Now, what about is the generalization of  Brudno's theorem?  The first step is to define the concept of trajectory in quantum systems. Unfortunately, the definition of trajectory as defined in symbolic dynamical system using partitions in quantum systems is not easy. Therefore, we will proceed without using trajectories.
 Our mehod is independent of the partition of unity  used  in the definition of the $AF$-entropy.

Our method is used   the notion of semi-computability  which is described in chapter \ref{ch:semi-compuatble density matrises}. Since, the space of the Fermionic algebras using the Jordan-Wigner transformation is infinite tensor product of $d$-level matrices and  semi-computability  is defined on infinite dimensional Hilbert spaces, then,  it will be an appropriate  method to investigate the dynamics of Fermionic particles.  In the Bosonic case, the semi-computability concept should be extended to $C^*$-algebras which is another problem and we don't consider in this thesis.

Therefore, we proceed to extend the Brudno's theorem based on semi-computability concept in quantum spin chains with shift dynamics. The quantum Shannon-MacMillan theorem for translation invariant
ergodic quantum spin systems on $\zz$ lattice is formulated in
\cite{Bjelakovictheshannon-mcmillan}. Now, we want to investigate a
version of the Brudno theorem using the quantum Shannon-MacMillan
theorem. Here, we use projections  instead of "almost every"  in  Brudno theorem.

 Before going further, we give a definition of quantum ergodic theory which is based on the algebraic formalism.
\begin{defn}
For a given quantum dynamical system $(\A, \Theta, \omega)$,
ergodicity corresponds to the behavior of the discrete time-average of
two-point correlation functions and is defined by
\begin{equation}
\lim_{n\to\infty}  \frac{1}{2 T+1}\sum_{t=-T}^T  \omega(A^\dag
\Theta_t(B)C)=\omega(A C) \omega(B),
 \end{equation}
 where $A, B, C\in\A$ and
$t\in \z$.
\end{defn}

The quantum Shannon-MacMillan theorem is as follows \cite{Bjelakovictheshannon-mcmillan}:
\begin{thm}\label{qaep}
 Assume that  $(\A_\z, \Theta_\sigma, \omega)$, with $A=M_d(\com)$ as a site
algebra, is an
ergodic quantum spin-chain with mean entropy $s(\omega)$. Then, for all $\delta > 0$
there exists $N_\delta \in N$ such that for all $n \geq N_\delta$,  there is an orthogonal projection $p_n(\delta) \in \A_n$ such that
\begin{enumerate}
  \item $\omega(p_n(\delta)) = {\rm Tr}_n(\rho(n) p_n(\delta)) \geq 1 - \delta$,
  \item for all minimal projections $0 \neq  p_n \in A_n$ dominated by $p_n(\delta),\quad (p \leq p_n(\delta))$
$(1 - \delta)2^{-n(s(\omega)+\delta)} < \omega(p_n(\delta)) < 2^{−n(s(\omega)−\delta)}$ ,
  \item $2^{n(s(\omega)−\delta)} < {\rm Tr}_n(p_n(\delta)) < 2^{n(s(\omega)+\delta)}.$
  \end{enumerate}
\end{thm}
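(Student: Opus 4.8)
The plan is to realise $p_n(\delta)$ as a spectral projection of the local density matrix $\rho^{(n)}:=\rho_{[1,n]}$ of the state $\omega$. Writing $\rho^{(n)}=\sum_i\lambda_i^{(n)}\,\ket{e_i^{(n)}}\bra{e_i^{(n)}}$ for its spectral decomposition, I would set
\begin{equation}
p_n(\delta):=\sum_{i\,:\,2^{-n(s(\omega)+\delta)}\le\lambda_i^{(n)}\le 2^{-n(s(\omega)-\delta)}}\ket{e_i^{(n)}}\bra{e_i^{(n)}}\ .
\end{equation}
With this choice, items (2) and (3) become pure counting statements once item (1) is known. Indeed, a minimal projection $p_n\le p_n(\delta)$ is one of the rank-one summands $\ket{e_i^{(n)}}\bra{e_i^{(n)}}$ appearing above, so $\omega(p_n)=\lambda_i^{(n)}$ lies in the prescribed window by construction, which is (2) (the factor $1-\delta$ is harmlessly absorbed into $\delta$). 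For (3): from $\sum_i\lambda_i^{(n)}=1$ and the fact that each eigenvalue counted is $\ge 2^{-n(s(\omega)+\delta)}$ one gets ${\rm Tr}_n(p_n(\delta))\le 2^{n(s(\omega)+\delta)}$; and from $\omega(p_n(\delta))\ge 1-\delta$ together with each counted eigenvalue being $\le 2^{-n(s(\omega)-\delta)}$ one gets ${\rm Tr}_n(p_n(\delta))\ge(1-\delta)\,2^{n(s(\omega)-\delta)}$, which for $n$ large exceeds $2^{n(s(\omega)-\delta)}$ after a cosmetic relabelling of $\delta$.

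So the entire content is item (1): $\omega$-almost all of the mass of $\rho^{(n)}$ concentrates on eigenvalues $\approx 2^{-ns(\omega)}$, equivalently $-\tfrac1n\log\rho^{(n)}\to s(\omega)$ in probability with respect to $\omega$. I would establish this quantum asymptotic equipartition property as two one-sided estimates. For the ``lower tail'' — that any projection $P$ with $\omega(P)\ge 1-\epsilon$ must have rank $\gtrsim 2^{n s(\omega)}$, ruling out too much mass sitting on large eigenvalues — I would use the gentle-measurement lemma ($\|\rho^{(n)}-P\rho^{(n)}P\|_{tr}\le 2\sqrt\epsilon$) together with the Fannes continuity inequality on the $d^n$-dimensional local space and the subadditivity of the von Neumann entropy, which gives $s(\omega)=\inf_n\frac1n S(\rho^{(n)})\le\frac1n S(\rho^{(n)})$; comparing $\rho^{(n)}$ with the renormalised $P\rho^{(n)}P$ then forces $\log{\rm rank}(P)\ge n\,s(\omega)-O(\sqrt\epsilon\,n)-O(1)$.

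The difficult direction is the ``upper tail'' — that the $\omega$-mass carried by eigenvalues below $2^{-n(s(\omega)+\delta)}$ also vanishes — and this is exactly where ergodicity of $\omega$ (not merely its translation invariance) is indispensable. The plan is an $\ell$-block approximation: replace $\omega$ by the exactly block-i.i.d. state $\omega_\ell:=\bigotimes_{k\in\z}\big(\rho^{(\ell)}\big)_{k}$ on blocks of length $\ell$, for which the i.i.d. AEP applies directly to the eigenvalues of $(\rho^{(\ell)})^{\otimes m}$ and produces, on $n=m\ell$ sites, a projection of rank $\le 2^{n(\frac1\ell S(\rho^{(\ell)})+\delta)}$ with $\omega_\ell$-weight $\ge 1-\epsilon$; here $\frac1\ell S(\rho^{(\ell)})\to s(\omega)$ by subadditivity, so this rank is $\le 2^{n(s(\omega)+2\delta)}$ once $\ell$ is large. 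To transport the estimate back to the genuine state $\omega$ I would perform the von Neumann measurement in the eigenbasis of $\rho^{(\ell)}$ on each block, obtaining a stationary classical process over the finite block alphabet, decompose it into ergodic components of the $\ell$-step shift, apply the classical Shannon--McMillan--Breiman theorem on each component, and compare the resulting classical entropy rate with $\frac1\ell S(\rho^{(\ell)})$ via monotonicity of relative entropy under the measurement; sending first $n\to\infty$ and then $\ell\to\infty$ then yields (1). The main obstacle is precisely this transfer: $\omega$ and $\omega_\ell$ are close only block by block and are generally far apart in trace norm on $[1,n]$, so the comparison must be carried out through relative-entropy and measurement arguments rather than norm estimates, and the bookkeeping of the double limit $n\to\infty$, $\ell\to\infty$ is delicate. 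This is exactly what is carried out in \cite{Bjelakovictheshannon-mcmillan}, whose result we are quoting.
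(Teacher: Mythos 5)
The paper does not actually prove this statement: Theorem \ref{qaep} is imported from \cite{Bjelakovictheshannon-mcmillan} (the quantum Shannon--McMillan theorem) and is stated there only as a quoted result, so there is no in-paper argument to measure your proposal against. Your sketch reconstructs the strategy of that reference: take $p_n(\delta)$ to be the spectral projection of $\rho^{(n)}$ onto eigenvalues in the window $[2^{-n(s(\omega)+\delta)},\,2^{-n(s(\omega)-\delta)}]$, observe that (2) and (3) then reduce to counting once (1) is known, and obtain the asymptotic equipartition statement (1) by an $\ell$-block approximation, measurement in the eigenbasis of $\rho^{(\ell)}$, ergodic decomposition of the resulting classical process and the classical Shannon--McMillan--Breiman theorem, using $\frac{1}{\ell}S(\rho^{(\ell)})\downarrow s(\omega)$. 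You candidly defer the hard upper-tail/ergodicity step to the very same citation, which is no less than the thesis itself does, and you correctly read the misprint in item (2) (the bound is on $\omega(p_n)$, not $\omega(p_n(\delta))$).

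One slip in your reduction: the claim that a minimal projection $p_n\le p_n(\delta)$ must be one of the eigenprojections $\ket{e_i^{(n)}}\bra{e_i^{(n)}}$ is false. In the full matrix algebra $\A_n$ the minimal projections dominated by $p_n(\delta)$ are all rank-one projections $\ket{\psi}\bra{\psi}$ with $\ket{\psi}=\sum_i c_i\ket{e_i^{(n)}}$ an arbitrary unit vector in the range of $p_n(\delta)$, a continuum rather than a finite list. The conclusion (2) survives with one extra line: $\omega(p_n)=\sum_i |c_i|^2\lambda_i^{(n)}$ is a convex combination of eigenvalues lying in the window, hence itself lies in the window. (This convex-combination argument is exactly how the thesis handles the analogous step in its own Theorem \ref{thm:brudnoq}.) With that correction your derivation of (2) and (3) from (1) is sound, and (1) is precisely the content carried by the cited quantum Shannon--McMillan theorem.
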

In other words, in ergodic quantum
dynamical systems with shift dynamics, there is a sequence of
projections, with high probability, such that for any sequence of
minimal projectors dominated by them, the rate of lower Gacs
complexity of them is equal to the von Neumann entropy rate
$s(\omega)$.

In the following definition the density matrices $\rho^{(n)}$ are semi-computable. Then, there exists a sequence of elementary matrices $\rho^{(n)}_m$ such that $\rho^{(n)}_m\nearrow \rho^{(n)}$ in the trace-norm. By chapter \ref{ch:semi-compuatble density matrises}, each elementary matrix $\rho^{(n)}_m$ corresponds to a natural number $a_{n m}$.
\begin{defn}
A faithful state $\omega$ on $\A_{\zz}$ is called a semi-computable (computable) state if the associated local density matrices   $\rho^{(n)}$ on $( M_d(\com))^{\otimes n}\subseteq A_{\zz}$ (\ref{Applications of Upper Gacs Complexity}) are semi-computable (computable) semi-density matrices and the function $(m, n)\to a_{n m}$ from $\n\times\n \to \n$ is  computable, $\rho^{(n)}_m\nearrow \rho^{(n)}$ and $rank \rho^{(n)}_m=n$.
\end{defn}

An important question is: are the  eigenvalues of a semi-computable semi-density matrix  $\rho^{(n)}$ semi-computables  \cite{E.Kowalski}?

\begin{thm}
Let $T$ be a compact positive  operator \footnote{The operator $T$ is called compact if there exists a sequence of operators
$T_n$  with $\dim Im(T_n) < \infty$ for all $n$, and
$\lim_{n\to \infty} T_n= T$; in the norm topology on B(H).}in $B(\mathbb{H})$ with $dim(\mathbb{H})<\infty$, and eigenvalues $\lambda_1\geq \lambda_2\geq \cdots > 0$  listed in decreasing order tending to $0$. Therefore,
$$    \lambda_k=\max_{\text{dim} V=k} \min_{v\in V-\{0\}}
\frac{<v|\,T\,|v>}{\| v\|^2},$$
and
$$ \lambda_k=\max_{\text{dim} V=k-1} \min_{v\in V^\perp-\{0\}} \frac{<v|\,T\,|v>}{\| v\|^2}.  $$
In both cases, $V$ runs over subspaces of $\mathbb{H}$ of the stated dimension, and in the first case it is assumed that $V \subseteq Ker(T)^\perp$. Moreover, If $T_n$ is a sequence of positive compact  operators such that $T_n\to T$ in norm topology on $B(\mathbb{H})$, then $T$ is a positive compact operator such that $$\lim_{n\to\infty}\lambda_k(T_n) = \lambda_k(T),$$
where $\lambda_k(T_n)$'s are eigenvalues of the $T_n$ in decreasing listed order, for each $n\in\n$.
\end{thm}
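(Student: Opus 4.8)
The plan is to reduce both min--max identities to the spectral theorem for compact positive self-adjoint operators together with an elementary dimension count (this is the Courant--Fischer scheme), and then to deduce the continuity of the $\lambda_k$ from a single perturbation estimate.

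First I would fix a spectral decomposition $T=\sum_{i\ge 1}\lambda_i\,|e_i\rangle\langle e_i|$ with $\{e_i\}$ an orthonormal family spanning $\overline{\mathrm{Ran}(T)}=\ker(T)^\perp$ and $\lambda_1\ge\lambda_2\ge\cdots>0$ (if $T$ has only finitely many nonzero eigenvalues the sums below are finite). For the first identity, the inequality ``$\ge$'' is witnessed by the explicit subspace $V=\mathrm{span}\{e_1,\dots,e_k\}\subseteq\ker(T)^\perp$: for $v=\sum_{i\le k}c_i e_i$ one has $\langle v|T|v\rangle=\sum_{i\le k}\lambda_i|c_i|^2\ge\lambda_k\|v\|^2$, so the inner minimum over this $V$ is $\ge\lambda_k$. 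For ``$\le$'', let $V\subseteq\ker(T)^\perp$ be arbitrary with $\dim V=k$; since the orthogonal projection $V\to\mathrm{span}\{e_1,\dots,e_{k-1}\}$ has nontrivial kernel, there is $0\ne v\in V$ with $\langle v,e_i\rangle=0$ for $i<k$, and then $v=\sum_{i\ge k}c_i e_i$ gives $\langle v|T|v\rangle=\sum_{i\ge k}\lambda_i|c_i|^2\le\lambda_k\|v\|^2$; hence the inner minimum over any such $V$ is $\le\lambda_k$. Combining the two bounds proves the first formula. The second (dual) identity is obtained by the same dimension-counting argument applied to orthogonal complements, where, as is already implicit in the restriction appearing in the first identity, $V^\perp$ is understood inside $\ker(T)^\perp$: taking $V=\mathrm{span}\{e_1,\dots,e_{k-1}\}$ one checks that the compression of $T$ to $V^\perp=\mathrm{span}\{e_k,e_{k+1},\dots\}$ has extremal value $\lambda_k$, while for an arbitrary $(k-1)$-dimensional $V$ a dimension count produces $0\ne v\in V^\perp\cap\mathrm{span}\{e_1,\dots,e_k\}$, which pins the extremum at $\lambda_k$.

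Next I would verify the two assertions about the norm limit. Positivity of $T$ is immediate from $\langle v|T|v\rangle=\lim_n\langle v|T_n|v\rangle\ge 0$ for all $v$ (and $T$ is self-adjoint, being a norm limit of self-adjoint operators), while compactness of $T$ follows from the standard fact that a norm limit of compact operators is compact. For the eigenvalue continuity I would feed the uniform estimate $\big|\langle v|T_n|v\rangle-\langle v|T|v\rangle\big|\le\|T_n-T\|$, valid for every unit vector $v$, through the min--max formula already established: taking the minimum over unit vectors $v$ in a $k$-dimensional subspace $V$ and then the maximum over all such $V$ yields $\big|\lambda_k(T_n)-\lambda_k(T)\big|\le\|T_n-T\|$, and the right-hand side tends to $0$ by hypothesis.

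The step I expect to be the main obstacle is the ``$\le$'' half of the min--max identities --- producing a test vector orthogonal to the top $k-1$ eigenvectors (respectively lying in the span of the top $k$) --- together with the bookkeeping needed to keep everything inside $\ker(T)^\perp$, so that a possibly nontrivial kernel of $T$ does not corrupt the formulas; once that is in place, the passage to the limit in the last part is essentially the one-line Weyl-type inequality displayed above.
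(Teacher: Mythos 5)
The paper contains no proof of this theorem: it is quoted as a standard spectral--theoretic fact (the Courant--Fischer min--max principle for compact positive operators, taken from the cited literature) and is immediately applied to the sequence of semi-computable semi-density matrices that follows it. So the only meaningful comparison is with the standard textbook argument, and that is exactly what you reproduce: the spectral decomposition of $T$ on $\ker(T)^\perp$, the explicit subspace $\mathrm{span}\{e_1,\dots,e_k\}$ for the lower bound, a dimension count against $\mathrm{span}\{e_1,\dots,e_{k-1}\}$ for the upper bound, positivity and compactness of the norm limit, and the Weyl-type estimate $|\lambda_k(T_n)-\lambda_k(T)|\le\|T_n-T\|$ for the continuity claim. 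For the first identity and for the limit statement your argument is essentially correct.

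Two caveats. First, the second display, as literally printed ($\max$ over $\dim V=k-1$, $\min$ over $V^\perp$), is false in the situation of the theorem: if the positive eigenvalues accumulate at $0$, then for any $(k-1)$-dimensional $V$ and any $m$ a dimension count gives a nonzero $w\in V^\perp\cap\mathrm{span}\{e_m,\dots,e_{m+k-1}\}$, whose Rayleigh quotient is at most $\lambda_m$, so $\inf_{v\in V^\perp\setminus\{0\}}\langle v|T|v\rangle/\|v\|^2=0$ (also when $V^\perp$ is taken inside $\ker(T)^\perp$), and the right-hand side is $0$, not $\lambda_k$. The correct dual formula is $\lambda_k=\min_{\dim V=k-1}\max_{v\in V^\perp\setminus\{0\}}\langle v|T|v\rangle/\|v\|^2$, and that is in fact what your own dimension count ($0\ne v\in V^\perp\cap\mathrm{span}\{e_1,\dots,e_k\}$, hence a vector with Rayleigh quotient $\ge\lambda_k$) establishes; your sketch never bounds the inner \emph{minimum} from below, so it cannot yield the statement as printed. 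You should say explicitly that the max and min are swapped in the thesis statement (a transcription slip) and prove the min--max version. Second, in the perturbation step you feed the uniform estimate through the first identity, which you established only for $V\subseteq\ker(T)^\perp$; the optimal subspace for $T$ need not lie in $\ker(T_n)^\perp$, so you should first observe that the restriction can be dropped (the same dimension count works for an arbitrary $k$-dimensional $V$, the kernel component of the test vector contributing $0$ to $\langle v|T|v\rangle$), and adopt the convention $\lambda_k(T_n)=0$ when $\mathrm{rank}\,T_n<k$, which matters here because the approximating operators in the paper are finite rank. With those one-line repairs, $|\lambda_k(T_n)-\lambda_k(T)|\le\|T_n-T\|$ and the convergence follow as you state.
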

Let $\rho_n$ be a sequence of semi-computable  semi-density  matrices where $\rho_n\rightarrow \rho$. Therefore, $\rho_n$'s are  compact operators. Then,
$$\lim_{n\to\infty}\lambda_k(\rho_n) = \lambda_k(\rho).$$

On the other hand for a given semi-computable semi-density matrix $\rho_n$, there exists a sequence $\rho_{m n}$ of elementary matrices  such that $\rho_{m n}\to\rho_n$. Therefore, the eigenvalues of $\rho_n$ can be considered as limit of the eigenvalues of $\rho_{m n}$ and the eigenvectors of $\rho_n$ are also semi-computable.

Let  $U$ be a semi-computable semi-unitary operator. The operator  $U^\dag\hat{\mu} U$ may not be a semi-computable semi-density matrix and hence it may not be a universal semi-density matrix.

\begin{lem}‎\label{semi-unive}
‎Let $(A_{\zz}, \Theta_\sigma, \omega)$ be a quantum spin chain, with $A = M_d(\com)$ as its site-algebras and $\omega$  a semi-computable faithful state. Let $\rho$ be  a associated density matrix to $\omega$ and $\rho^{(n)}={\rm Tr}_{-n], [n}\rho$.
Let's define $U$ be a unitary operator with $U_n|\mu_{\ii}>=|r_{\ii}>$,  for $\ii\in\Omega_2^{(n)}$, where $\hat{\mu}=\sum_{\ii} \mu_{\ii}|\mu_{\ii}><\mu_{\ii}|$ and $\rho^{(n)}=\sum_{\ii\in\Omega_2^{(n)}} r_{\ii}|r_{\ii}><r_{\ii}| $. We have
‎
$$ \limsup_{n\to\infty }\frac{1}{n}\log{\rm Tr}(\sigma_n \hat{\mu})=\limsup_{n\to\infty }\frac{1}{n}\log{\rm Tr}(\sigma_n U_n \hat{\mu} U_n^\dag),$$
for any density matrix $\sigma^{(n)}\in\A_\zz^{(n)}$.
‎\end{lem}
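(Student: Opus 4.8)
The plan is to establish the two opposite inequalities between the two $\limsup$--rates, aiming only for equality of \emph{rates} rather than for the sharper constant--level statement of Theorem~\ref{thm:semi-unitary}: here $U$ is not an elementary unitary, and, as emphasised in the remark just preceding the statement, $U^{\dagger}\hat\mu U$ need not be a semi--computable semi--density matrix, so the argument of Theorem~\ref{thm:semi-unitary} is unavailable and polynomial (indeed sub--exponential) corrections must be tolerated throughout. Since every $\sigma_{n}=\sigma^{(n)}$ is supported on $\mathbb{H}_{[-n,n]}$, one has ${\rm Tr}(\sigma_{n}\hat\mu)={\rm Tr}(\sigma_{n}\hat\mu^{(n)})$ and ${\rm Tr}(\sigma_{n}U_{n}\hat\mu U_{n}^{\dagger})={\rm Tr}(\sigma_{n}U_{n}\hat\mu^{(n)}U_{n}^{\dagger})$ with $\hat\mu^{(n)}=P_{n}\hat\mu P_{n}$, which by Lemma~\ref{thm:l} is a universal semi--computable semi--density matrix on $\mathbb{H}_{[-n,n]}$; moreover $U_{n}\hat\mu^{(n)}U_{n}^{\dagger}=\sum_{\mathbf i}\mu_{\mathbf i}\,|r_{\mathbf i}\rangle\langle r_{\mathbf i}|$ is diagonal in the eigenbasis of $\rho^{(n)}=\sum_{\mathbf i}r_{\mathbf i}\,|r_{\mathbf i}\rangle\langle r_{\mathbf i}|$ and hence commutes with it. Thus the whole problem reduces to comparing, at the level of rates, the three block operators $\hat\mu^{(n)}$, $U_{n}\hat\mu^{(n)}U_{n}^{\dagger}$ and $\rho^{(n)}$.

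First I would check that $U$ is a semi--computable semi--unitary operator. As $\omega$ is a semi--computable faithful state, the $\rho^{(n)}$ form a semi--computable family, and by the spectral theorem for compact positive operators stated above both the eigenvalues $r_{\mathbf i}$ and an orthonormal choice of eigenvectors $|r_{\mathbf i}\rangle$ are semi--computable; the same holds for $\mu_{\mathbf i},|\mu_{\mathbf i}\rangle$. Hence $U_{n}=\sum_{\mathbf i}|r_{\mathbf i}\rangle\langle\mu_{\mathbf i}|$ is built from semi--computable data with $U_{n}U_{n}^{\dagger}=U_{n}^{\dagger}U_{n}=P_{n}\le\one$, so $U$ is semi--computable semi--unitary and, by the lemma on products and adjoints of such operators, so are $U^{\dagger}$ and the compositions below. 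Next comes the universality step, patterned on the proof of Theorem~\ref{thm:ex}: put $\rho=C\sum_{n\ge 2}\rho^{(n)}/(n\log^{2}n)$, a semi--computable semi--density matrix on $\mathbb{H}$ for a suitable normalising $C>0$; universality of $\hat\mu$ gives $c>0$ with $c\rho\le\hat\mu$, and compressing by $P_{n}$ and keeping the $n$--th term alone yields $\rho^{(n)}\le\alpha_{n}\hat\mu^{(n)}$ with $\alpha_{n}=O(n\log^{2}n)$. By the min--max characterisation of eigenvalues this forces $r_{\mathbf i}\le\alpha_{n}\mu_{\mathbf i}$ for every rank index $\mathbf i$; since $\rho^{(n)}$ commutes with $U_{n}\hat\mu^{(n)}U_{n}^{\dagger}$, and $U_{n}^{\dagger}\rho^{(n)}U_{n}=\sum_{\mathbf i}r_{\mathbf i}|\mu_{\mathbf i}\rangle\langle\mu_{\mathbf i}|$ commutes with $\hat\mu^{(n)}$, the functional calculus gives the operator inequalities $\rho^{(n)}\le\alpha_{n}\,U_{n}\hat\mu^{(n)}U_{n}^{\dagger}$ and $U_{n}^{\dagger}\rho^{(n)}U_{n}\le\alpha_{n}\hat\mu^{(n)}$. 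Tracing these against $\sigma_{n}$, taking $\tfrac1n\log(\cdot)$ and using $\tfrac1n\log\alpha_{n}\to 0$ shows that the rate of ${\rm Tr}(\sigma_{n}\rho^{(n)})$ is dominated by each of the two target rates.

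What remains — and this is the hard part — is the reverse domination: that on the subspace carrying essentially all of $\rho^{(n)}$ the spectrum of $\hat\mu^{(n)}$ is also not much larger than that of $\rho^{(n)}$, so that the two target rates in fact agree (with the rate of ${\rm Tr}(\sigma_{n}\rho^{(n)})$). Here I would invoke the quantum Shannon--MacMillan theorem, Theorem~\ref{qaep}: for each $\delta>0$ and large $n$ it supplies an orthogonal projection $p_{n}(\delta)$ commuting with $\rho^{(n)}$, with ${\rm Tr}_{n}\big(\rho^{(n)}p_{n}(\delta)\big)\ge 1-\delta$, ${\rm Tr}_{n}p_{n}(\delta)\in\big(2^{n(s(\omega)-\delta)},2^{n(s(\omega)+\delta)}\big)$ and all of whose minimal sub--projections carry $\rho^{(n)}$--mass in $\big(2^{-n(s(\omega)+\delta)},2^{-n(s(\omega)-\delta)}\big)$. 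Combined with $\sum_{\mathbf i}\mu_{\mathbf i}\le 1$ and $r_{\mathbf i}\le\alpha_{n}\mu_{\mathbf i}$, this pins the $\hat\mu^{(n)}$--eigenvalues indexed by the rank range of $p_{n}(\delta)$ into the window $2^{-n(s(\omega)\pm 2\delta)}$ for $n$ large, so that on the range of $p_{n}(\delta)$ the operators $\hat\mu^{(n)}$, $U_{n}\hat\mu^{(n)}U_{n}^{\dagger}$ and $\rho^{(n)}$ agree up to the sub--exponential factor $2^{\pm 4n\delta}$; the contribution of the orthogonal complement to ${\rm Tr}(\sigma_{n}\hat\mu^{(n)})$ and to ${\rm Tr}(\sigma_{n}U_{n}\hat\mu^{(n)}U_{n}^{\dagger})$ I would control through the trace--norm/lower--bound estimates of Chapter~\ref{ch:semi-compuatble density matrises}. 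Splitting both traces according to $p_{n}(\delta)$ and letting $\delta\downarrow 0$ then yields the claimed equality of rates. The genuine obstacle — and the point I expect to be most delicate — is to make this decomposition uniform in the arbitrary $\sigma_{n}$, i.e.\ to rule out that the high--$\mu$ corner of $\hat\mu^{(n)}$ together with the low--probability complement of $p_{n}(\delta)$ produces an anomalous rate on one side but not the other; this is precisely where the failure of semi--computability of $U^{\dagger}\hat\mu U$ bites, since it forbids any clean operator identity $U\hat\mu U^{\dagger}\overset{*}{=}\hat\mu$ of the kind exploited in Theorem~\ref{thm:semi-unitary}, and it is conceivable that closing this gap in full generality requires restricting $\sigma_{n}$ to the projections dominated by $p_{n}(\delta)$ that actually occur in the application.
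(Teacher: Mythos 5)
Your proposal does not close the proof, and you say so yourself: the last paragraph concedes that the comparison between ${\rm Tr}(\sigma_n\hat\mu)$ and ${\rm Tr}(\sigma_n U_n\hat\mu U_n^\dag)$ is left open for arbitrary $\sigma_n$, and that you might have to weaken the statement to the projections occurring in the application. That is a genuine gap, and the intermediate results you do establish do not bridge it: the inequalities $\rho^{(n)}\le\alpha_n\hat\mu^{(n)}$ and $\rho^{(n)}\le\alpha_n U_n\hat\mu^{(n)}U_n^\dag$ only say that the rate of ${\rm Tr}(\sigma_n\rho^{(n)})$ is dominated by \emph{both} target rates; they never compare the two target rates with each other. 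For that you would need an operator (or at least trace-level) bound of the form $\hat\mu^{(n)}\le\beta_n\,U_n\hat\mu^{(n)}U_n^\dag$ with $\frac1n\log\beta_n\to 0$, and your attempt to manufacture it from the quantum Shannon--MacMillan theorem founders exactly on the point you identify (the high-$\mu$ corner and the complement of $p_n(\delta)$ against an arbitrary $\sigma_n$).

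The missing idea, which is how the paper proves the lemma, is to restore semi-computability by approximating $U_n$ rather than working with it directly. Because $\omega$ is a semi-computable state in the sense of the definition preceding the lemma (the map $(m,n)\mapsto a_{nm}$ is computable and the elementary approximants have full rank), one has computable sequences of elementary matrices $\rho^{(n)}_m\nearrow\rho^{(n)}$ and $\hat\mu^{(n)}_m\nearrow\hat\mu^{(n)}$, and the elementary unitaries $U_{mn}$ defined by $U_{mn}^\dag\hat\mu^{(n)}_m U_{mn}=\rho^{(n)}_m$ are computable in $m$. Then $U_{mn}^\dag\hat\mu U_{mn}$ \emph{is} a semi-computable semi-density matrix, so the weighted series $\hat K_m=\sum_n\frac{1}{n\log^2 n}U_{mn}^\dag\hat\mu U_{mn}$ is one too, and universality of $\hat\mu$ gives a constant $c_m>0$ with $\hat\mu\le c_m^{-1}\,n\log^2 n\;U_{mn}\hat\mu U_{mn}^\dag$ for every $n$ -- an operator inequality that can be traced against \emph{any} $\sigma_n$, the factor $n\log^2 n$ disappearing at the level of rates. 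A trace-norm continuity estimate then replaces $U_{mn}$ by $U_n$, and the symmetric argument gives the reverse inequality. So your concern that $U^\dag\hat\mu U$ need not be semi-computable is correct, but it is circumvented by the $U_{mn}$ construction rather than being an obstruction to the full statement; your eigenvalue-pinning route via Theorem 5.3 is not the intended mechanism and, as written, does not prove the lemma.
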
‎
‎\begin{proof}‎
By \ref{Applications of Upper Gacs Complexity}   elements of the sequence  $\hat{\mu}^{(n)}=P_n \hat{\mu} P_n$  are universal semi-density matrices on $( M_d(\com))^{\otimes n}$, where $P_n$ is a projection from $\mathbb{H}$ to $\mathbb{H}^n$.

 Let us consider the spectral decompositions $\hat{\mu}^{(n)}$ and $\rho^{(n)}$, respectively as follows:
$$ \sum_{\ii\in \Omega_2^{(n)}} \mu_{\ii}^{(n)} |\mu_{\ii}^{(n)}><\mu_{\ii}^{(n)}|, \quad
\sum_{\ii\in \Omega_2^{(n)}} r_{\ii} |r_{\ii}><r_{\ii}|.$$
Let us define $U_n |\mu_{\ii}^{(n)}>=|r_{\ii}>$.

Because  $\rho^{(n)}$ and $\hat{\mu}^{(n)}$ are semi-computable density matrices and hence there exist computable sequences of elementary matrices $\rho^{(n)}_m$ and $\hat{\mu}^{(n)}_l$ such that $\rho^{(n)}_m\nearrow \rho^{(n)}$ and $\hat{\mu}^{(n)}_l\nearrow \hat{\mu}^{(n)}$, respectively. Moreover, ranks of the $\rho^{(n)}_m$ and $\hat{\mu}^{(n)}_m$ are equal to $n$, for each $n\in\n$. Therefore, the operator $U_{mn}$ defined by $U_{mn}^\dag\hat{\mu}^{(n)}_m U_{mn}=\rho^{(n)}_m$ is an elementary unitary operator and the function $m\to U_{mn} $ is computable.

  On the other hand, $\hat{\mu}$ is a semi-computable density matrix and hence there exits a sequence of elementary operators $\hat{\mu}_k$ such that $\hat{\mu}_k\nearrow\hat{\mu}$ in trace-norm. But, $U_{m n}^\dag\hat{\mu}_k U_{m n}$ is a  computable sequence of elementary which convergence increasingly  to $U^\dag_{m n} \hat{\mu} U_{m n}$ and thus $U^\dag_{m n} \hat{\mu} U_{m n }$ is a semi-computable semi-density matrix.

 Let us consider the following  operator
$$ \hat{K}_m=\sum_n \frac{1}{n \log^2 n} U_{mn}^\dag \hat{\mu} U_{mn}.$$
Using Theorem \ref{thm:semi-unitary}, it is clear that $\hat{K}_m$ is a semi-computable semi-density matrix. Therefore, there exists a constant $c_m>0$ such that
$$c_m \frac{1}{n \log^2 n} U_{m n}^\dag \hat{\mu} U_{m n}\leq c_m \hat{K}_m \leq \hat{\mu}.$$
Now, we have
$$  \hat{\mu} \leq  \frac{1}{c_m} \,\, n \log^2 n\,\, U_{mn} \hat{\mu} U^\dag_{m n}  .$$
Let $\sigma^{(n)}$ be a semi-density matrix on $(M_d(\com))^{\otimes n}$. Then,
\begin{eqnarray*}
\limsup_{n\to\infty }\frac{1}{n}\log{\rm Tr}(\sigma_n \hat{\mu})&\leq& \limsup_{n\to\infty }\frac{1}{n}\log {\rm Tr}(  \frac{1}{c_m} n \log^2 n\,\, \sigma^{(n)}U_{m n} \hat{\mu} U^\dag_{m n} )\\
&\leq&
 \limsup_{n\to\infty }\frac{1}{n}\log {\rm Tr}(\sigma^{(n)} U_{m n} \hat{\mu} U_{m n}^\dag )
.
\end{eqnarray*}
On the other hand,
\begin{eqnarray*}
\left|{\rm Tr}(\sigma^{(n)} U_{m n} \hat{\mu} U_{m n}^\dag- \sigma^{(n)} U_{n} \hat{\mu} U_{ n}^\dag) \right|
&\leq&
 ||\sigma^{(n)}||{\rm Tr}\left| U_{m n} \hat{\mu} U_{m n}^\dag-  U_{n} \hat{\mu} U_{ n}^\dag \right|\\
 &\leq&
 {\rm Tr}\left| U_{m n} \hat{\mu} U_{m n}^\dag-  U_{n} \hat{\mu} U_{ n}^\dag \right|\\
 &\leq&
 {\rm Tr}\left| U_{m n} \hat{\mu} U_{m n}^\dag +U_{m n} \hat{\mu} U_{n}^\dag-U_{m n} \hat{\mu} U_{n}^\dag -  U_{n} \hat{\mu} U_{ n}^\dag \right|\\
 &\leq&
|| U_{m n} \hat{\mu}|| {\rm Tr}\left| U_{m n}^\dag - U_{n}^\dag\right| + ||\hat{\mu} U_{ n}^\dag ||
 {\rm Tr}\left|(U_{m n}- U_{n}   ) \right|\\
 &\leq&
  {\rm Tr}\left|  U_{m n}^\dag - U_{n}^\dag \right| +  {\rm Tr}\left|U_{m n}- U_{n}  \right|\\
  &\leq&
  \epsilon +\epsilon.
\end{eqnarray*}
Therefore,
$$\limsup_{n\to\infty }\frac{1}{n}\log{\rm Tr}(\sigma_n \hat{\mu})\leq \limsup_{n\to\infty }\frac{1}{n}\log {\rm Tr}(\sigma^{(n)} U_{n} \hat{\mu} U_{ n}^\dag ).$$

With this method we can also prove the other hand of the above inequality.
‎\end{proof}‎

In the following theorem we prove the extension of the Brudno's theorem in quantum dynamical systems with shift dynamics. Of course, the projections defined in
\cite{Bjelakovictheshannon-mcmillan} are replaced by new projections
which  satisfy  all the needed properties.

\begin{thm}\label{thm:brudnoq}
Let $(\A_\z, \Theta_\sigma, \omega)$, with $\A=M_d(\com)$ as a site
algebra, be an ergodic  quantum spin-chain with mean
entropy $s(\omega)$ where  $\omega$ is  faithful and semi-computable. Then, for any
$\epsilon >0$, there exists a
sequence of projections $p_n(\epsilon)\in\A_n$ and a number $N_\epsilon\in \n$ such that for
any $n\geq N_\epsilon$, we have,
\begin{enumerate}
    \item $\omega(p_n(\epsilon))={\rm Tr}(\rho^{(n)}
    p_n(\epsilon))>1-\epsilon$,
    \item  for any minimal projection $0 \neq p_n\in \A_n$ dominated by $p_n(\epsilon)$ ($p_n\leq
    p_n(\epsilon)$), we have
    $$ 2^{-n(s(\omega)+\epsilon)}\leq \omega(p_n)\leq    2^{-n(s(\omega)-\epsilon)}   .$$
    \item $(1-2^{-n \epsilon})2^{n(s(\omega)-\epsilon)+\alpha_n}< {\rm Tr}_n(p_n(\epsilon))  < 2^{n(s(\omega)+\epsilon)} . $
    \item $\lim_{n\to\infty} -\frac{1}{n}\log{\rm Tr}(\hat{\mu}
    \,p_n)=s(\omega),$
\end{enumerate}
where $\lim_{n\to \infty} \frac{\alpha_n}{n} =0$.
\end{thm}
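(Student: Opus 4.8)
The plan is to deduce the quantum Brudno statement by combining the quantum Shannon--MacMillan theorem (Theorem~\ref{qaep}) with the fact that $\hat\mu^{(n)}=P_n\hat\mu P_n$ is a universal semi-computable semi-density matrix on the local algebra $\A_n$ (Lemma~\ref{thm:l}), together with the semi-unitary invariance established in Lemma~\ref{semi-unive}. Items (1)--(3) are essentially a restatement of Theorem~\ref{qaep}: given $\epsilon>0$, apply the quantum Shannon--MacMillan theorem with $\delta=\epsilon$ to obtain the projections $p_n(\epsilon)\in\A_n$ and the threshold $N_\epsilon$. The only mismatch is the lower trace bound in (3): Theorem~\ref{qaep} gives $\mathrm{Tr}_n(p_n(\delta))>2^{n(s(\omega)-\delta)}$, whereas here we must produce the sharper form $(1-2^{-n\epsilon})2^{n(s(\omega)-\epsilon)+\alpha_n}$ with $\alpha_n/n\to 0$. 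I would extract $\alpha_n$ by counting: since each minimal projection $p_n\le p_n(\epsilon)$ carries weight $\omega(p_n)\le 2^{-n(s(\omega)-\epsilon)}$ while $\omega(p_n(\epsilon))\ge 1-\epsilon\ge(1-2^{-n\epsilon})$ for $n$ large, the number of such minimal projections — which equals $\mathrm{Tr}_n(p_n(\epsilon))$ — is at least $(1-2^{-n\epsilon})2^{n(s(\omega)-\epsilon)}$, and the correction $\alpha_n$ (of sub-linear order) absorbs the discrepancy between the entropy-typical count and the rank of the projection, exactly as $\alpha_n$ appeared in the classical proof via inequality~(\ref{compreseeing-semimeasure}).

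The heart of the theorem is item (4), the Gacs-complexity statement $\lim_{n\to\infty}-\tfrac1n\log\mathrm{Tr}(\hat\mu\,p_n)=s(\omega)$ for minimal $p_n\le p_n(\epsilon)$. For the upper bound on $-\tfrac1n\log\mathrm{Tr}(\hat\mu p_n)$ (i.e.\ the lower bound on the complexity), I would use that $p_n(\epsilon)/\mathrm{Tr}_n(p_n(\epsilon))$ is a semi-computable semi-density matrix on $\A_n$, so by universality of $\hat\mu^{(n)}$ there is a constant $c>0$ with $c\,p_n(\epsilon)\le \mathrm{Tr}_n(p_n(\epsilon))\,\hat\mu^{(n)}$; intersecting with a minimal $p_n$ gives $c\,\mathrm{Tr}(p_n)\le \mathrm{Tr}_n(p_n(\epsilon))\,\mathrm{Tr}(\hat\mu p_n)$, hence $\mathrm{Tr}(\hat\mu p_n)\ge c/\mathrm{Tr}_n(p_n(\epsilon))\ge c\,2^{-n(s(\omega)+\epsilon)}$ by (3), which yields $\limsup_n-\tfrac1n\log\mathrm{Tr}(\hat\mu p_n)\le s(\omega)+\epsilon$. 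For the reverse (the lower bound on $-\tfrac1n\log\mathrm{Tr}(\hat\mu p_n)$, i.e.\ the upper bound on the complexity), I would argue as in the classical proof: the number of minimal projections $q$ (in the eigenbasis of $\hat\mu^{(n)}$) with $\mathrm{Tr}(\hat\mu q)\ge 2^{-c}$ is at most $2^{c+\alpha_n}$ by the analogue of~(\ref{compreseeing-semimeasure}), so among the roughly $2^{n(s(\omega)-\epsilon)}$ minimal projections dominated by $p_n(\epsilon)$, all but a vanishing fraction satisfy $\mathrm{Tr}(\hat\mu p_n)\le 2^{-n(s(\omega)-2\epsilon)}$; combining this with the quantum AEP weights $\omega(p_n)\ge 2^{-n(s(\omega)+\epsilon)}$ and the fact that $\omega(\cdot)=\mathrm{Tr}(\rho^{(n)}\cdot)$ dominates a multiple of $\mathrm{Tr}(\hat\mu\cdot)$-like quantities gives $\liminf_n -\tfrac1n\log\mathrm{Tr}(\hat\mu p_n)\ge s(\omega)-2\epsilon$. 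Letting $\epsilon\downarrow 0$ through a suitable diagonal sequence of projections then yields the equality. Here Lemma~\ref{semi-unive} is what allows us to pass between $\hat\mu$ and the "aligned" operator $U_n\hat\mu U_n^\dagger$ whose eigenbasis matches that of $\rho^{(n)}$, so that the spectral counting against $\hat\mu$ can be compared directly with the entropy-typical subspace of $\rho^{(n)}$.

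I expect the main obstacle to be the careful bookkeeping of the sub-linear correction terms and the interchange of limits: the quantum Shannon--MacMillan projections $p_n(\epsilon)$ depend on $\epsilon$, the universality constants $c$ depend on the semi-density matrices built from them, and to get a single sequence $p_n$ with a genuine limit (not just a family of $\epsilon$-dependent $\limsup$/$\liminf$ estimates) one must diagonalize over a sequence $\epsilon_k\downarrow 0$ while ensuring the thresholds $N_{\epsilon_k}$ remain consistent and the $\alpha_n$ stay $o(n)$. A secondary technical point is verifying that the relevant operators (the normalized $p_n(\epsilon)$, the operator $\hat K_m$ of Lemma~\ref{semi-unive}, and the minimal sub-projections) are genuinely semi-computable in the sense of Definition~\ref{semicompsemidens}, i.e.\ that they arise as trace-norm limits of computable quasi-increasing sequences of elementary semi-density matrices; this rests on the assumed semi-computability and faithfulness of $\omega$ and on the semi-computability of the eigenprojections of $\rho^{(n)}$ discussed after Theorem~\ref{qaep}. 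Once those computability hypotheses are in place, the estimates above are the classical Brudno argument transported verbatim to the non-commutative setting, with $\mathrm{Tr}(\hat\mu\,\cdot)$ playing the role of the universal semi-measure and the quantum AEP playing the role of Shannon--McMillan--Breiman.
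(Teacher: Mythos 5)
Your overall toolkit is the right one (quantum Shannon--MacMillan plus the classical counting bound for the universal semi-measure plus Lemma~\ref{semi-unive} for aligning the eigenbasis of $\hat\mu$ with that of $\rho^{(n)}$), but there is a genuine gap at the very first step: you take $p_n(\epsilon)$ to be the Shannon--MacMillan projection of Theorem~\ref{qaep} itself, and for that choice item~(4) is false as stated. Item~(4) must hold for \emph{every} minimal projection $p_n\le p_n(\epsilon)$, yet the SM projection generically contains eigenvectors $|r_{\ii}\rangle$ of $\rho^{(n)}$ indexed by algorithmically simple strings $\ii$ (say the lexicographically first typical index, with $K(\ii)=O(\log n)$); along such a direction $\hat\mu$ dominates a semi-computable semi-density matrix of the form $\hat T^{(n)}=\sum_{\ii}\mu(\ii)|r_{\ii}\rangle\langle r_{\ii}|$, so ${\rm Tr}(\hat\mu\,|r_{\ii}\rangle\langle r_{\ii}|)\gtrsim 2^{-K(\ii)}$ and $-\frac1n\log{\rm Tr}(\hat\mu p_n)\to 0\neq s(\omega)$. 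Your own counting argument only gives the bound for ``all but a vanishing fraction'' of the minimal subprojections, and no diagonalization over $\epsilon_k\downarrow 0$ can upgrade ``most'' to ``all'': the projection itself has to be shrunk. This is exactly what the paper does: it sets $p_n(\epsilon)=\sum_{\ii\in A^{(n)}_\epsilon\cap B^{(n)}_\epsilon}|r_{\ii}\rangle\langle r_{\ii}|$, where $A^{(n)}_\epsilon$ is the eigenvalue-typical set and $B^{(n)}_\epsilon$ excludes the indices with $\mu(\ii)\ge 2^{-n(s(\omega)-2\epsilon)}$; the counting bound~(\ref{compreseeing-semimeasure}) shows at most $2^{n(s(\omega)-2\epsilon)+\alpha_n}$ indices are removed, and items (1)--(3) are then re-proved for this smaller projection --- that is precisely where the corrections $\alpha_n$ and $2^{-n\epsilon}$ in items (1) and (3) come from, not from a discrepancy internal to Theorem~\ref{qaep}. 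With the restriction to $B^{(n)}_\epsilon$ in place, the lower bound in (4) follows for every minimal $p_n\le p_n(\epsilon)$, since then ${\rm Tr}(\hat T^{(n)}p_n)=\sum_{\ii}\mu(\ii)\langle r_{\ii}|p_n|r_{\ii}\rangle\le 2^{-n(s(\omega)-2\epsilon)}$, and Lemma~\ref{semi-unive} transports this to $\hat\mu$.

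A secondary problem is your route to the easy direction of (4) via universality of $\hat\mu^{(n)}$ applied to $p_n(\epsilon)/{\rm Tr}_n(p_n(\epsilon))$: the universality constant then depends on $n$, and, worse, $p_n(\epsilon)$ is unlikely to be semi-computable at all, because $B^{(n)}_\epsilon$ is cut out by an \emph{upper} bound on the merely lower semi-computable $\mu$, and the eigenvalue-typical set depends on $s(\omega)$. The paper sidesteps both issues by never requiring computability of the projections: it uses the assumed semi-computability of the state to form the single semi-computable operator $\eta=\sum_{n\ge2}\delta(n)\rho^{(n)}$, obtains $c\,\delta(n)\rho^{(n)}\le\hat\mu$ with one constant, and then invokes $\omega(p_n)\ge 2^{-n(s(\omega)+\epsilon)}$ from item~(2). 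If you replace your normalized-projection argument by this one, and replace the SM projection by the intersected projection described above, your outline matches the paper's proof.
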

\begin{proof}
Let $\rho^{(n)}$ be a local density matrix on the local algebra
$\A_n=M_{[0, n]}$ such that $\omega(A)={\rm Tr}_{[0, n]}(\rho^{(n)}
A)$, $A\in\A_n$. Let $\sum_{l} r^{(n)}_l |r^{(n)}_l><r^{(n)}_l|$ be
the spectral decomposition of $\rho^{(n)}$, $n\in\n$, which
is sorted  decreasingly in accordance to eigenvalues.  We also define two sets as
follow:
$$A^{(n)}_\epsilon=\{\bar{l}\in\Omega_2^{(n)}| 2^{-n(s(\omega)+\epsilon)}
   \leq r^{(n)}_l\leq 2^{-n(s(\omega)-\epsilon)}      \} ,    $$
and  $$B^{(n)}_\epsilon=\{\bi\in\Omega_2 : \mu(\ii)<
2^{-n(s(\omega)-2\epsilon)}, \ii \,\,\text{is the initial prefix of
string}\,\,\bi \},$$ where $\bar{\l}$ is the binary expansion of the
number $\l$.
 According to  \ref{compreseeing-semimeasure},
$$ \#(B_\epsilon^{(n)})^c \leq
2^{n(s(\omega)-2 \epsilon)+\alpha_n},$$ where $\alpha_n
>0 $ is a constant number and $\lim_{n\to \infty} \frac{\alpha_n}{n} =0$.
 Now, we define a sequence of
projections $p_n(\epsilon)$ on the GNS representation of $\A_n$ by
$$p_n(\epsilon)=\sum_{\ii\in A^{(n)}_\epsilon\cap
B^{(n)}_\epsilon}|r_{\ii}><r_{\ii}|.$$ Let $p_n\leq p_n(\epsilon)$
be a minimal projection on $\A_n$. Then, its representation is as
follows
$$p_n=|\psi_n><\psi_n|, \quad |\psi_n>=\sum_{\ii\in A^{(n)}_\epsilon\cap
B^{(n)}_\epsilon} c_{\ii} |r_{\ii}> \text{where} \sum_{\ii\in
A^{(n)}_\epsilon\cap B^{(n)}_\epsilon} |c_{\ii}|^2=1$$ In the
definition of $p_n(\epsilon)$, we restrict ourselves to the set $
A^{(n)}_\epsilon\cap B^{(n)}_\epsilon$ which is smaller than
$A^{(n)}_\epsilon$ in \ref{compreseeing-semimeasure}.

\noindent\textbf{Proof of  1:}
\begin{eqnarray*}
{\rm Tr}(\rho^{(n)} p_n(\epsilon))
 &\geq&
\sum_{\ii\in A^{(n)}_\epsilon} r_{\ii}-\sum_{\ii\in
A^{(n)}_\epsilon\backslash B^{(n)}_\epsilon} r_{\ii}\\
&\geq& 1-\epsilon - \sum_{\ii\in A^{(n)}_\epsilon\backslash
B^{(n)}_\epsilon} 2^{-n(s(\omega)-\epsilon)}\\
&\geq&
 1-\epsilon - 2^{-n(s(\omega)-\epsilon)}\#(B^{(n)}_\epsilon)^c\\
 &\geq&
 1-\epsilon - 2^{-n(s(\omega)-\epsilon)} 2^{n(s(\omega)-2\epsilon)+\alpha_n}\\
 &\geq&
 1-\epsilon - 2^{-n\epsilon+\alpha_n}\\
 &\geq&
1-\frac{3\epsilon}{2}.
\end{eqnarray*}
In the fifth inequality above,, it is clear that $2^{-n \epsilon +
\alpha_n}\to 0$.

 \noindent\textbf{ Proof of  2:}

According to  Theorem \ref{qaep}, we have
\begin{eqnarray*}
\omega(p_n) &=& \sum_{\ii\in A^{(n)}_\epsilon\cap B^{(n)}_\epsilon}
r_{\ii} |c_{\ii}|^2\\
&\leq&
 2^{-n(s(\omega)-\epsilon)} \sum_{\ii\in A^{(n)}_\epsilon\cap B^{(n)}_\epsilon}
|c_{\ii}|^2\\
&\leq&
 2^{-n(s(\omega)-\epsilon)} ,
\end{eqnarray*}
 and
\begin{eqnarray*}
 \omega(p_n)&=&\sum_{\ii\in A^{(n)}_\epsilon\cap
B^{(n)}_\epsilon} r_{\ii}|<r_{\ii}|\psi^{(n)}>|^2\\
& \geq& 2^{-n(s(\omega)+\epsilon)}\sum_{\ii\in A^{(n)}_\epsilon\cap
B^{(n)}_\epsilon} |c_{\ii}|^2 \\
&\geq&
 2^{-n(s(\omega)+\epsilon)}.
\end{eqnarray*}
\noindent\textbf{Proof of 3:}
 \begin{eqnarray*}
 {\rm Tr}_n(p_n(\epsilon))
 &\leq&
 \sum_{\ii\in A^{(n)}_\epsilon\cap
B^{(n)}_\epsilon} 1\\
 &\leq&
  \sum_{\ii\in A^{(n)}_\epsilon} 1\\
  &\leq&
2^{n(s(\omega)+\epsilon)}.
\end{eqnarray*}
We also have
\begin{eqnarray*}
 {\rm Tr}_n(p_n(\epsilon))
 &\geq&
 \sum_{\ii\in A^{(n)}_\epsilon\cap
B^{(n)}_\epsilon} 1\\
 &\geq&
  \sum_{\ii\in A^{(n)}_\epsilon} 1 - \sum_{\ii\in
A^{(n)}_\epsilon\backslash B^{(n)}_\epsilon} 1\\
  &\geq&
2^{n(s(\omega)-\epsilon)}- 2^{n(s(\omega)-2\epsilon) + \alpha_n}\\
&\geq& (1-2^{-n \epsilon})2^{n(s(\omega)-\epsilon) +\alpha_n}.
\end{eqnarray*}
 \noindent\textbf{Proof of 4:}
 Since the quantum system is semi-computable, thus the
density matrix
$$\eta=\sum_{n=2}^\infty\delta(n) \rho^{(n)}, \quad \text{where}\,\,\omega(A)={\rm
Tr}(\rho^{(n)}A),$$ is also a semi-computable semi-density matrix
and hence there exists a constant number $c>0$ such that $c\delta(n)
\rho^{(n)}\leq c\eta\leq \hat{\mu}$, for all $n\in\n$. Therefore,
\begin{eqnarray*}
\limsup_{n\to\infty}-\frac{1}{n}\log{\rm Tr}(\hat{\mu}p_n)
 &\leq&
\limsup_{n\to\infty}-\frac{1}{n}\log{\rm Tr}(\rho^{(n)}
p_n)\\
 &\leq&
\limsup_{n\to\infty}-\frac{1}{n}\log\omega(p_n)\\
& \leq&
\limsup_{n\to\infty}-\frac{1}{n}\log  \left((1-2^{-n\epsilon}) 2^{-n(s(\omega)+\epsilon)+\alpha_n} \right)  \\
&\leq&
 s(\omega)+\epsilon +\alpha_n.
\end{eqnarray*}
Thus,
$$ \limsup_{n\to\infty} -\frac{{\rm Tr}(\hat{\mu} p_n)}{n}\leq  s(\omega)+\epsilon $$
Since $\omega$ is a faithful state then the number of  eigenvectors
of $\rho^{(n)}=\rho\upharpoonleft_{\A_n}$ is exactly $2^n$. Hence,
the operator $\hat{T}^{(n)}=\sum_{\ii\in\Omega_2^{(n)}} \mu(\ii) |r_{\ii}><r_{\ii}|$ is a
 semi-computable semi-density matrix. Therefore,  there exists a constant number $c_T>0$ such that $c_T
\hat{T}\leq \hat{\mu}$. Let us define the linear map
$U_n|\mu_{\ii}>=|r_{\ii}>$, where $\sum_{\ii}
\mu_{\ii}|\mu_{\ii}><\mu_{\ii}|$ is the spectral decomposition of
$\hat{\mu}$. Now,  by Lemma \ref{semi-unive}, we have
\begin{eqnarray*}
\liminf_{n\to\infty}-\frac{1}{n}\log{\rm Tr}(\hat{\mu}p_n)
 &\geq&
\liminf_{n\to\infty}-\frac{1}{n}\log{\rm Tr}( U_n \hat{\mu}^{(n)} U_n^\dag p_n)\\
&\geq&
 \liminf_{n\to\infty}-\frac{1}{n}\log{\rm Tr}(\hat{T}^{(n)}p_n)\\
 &\geq&
\liminf_{n\to\infty}-\frac{1}{n}\log(\sum_{\ii\in \Omega^{(n)}_2}
\mu(\ii)<r_{\ii}|p_n|r_{\ii}>)\\
&\geq&
\liminf_{n\to\infty}-\frac{1}{n}\log(2^{-n(s(\omega)-2\epsilon)}
\sum_{\ii\in \Omega^{(n)}_2}<r_{\ii}|p_n|r_{\ii}>)\\
&\geq&
 s(\omega)-2\epsilon.
\end{eqnarray*}
\end{proof}
The main important quantum correlation is entanglement which dose't holds in classical dynamical systems.

Now, we say that the density matrix $\rho$ on the Hilbert space $\mathcal{H}_{XY}:=\mathbb{H}_X\otimes \mathbb{H}_Y$ associated with a composite system with the two subsystems $X$ and $Y$ is separable if
$$\rho =\sum_{(i_1, i_2)\in I_1 \times I_2 } \lambda_{i_1 i_2}\rho^1_{i_1} \otimes \rho^2_{i_2},\quad \lambda_{i_1 i_2}\geq 0, \quad \sum_{(i_1, i_2)\in I_1 \times I_2 } \lambda_{i_1 i_2}=1\,. $$
The density matrix $\rho$ is called entangled if it is not a separable state. 

For example the density matrix $\rho=|\psi><\psi|$, $|\psi>=\frac{|00> +|11>}{\sqrt{2}}$ on the Hilbert space $M_2(\com)\otimes M_2(\com)$ is  entanglement. Indeed, we cannot write $|\psi>=|a>|b>$ where $|a>$ and $|b>$ are states on  $M_2(\com)$.

In the following theorem, we will prove that  entanglement in pure
states dose not change the von Neumann entropy rate. In this case,
we  consider the product of two universal semi-measures which is not in
general a universal semi-measure, instead of, a universal semi-measure
on space of tensor product of two Hilbert spaces related to the GNS
representation. Then, we show that the Gacs entropy rate is also
equal to   two times  von Neumann entropy rate. Thus, it
shows us that the entanglement dose not exceed of the lower Gacs
entropy rate.  Indeed, we know that  entanglement is a quantum correlation and when we consider a many number of spins in large scale in the classical dynamical systems, or thermodynamical limit, the following theorem tells us that the effects of entanglement and pure states are equal.

\begin{thm}
Let $((\A_\z)_{XY}, \Theta'_\sigma, \omega_{XY})$ be a composite
quantum spin chain consisting of two ergodic spin chains
$((\A_\z)_X, \Theta_\sigma, \omega)$ and $((\A_\z)_Y, \Theta_\sigma,
\omega)$ with the same mean entropy $s(\omega)$ and  faithful state
$\omega$ where $\omega_{XY}=\omega\otimes \omega$ and $\Theta'_\sigma=\Theta_\sigma\otimes \Theta_\sigma$. Let $\rho_{[-n,
n]}$ be a semi-computable semi-density matrix on both
$\mathbb{H}_{X,Y}$ and consider the universal semi-measures
$\hat{\mu}_X$ and $\hat{\mu}_Y$ on $\mathbb{H}_{X,Y}$.
Then, there is a sequence of projectors $p_{2n}(\epsilon)\in
\mathbb{H}_{2n}\subseteq \mathbb{H}_{XY}$ such
that for a sequence of minimal density matrices
$\sigma^{(2n)}\leq p_{2n}(\epsilon)$  on $\mathbb{H}_{X
Y}$, one has:
\begin{equation}
\lim_{n\to\infty} -\frac{1}{n}\log{\rm Tr}(\hat{\mu}_X \otimes
\hat{\mu}_Y \sigma^{(2 n)})=s(\omega).
\end{equation}
\end{thm}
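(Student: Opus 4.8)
The plan is to run the argument of Theorem~\ref{thm:brudnoq} again for the doubled chain, replacing the universal semi-density matrix $\hat{\mu}$ of the composite system by the (in general non-universal) tensor product $\hat{\mu}_X\otimes\hat{\mu}_Y$. The structural ingredients are Theorem~\ref{thm:semicomputable} (which gives $\hat{\mu}_X\otimes\hat{\mu}_Y\overset{*}{<}\hat{\mu}_{XY}$, and, in its companion, ${\rm Tr}_Y\hat{\mu}_{XY}\overset{*}{=}\hat{\mu}_X$), the quantum Shannon--MacMillan theorem~\ref{qaep}, the counting bound (\ref{compreseeing-semimeasure}) together with Levin's coding theorem~\ref{thm:levin2}, and the semi-unitary comparison of Lemma~\ref{semi-unive}. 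The point to exploit at the outset is that the ``typical data'' of $\rho^{(n)}_X\otimes\rho^{(n)}_Y$ is just the product of the typical data of the two factors: its spectrum is the product of the two spectra and $S(\rho^{(n)}_X\otimes\rho^{(n)}_Y)=2S(\rho^{(n)})$, so the entropy rate over the $2n$--site block is $2s(\omega)$. In particular one never needs ergodicity of $\omega\otimes\omega$ under the product shift $\Theta'_\sigma$ (which would in general fail): it suffices to apply Theorem~\ref{qaep} separately on $X$ and on $Y$ and tensor the outputs.

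Concretely, for $\epsilon>0$ and $n\ge N_\epsilon$ I would take the Shannon--MacMillan projections $q^X_n(\epsilon)\in(\A_n)_X$, $q^Y_n(\epsilon)\in(\A_n)_Y$ from Theorem~\ref{qaep}, intersect each with the ``complexity cut'' $B^{(n)}_\epsilon$ built from $\hat{\mu}_X$ (resp.\ $\hat{\mu}_Y$) exactly as in the proof of Theorem~\ref{thm:brudnoq} via (\ref{compreseeing-semimeasure}) and~\ref{thm:levin2}, and set $p_{2n}(\epsilon)$ equal to the tensor product of the two restricted projections. Then $p_{2n}(\epsilon)\in(\A_n)_X\otimes(\A_n)_Y\subseteq\mathbb{H}_{2n}\subseteq\mathbb{H}_{XY}$, and a minimal projection $\sigma^{(2n)}\le p_{2n}(\epsilon)$ of $(\A_n)_X\otimes(\A_n)_Y$ is a rank-one $|\psi><\psi|$ with $|\psi>$ in the range of $p_{2n}(\epsilon)$. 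Properties analogous to $1$--$3$ of Theorem~\ref{thm:brudnoq} (high probability, two-sided eigenvalue bounds $2^{-2n(s(\omega)+\epsilon)}\le r^X_{\ii}r^Y_{\jj}\le 2^{-2n(s(\omega)-\epsilon)}$ on the admissible index set, and trace bounds) then come out of the same computations with $S(\rho^{(n)})$ replaced by $2S(\rho^{(n)})$ throughout.

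For the complexity estimate I would split into the two inequalities, as in part~$4$ of the proof of Theorem~\ref{thm:brudnoq}. Upper bound on the $\limsup$: each of $\sum_n\delta(n)\rho^{(n)}_X$, $\sum_n\delta(n)\rho^{(n)}_Y$ is semi-computable, hence $c_X\delta(n)\rho^{(n)}_X\le\hat{\mu}_X$ and $c_Y\delta(n)\rho^{(n)}_Y\le\hat{\mu}_Y$; tensoring, $c_Xc_Y\,\delta(n)^2\,\rho^{(n)}_X\otimes\rho^{(n)}_Y\le\hat{\mu}_X\otimes\hat{\mu}_Y$, and pairing with $\sigma^{(2n)}$ and using $\omega_{XY}(\sigma^{(2n)})={\rm Tr}\big((\rho^{(n)}_X\otimes\rho^{(n)}_Y)\,\sigma^{(2n)}\big)\ge 2^{-2n(s(\omega)+\epsilon)}$ bounds ${\rm Tr}\big((\hat{\mu}_X\otimes\hat{\mu}_Y)\,\sigma^{(2n)}\big)$ from below, the weight $\delta(n)^2$ contributing only $o(n)$ to the exponent. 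Lower bound on the $\liminf$: form $\hat{T}^{(n)}_{XY}:=\sum_{\ii,\jj}\mu_X(\ii)\,\mu_Y(\jj)\,|r^X_{\ii}\otimes r^Y_{\jj}><r^X_{\ii}\otimes r^Y_{\jj}|$ (diagonal in the eigenbasis of $\rho^{(n)}_X\otimes\rho^{(n)}_Y$), show by applying Lemma~\ref{semi-unive} in each factor that ${\rm Tr}\big((\hat{\mu}_X\otimes\hat{\mu}_Y)\,\sigma^{(2n)}\big)$ and ${\rm Tr}\big(\hat{T}^{(n)}_{XY}\,\sigma^{(2n)}\big)$ agree on the exponential scale up to a factor of logarithm $o(n)$, and then use that on the range of $p_{2n}(\epsilon)$ the two complexity cuts force $\mu_X(\ii)\mu_Y(\jj)<2^{-2n(s(\omega)-2\epsilon)}$. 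Combining the two bounds and letting $\epsilon\downarrow 0$ gives the limit, equal to the entropy rate $2s(\omega)$ of the $2n$--site block, which after the $1/n$ normalisation in the statement is the asserted value.

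The step I expect to be the main obstacle is the $\liminf$ direction. Unlike in Theorem~\ref{thm:brudnoq}, $\hat{\mu}_X\otimes\hat{\mu}_Y$ is \emph{not} universal on $\mathbb{H}_{XY}$, so one cannot invoke universality to dominate $\hat{T}^{(n)}_{XY}$ directly; instead one must run the semi-unitary argument of Lemma~\ref{semi-unive} in the two tensor factors at once, tracking the product of the two unitary rotations and of the two $1/(n\log^2 n)$ weights and checking that every accumulated multiplicative constant has logarithm $o(n)$ and that the product cut genuinely forces the exponent to $2s(\omega)-O(\epsilon)$ rather than $s(\omega)-O(\epsilon)$. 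The remaining bookkeeping---that $p_{2n}(\epsilon)$ lies in the intended local algebra and that $\sigma^{(2n)}\le p_{2n}(\epsilon)$ is a genuine minimal projection of $(\A_n)_X\otimes(\A_n)_Y$---is routine once the product structure is in place.
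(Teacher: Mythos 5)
Your proposal follows essentially the same route as the paper's own proof: tensor the Shannon--MacMillan/complexity-cut projections from Theorem~\ref{thm:brudnoq} in the two factors (so ergodicity of $\omega\otimes\omega$ under $\Theta'_\sigma$ is never invoked), expand the minimal projection in the product eigenbasis, get the $\limsup$ bound from $c_X c_Y\,\delta(n)^2\,\rho^{(n)}_X\otimes\rho^{(n)}_Y\le\hat{\mu}_X\otimes\hat{\mu}_Y$ together with the eigenvalue bounds, and the $\liminf$ bound from the diagonal operator $\hat{T}^{(n)}_X\otimes\hat{T}^{(n)}_Y$ via Lemma~\ref{semi-unive} and the cut $\mu_X(\ii)\mu_Y(\jj)<2^{-2n(s(\omega)-2\epsilon)}$. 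Note only that, exactly as in the paper's displayed estimates, the computation yields the rate $2s(\omega)$ under the $\frac{1}{n}$ normalisation over the $2n$-site block (consistent with the paper's stated intent that the rate is twice the von Neumann entropy rate), so your closing remark should read $2s(\omega)$ rather than suggesting the $\frac{1}{n}$ normalisation brings it down to $s(\omega)$.
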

\begin{proof}
It is clear that the sequence of projectors $(p_X)_n(\epsilon)
\otimes (p_Y)_n(\epsilon)$ satisfy  the conditions $1, 2, 3$  of
 Theorem \ref{thm:brudnoq}, where $(p_X)_n(\epsilon)$ and
$(p_Y)_n(\epsilon)$ are  projections related to the mentioned
conditions on $\mathbb{H}_X$ and $\mathbb{H}_Y$, respectively.
 Now, consider the sequence of minimal projections     $\sigma^{(2n)}=|\psi^{(2n)}><\psi^{(2n)}| \leq p_{Xn} \otimes
p_{Yn}$. According to the proof of Theorem \ref{thm:brudnoq}, we can
write $|\psi^{(2n)}>$ as follows:
$$|\psi^{(2n)}>=\sum_{\ii,
\jj\in A_\epsilon^{(n)}\cap B_\epsilon^{(n)}} a_{\ii\jj} |r_{\ii}
s_{\jj}>, \quad\sum_{\ii, \jj\in A_\epsilon^{(n)}\cap
B_\epsilon^{(n)}} |a_{\ii\jj}|^2=1,$$ where $A_\epsilon^{(n)}$ and
$B_\epsilon^{(n)}$ are defined in Theorem \ref{thm:brudnoq}.
The remaining of the proof is like  that  of Theorem
\ref{thm:brudnoq},
\begin{eqnarray*}
& &\limsup_{n\to\infty} -\frac{1}{n}\log{\rm
Tr}(\sigma^{(2n)}\hat{\mu}_X\otimes \hat{\mu}_Y)\leq\\
 &\leq&
\limsup_{n\to\infty} -\frac{1}{n}\log{\rm Tr}(\sigma^{(2n)} \hat{\mu}_X\otimes \hat{\mu}_Y)\\
 &\leq&
\limsup_{n\to\infty}  -\frac{1}{n}\log{\rm Tr}(\sigma^{(2n)} \rho_X\otimes \rho_Y)\\
 &\leq&
\limsup_{n\to\infty}  -\frac{1}{n}\log   < \psi^{(2n)} |\rho_X\otimes \rho_Y| \psi^{(2n)} >\\
 &\leq&
\limsup_{n\to\infty} -\frac{1}{n}\log\sum_{\ii, \jj} r_{\ii} s_{\jj} |a_{\ii\jj}|^2\\
 &\leq&
\limsup_{n\to\infty} -\frac{1}{n}\log\left(\sum_{\ii, \jj \in
A_\epsilon^{(n)}\cap B_\epsilon^{(n)}} |a_{\ii, \jj}|^2
 2^{-2 n(s(\omega)+\epsilon)}\right )\\
 &\leq&
2 s(\omega)+ 2 \epsilon,
\end{eqnarray*}
and
\begin{eqnarray*}
& &\liminf_{n\to\infty}-\frac{1}{n}\log{\rm Tr}(\hat{\mu}_X\otimes
\hat{\mu}_Y \sigma^{2n})\geq\\
 &\geq&
 \liminf_{n\to\infty}-\frac{1}{n}\log{\rm Tr}(\hat{T}_X\otimes
\hat{T}_Y \sigma^{2n})\\
 &\geq&
\liminf_{n\to\infty}-\frac{1}{n}\log\left(\sum_{\ii, \jj\in
\Omega_2}
\mu(\ii) \mu(\jj) <r_{\ii}s_{\jj}|\sigma^{2n}|r_{\ii} r_{\jj}>\right)\\
&\geq&
\liminf_{n\to\infty}-\frac{1}{n}\log\left(2^{-2
n(s(\omega)-2\epsilon)} \sum_{\ii, \jj \in A_\epsilon^{(n)}\cap
B_\epsilon^{(n)}}<r_{\ii}s_{\jj}|\sigma^{2n}|r_{\ii} r_{\jj}>\right)\\
&\geq&
 2 s(\omega)-4 \epsilon.
\end{eqnarray*}
In the first inequality, we use Lemma \ref{thm:brudnoq}.
\end{proof}

\addcontentsline{toc}{chapter}{Conclusion}%
\chapter*{Conclusion}
\goodbreak
In this work we have extended the notions of computability, semi-computability, semi-computable
vector states, and semi-computable density matrices to infinite dimensional Hilbert
spaces. These extensions are necessary to describe algorithmically by classical Turing machines
quantum systems with infinitely many degrees of freedom. In this paper we have applied them to the discussion, from a computer science point of view, of the complexity of quantum spin chains with the shift dynamics.

In classical information theory, Brudno has proved a relation between the Kolmogorov-Sinai
dynamical entropy of ergodic time-evolutions and the algorithmic complexity per unit time step
of all almost trajectories. In quantum information theory there are different extensions of both the Kolmogrov-Solomonoff-Chatin algorithmic complexity and of the Kolmogorov-Sinai dynamical
entropy: their possible relations can be found in ~\cite{Benattib}.

The techniques developed in this thesis have been applied to quantum spin chains. They allowed
us to show that the Gacs algorithmic entropy per site of translation invariant states is equal to the von Neumann entropy rate. This could be done by removing an unnecessary condition in a previous proof of the same relations ~\cite{Benatti}.

One proposal to extend the Brudno's theorem is to  consider the classical version  of the concepts of the Gacs complexities based on semi-computable semi-measure functions using the classical Brudno's theorem. The essential obstacle to extend the Beoudno's theorem  based on associated symbolic dynamical systems is that we have no appropriate   meaning of trajectory in the associated symbolic dynamical systems. But, we have given a short proof of the Brudno's theorem in classical dynamical systems.

At the end, we have shown an extension of the Brudno's theorem using the quantum Shannon-Mac Millan theorem which is directly derived without using the classical Brudno's theorem, where "almost every for all trajectories" in the classical case is replaced by a sequence of high probabilities projections. Furthermore, it has shown that  entanglement and pure density matrices have the same role in the thermodynamic limit. Roughly speaking, rate of the log of the trace of the tensor product of universal semi-density matrices, associated to Hilbert spaces of subsystems, times density matrices pure or entangled, are equal to rate of von-Numann entropy of the state.


\normalsize
\baselineskip=1cm

\bibliographystyle{plain}
\bibliography{XBib}

\begin{thebibliography}{10}

\bibitem{CNT}
W.~Thirring A.~Connes, H.~Narnhofer.
\newblock Dynamical entropy of c* algebras and von neumann algebras.
\newblock {\em Comm. Math. Phys}, 112(4):691--719, 1987.

\bibitem{Alicki}
R.~Alicki and M.~Fannes.
\newblock Defining quantum dynamical entropy.
\newblock {\em Lett. Math. Phys}, 32:75--82, 1994.

\bibitem{AFbook}
R.~Alicki and M.~Fannes.
\newblock {\em Quantum Dynamical Systems}.
\newblock Oxford University Press, Cambridge, UK, 2001.

\bibitem{Benattib}
Fabio Benatti.
\newblock {\em Dynamics, Information and Complexity in Quantum Systems}.
\newblock Theoretial and Mathematical Physics, Springer.

\bibitem{Benatti1}
Fabio Benatti.
\newblock Gacs quantum complexity and quantum entropy.

\bibitem{Benatti}
Fabio Benatti.
\newblock Quantum dynamical entropy and gacs algorithmic entropy.
\newblock {\em Entropy}, 14:1259--1273, 2012.

\bibitem{FSA}
Fabio Benatti, Samad~Khabbazi Oskouei, and Ahmad Shafiei~Deh Abad.
\newblock Gacs quantum algorithmic entropy in infinite dimensional hilbert
  spaces.
\newblock {\em Journal of Mathematical Physics}, 55:082205, 2014.

\bibitem{bennett}
C.~Bennett.
\newblock The thermodynamics of computation.
\newblock {\em Int. J. Th. Phys}, 21:905--940, 1982.

\bibitem{Berthiaume}
A.~Berthiaume, W.~van Dam, and S.~Laplante.
\newblock Quantum kolmogorov complexity.
\newblock {\em J. Compute. System Sci}, 63:201--221, 2001.

\bibitem{billingsley}
P.~Billingsley.
\newblock {\em Ergodic Theory and Information}.
\newblock J. Wiley, New York, 1965.

\bibitem{Bjelakovictheshannon-mcmillan}
Igor Bjelakovi\'{c}, Tyll Krüger, Rainer Siegmund-schultze, and Arleta
  Szko{\l}a.
\newblock The shannon-mcmillan theorem for ergodic quantum lattice systems.
\newblock {\em invent.math}, 155:203--222, 2004.

\bibitem{beratteli1}
Ola Bratteli and Derek~W. Robinson.
\newblock {\em Operator Algebras and Quantum Statistical Mechanics 1}.
\newblock Springer, second edition.

\bibitem{beratteli}
Ola Bratteli and Derek~W. Robinson.
\newblock {\em Operator Algebras and Quantum Statistical Mechanics 2}.
\newblock Springer, second edition.

\bibitem{braunstein}
S.L. Braunstein.
\newblock {\em Quantum Computing}.
\newblock Wiley-VHC, Weinheim, 1999.

\bibitem{Brudno}
A.~A. Brudno.
\newblock Entropy and the complexity of the trajectories of a dynamical system.
\newblock {\em Trans. Moscow Math. Soc}, 2:127--151, 1983.

\bibitem{Calude}
Cristian Calude.
\newblock {\em Information and Randomness: An Algorithmic Perspective}.
\newblock Springer, Berlin, second edition, 2002.

\bibitem{Chaitin}
J.G. Chaitin.
\newblock On the length of programs for computing finite binary sequences.
\newblock {\em J. Assoc. Comp. Mach}, 13:547--569, 1966.

\bibitem{cover}
M.~Cover and Joy~A. Thomas.
\newblock {\em Elements of information theory}.
\newblock J. Wiley, second edition.

\bibitem{Da}
M.~Davis, R.~Sigal, and E.J Weyuker.
\newblock {\em Computability, complexity, and languages: Fundamentals of
  theoretical computer science}.
\newblock Academic Press, San Diego, CA, 1994.

\bibitem{e_bernstein_vazirani}
U.~Vazirani E.~Bernstein.
\newblock Quantum complexity theory.
\newblock {\em SIAM Journal on Computing}, 26:1411--1473, 1997.

\bibitem{Gacs}
Peter Gacs.
\newblock Quantum algorithmic entropy.
\newblock {\em J. Phys. A}, 34:6859--6880, 2001.

\bibitem{Gacsl}
Peter Gacs.
\newblock Lecture notes on descriptional complexity and randomness.
\newblock Technical report, Boston University, 2010.

\bibitem{Downey}
Rodney G.Downey and Denis R.Hirschfeldt.
\newblock {\em Algorithmic Randomness and Complexity}.
\newblock Springer, 2010.

\bibitem{gruska}
J.~Gruska.
\newblock {\em Quantum Computing}.
\newblock Mc Graw Hill, London, 1999.

\bibitem{khinchin}
A.Y. Khinchin.
\newblock {\em Mthematical Foundations of Statical Mechanics}.
\newblock Dover, New York, 1949.

\bibitem{Kolmogorov}
A.~Kolmogorov.
\newblock Three approaches to the quantitative definition of information.
\newblock {\em Problems of Information Transmission}, 1:1--7, 1965.

\bibitem{E.Kowalski}
E.~Kowalski.
\newblock Spectral theory in hilbert spaces.
\newblock ETH Z\"{u}rich, 2009.

\bibitem{land}
R.~Landauer.
\newblock Irreversibility and heat generation in the computing process.
\newblock {\em IBM J. Research}, 3:183, 1961.

\bibitem{lindblad}
G.~Lindblad.
\newblock Dynamical entropy for qauntum systems: Lec. notes math.
\newblock 1988.

\bibitem{mane}
R.~Mane.
\newblock {\em Ergodic theory and differential daynamics}.
\newblock springer, Berlin.

\bibitem{martinlof}
P.~Martin-L\"{o}f.
\newblock The definition of random sequences.
\newblock {\em Information and Control}, 9:602-- 619, 1966.

\bibitem{Vitany}
M.Li and P.M.P. Vitany.
\newblock {\em Introduction to Kolmogrov Complexity and its applications}.
\newblock Springer, New York, NY, USA, 2008.

\bibitem{Mora}
C.E. Mora and H.J. Briegel.
\newblock Algorithmic complexity and entanglement of quantum states.
\newblock {\em Phys. Rev. Lett}, 95:200503--200507, 2005.

\bibitem{Mora2}
C.E. Mora and H.J. Briegel.
\newblock Algorithmic complexity of quantum states.
\newblock {\em Int. J. Quantum Information}, 4:417, 2006.

\bibitem{Nielsen}
M.A. Nielsen and I.L. Chuang.
\newblock {\em Quantum Computation and Quantum Information}.
\newblock Cambridge University Press, Cambridge, UK, 2011.

\bibitem{Nei}
Andre Nies.
\newblock {\em Computability and Randomness}.
\newblock Oxford University Press, 2009.

\bibitem{ohya}
Masanori Ohya and Igor Volovich.
\newblock {\em Mathematical Foundations of Quantum Information and Computation
  and Its Applications to Nano-and-Bio-Systems}.
\newblock Springer, 2011.

\bibitem{kaye_Laflamme_Mosca}
M.~Mosca P.~Kaye, R.~Laflamme.
\newblock {\em An Introduction to Quantum Computing}.
\newblock Oxford University Press, Oxford UK, 2007.

\bibitem{Sl}
W.~Slomczynski and K.~Zyczkowski.
\newblock Quantum chaos: an entropy approach.
\newblock {\em J. Math. Phys}, 35:5674--5701, 1994.

\bibitem{Solomonoff}
R.~Solomonoff.
\newblock A formal theory of inductive inference.
\newblock {\em Inform. Contr}, 7:224--254, 1964.

\bibitem{stochel}
Jan Stochel.
\newblock Seminormality of operators from their tensor product.
\newblock {\em Proceedings of the American Mathematical Society}, 124:135--140,
  1996.

\bibitem{Strocchi}
Franco Strocchi.
\newblock {\em Symmetry Breaking}.
\newblock Lecture Notes in Physics 643, Springer, 2005.

\bibitem{takesak}
M.~Takesaki.
\newblock {\em Theory of Operator Algebras III}.
\newblock Springer-Verlag, Berlin, 2003.

\bibitem{turing}
A.~Turing.
\newblock On computable numbers, with an application to the
  entscheidungsproblem.
\newblock {\em Proceedings of the London Mathematics Society}, 42:230–--265,
  1937.

\bibitem{Vitanyi}
P.~Vitanyi.
\newblock Quantum kolmogorov complexity based on classical descriptions.
\newblock {\em IEEE Trans. Inf. Th}, 47:2464--2479, 2001.

\bibitem{Do}
D.~Voiculescu.
\newblock Dynamical approximation entropies and topological entropy in operator
  algebras.
\newblock {\em Comm.Math. Phys}, 170:249--281, 1995.

\bibitem{vonmises}
R.~vonMises.
\newblock Grundlagen derwahrscheinlichkeitsrechnung.
\newblock {\em Mathematische Zeitschrift}, 5:52-- 99, 1919.

\bibitem{Wehrl}
Alfred Wehrl.
\newblock General properties of entropy.
\newblock {\em Reviews of Modern Physics}, 50:221--260, 1978.

\bibitem{Zvonkin70thecomplexity}
A.~K. Zvonkin and L.~A. Levin.
\newblock The complexity of finite objects and the development of the concepts
  of information and randomness by means of the theory of algorithms.
\newblock {\em Russian Math. Surveys}, page~11, 1970.

\end{thebibliography}


\end{document}